\title{
    Summaries for Context-Free Games
}
\author[1]{Luk\'a\v s Hol\'ik}
\author[2,3]{Roland Meyer}
\author[2]{Sebastian Muskalla}
\affil[1]{
    Brno University of Technology,
    \texttt{holik@fit.vutbr.cz}
}
\affil[2]
{
    TU Braunschweig,
    \texttt{\{meyer,muskalla\}@cs.uni-kl.de}
}
\affil[3]
{
    Aalto University
}
\titlerunning{Summaries for Context-Free Games}
\authorrunning{L. Hol\'ik, R. Meyer, and S. Muskalla}
\subjclass{F.1.1 Models of Computation}
\keywords
{
    summaries,
    context-free games,
    Kleene iteration,
    transition monoid,
    strategy synthesis
}
\newcommand{\gex}{G_{\mathit{ex}}}
\newcommand{\aex}{A_{\mathit{ex}}}
\begin{document}

\maketitle

\vspace{-4mm}
\begin{abstract}
%
We study two-player games played on the infinite graph of sentential forms induced by a context-free grammar (that comes with an ownership partitioning of the non-terminals). 
The winning condition is inclusion of the derived terminal word in the language of a finite automaton. 
Our contribution is a new algorithm to decide the winning player and to compute her strategy.
It is based on a novel representation of all plays starting in a non-terminal. 
The representation uses the domain of Boolean formulas over the transition monoid of the target automaton. 
The elements of the monoid are essentially procedure summaries, 
and our approach can be seen as the first summary-based algorithm for the synthesis of recursive programs.
We show that our algorithm has optimal (doubly exponential) time complexity, that it is compatible with recent antichain optimizations, and that it admits a lazy evaluation strategy. 
Our preliminary experiments indeed show encouraging results, indicating a speed up of three orders of magnitude over a competitor. 
\end{abstract}
\vspace{-0.3cm}

\section{Introduction}
The motivation of our work is to generalize the language-theoretic approach to verification of recursive programs~\cite{HeizmannHoenickePodelski2010,LanguageRefinement} to synthesis.
Central to verification are queries 
$\lang{G}\subseteq \lang{A}$, where $G$ is a context-free grammar representing the control-flow of a recursive program and $A$ is a finite automaton representing the specification.  
When moving to synthesis, 
we replace the inclusion query by a strategy synthesis for an \emph{inclusion game}. 
This means $G$ comes with an ownership partitioning of the non-terminals. 
It induces a game arena defined by the sentential forms and the left-derivation relation (replace the leftmost non-terminal, corresponds to executing the recursive program). 
The winning condition is inclusion in a regular language given by a finite automaton $A$. 
To be precise, player \emph{prover} tries to meet the inclusion by deriving terminal words from the language or enforcing infinite derivations. The goal of \emph{refuter} is to disprove the inclusion by deriving a word outside $\lang{A}$.  

For the verification of recursive programs, the two major paradigms are \emph{summarization} \cite{SharirPnueli1978,RepsHorwitzSagiv1995} and \emph{saturation}~\cite{BouajjaniEM:1997,FinkelWW:1997}.
Procedure summaries compute the effect of a procedure in the form of an input-output relation. 
Saturation techniques compute the pre$^*$-image over the configurations of a pushdown system (including the stack).
Both were extensively studied, optimized, and implemented~\cite{MOPED,WALi,Beyer:2015,Beyer:2016}.
What speaks for summaries is that they seem to be used more often, as witnessed by the vast majority of verification tools participating in the software verification competition \cite{Beyer:2015,Beyer:2016}.
The reason, besides simpler implementability, may be that the stack maintained by the pre$^*$-construction increases the search space. 

Saturation has been lifted to games and synthesis in \cite{Cachat2002,Hague2009}, from which closest to our setting is the work of Cachat \cite{Cachat2002}, 
where the game arena is defined by a pushdown system and
\begin{wraptable}[4]{r}{7.2cm}
\vspace{-0.27cm}
\scalebox{0.92}{
      \begin{tabular} {c|cc}
  
        Problem $\backslash$ Method
        & Saturation
        & Summarization
        \\
        \hline
        Verification
        & \cite{BouajjaniEM:1997,FinkelWW:1997}
        & \cite{SharirPnueli1978,RepsHorwitzSagiv1995}
        \\
        \hline
               Synthesis
        & \cite{Cachat2002,Muscholl2005,Hague2009}  
        & 
        \\
    \end{tabular}}
\vspace{-0.5cm}
    \end{wraptable} 
the winning condition is given by a regular set of goal configurations, and the work of 
Muscholl, Schwentick, and Segoufin \cite{Muscholl2005}, where a problem similar to ours is solved by a reduction to \cite{Cachat2002}.
%
In this paper, we fill in the empty spot in the picture and propose a solver and synthesis method for context-free inclusion games based on summaries. 

\paragraph*{Overview of Our Method}

Our main contribution is a novel representation of inclusion games that combines well with efficient methods from algorithmic verification (see below). The basic data structure are the elements of the \emph{transition monoid} of the automaton $A$, called boxes.
Boxes are relations over the states of $A$ that capture the state changes on $A$ induced by terminal words~\cite{Buchi1990}. 
As such, they correspond to procedure summaries. 
The set of all plays starting in a non-terminal yields a (typically infinite) tree.  
We show how to represent this tree by a (finite) \emph{negation-free Boolean formula over the transition monoid}, where
 conjunction and disjunction represent the behavior of the players on the inner nodes. 
%

To compute the representation, 
we employ a fixed-point iteration on a system of equations that reflects closely the rules of the grammar (and hence the shape of the tree). 
Indeed, we simultaneously compute the formulas for all non-terminals. 
In the fixed-point computation, a strategy of prover to enforce an infinite play naturally yields a formula equivalent to $\ifalse$. 
For the domain to be finite, we work modulo logical equivalence.
The order 
is implication.
Key to the fixed-point computations is the following compositionality: The formula describing the plays from a sentential form $\alpha \beta$ can be obtained by appropriately composing the formulas for $\alpha$ and $\beta$.  
Indeed, since we consider left-derivations, each play starting in $\alpha \beta$ will have a prefix that coincides with a maximal play starting in $\alpha$, followed by a suffix that essentially is a play from $\beta$. 
Composition is monotonic \wrt implication

Having a finite representation for the set of plays starting in each non-terminal has several applications. 
With compositionality, we can construct the formulas for all sentential forms.  
This allows us to decide whether a sentential form is in the winning region of a player: We compute the formula and check whether it is rejecting in the sense that refuter can enforce the derivation of a word rejected by the automaton. 
The latter amounts to evaluating the formula under the assignment that sets to $\itrue$ the rejecting boxes.  
%
When a sentential form is found to belong to the winning region of a player, we show how to compute a winning strategy, explained here for refuter. 
We transform the formula to conjunctive normal form (CNF). 
On CNFs, we define so-called \emph{choice functions} that select a box from each clause.  
We define a strategy such that all conforming plays end in a terminal word represented by a chosen box. 
Instantiating the strategy for a choice function that only picks rejecting boxes (always possible if the initial formula is rejecting) yields a winning strategy for refuter. 

\paragraph*{Complexity and Efficiency}

We show that our algorithm is in \textsf{2EXPTIME}, which is tight by \cite{Muscholl2005}.  
Cachat's algorithm is singly exponential and our input instances can be reduced to his with an exponential blow-up, which together also gives a doubly exponential procedure. 
The complexity of the reduction comes from that it must determinize the automaton $A$ \cite{Muscholl2005}.

Our domain is compatible with algorithmic techniques that have proven efficient in a number of applications (see Section~\ref{sec:related}).   
%
We show how to adapt two heuristics to our fixed-point computation over formulas over boxes,
namely antichains from~\cite{Fogarty:Efficient,Abdulla:Simulation,Abdulla:Advanced}  
and lazy evaluation inspired by~\cite{fiedor:lazy}. 
We also discuss the compatibility of our technique with recent algorithms for the analysis of well-structured systems.
 %
It is not immediate how to use the same heuristics for Cachat's domain of automata.
Moreover, the determinization within the reduction to Cachat's method does not offer much opportunities for optimization, which means there is one level of exponential complexity that is hardly amenable to heuristics. 
%

In preliminary experiments, we have compared an implementation of Cachat's saturation-based algorithm with our new summary-based algorithm. 
The benchmarks were generated according to the Tabakov-Vardi random automata model~\cite{TabakovV:2005} that we adapted to grammars.  
The running times of our algorithm were consistently better by three orders of magnitude (without the aforementioneed optimizations). 
This supports our conjecture that keeping the stack has a negative impact on search procedures, and summaries should be preferable. 

\paragraph*{Acknowledgements}

We thank Olivier Serre, Matthew Hague, Georg Zetzsche, and Emanuele D'Osualdo for helpful discussions.
We thank the reviewers for their feedback.
This work was partially supported by the Czech Science Foundation project 16-24707Y,
the IT4IXS: IT4Innovations Excellence in Science project LQ1602, and the BUT project FIT-S-14-2486.

\section{Inclusion Games on Context-Free Grammars}

A context-free grammar (CFG) is a tuple $G = (N,T,P)$, where $N$ is a finite set of non-terminals, $T$ is a finite set of terminals with $N \cap T = \emptyset$, and $P \subseteq N \times \vartheta$ is a finite set of production rules. 
Here, $\vartheta = (N \cup T)^*$ denotes the set of sentential forms.
We write $X \to \eta$ if $(X, \eta) \in P$.
We assume that every non-terminal is the left-hand side of some rule.
The \emph{left-derivation relation $\leftderive$} replaces the leftmost non-terminal $X$ in $\alpha$ by the right-hand side of a rule. 
Formally, $\alpha \leftderive \beta$ if $\alpha = w X \gamma$ with $w\in T^*$, $\beta = w \eta \gamma$, and there is a rule $X \to \eta\in P$. 
We use $w$ to refer to terminal words (so that a following non-terminal is understood to be leftmost).
We consider CFGs that come with an \emph{ownership partitioning} $N = N_\roundss \dotcup N_\square$ of the set of non-terminals.
We say that the non-terminals in $N_{\player}$ are owned by player $\player\in \players$. 
The ownership partitioning is lifted to the sentential forms ($\vartheta = \vartheta_{\roundss} \dotcup \vartheta_{\square}$) as follows: $\alpha\in \vartheta_{\square}$ if the leftmost non-terminal in $\alpha$ is owned by $\square$, and $\vartheta_{\roundss}=\vartheta\setminus \vartheta_{\square}$. 
In particular, $\round$ owns all terminal words.
Combined with the left-derivation relation, this
yields a game arena.

\begin{definition}
\label{def:grammar_graph}
    Let $G= (N_\roundss \dotcup N_\square,T,P)$ be a CFG with ownership partitioning. 
    The \emph{arena induced by $G$} is the directed graph $(\vartheta_{\roundss} \dotcup \vartheta_{\square}, \leftderive)$.
\end{definition}
A \emph{play} $p=p_0p_1\ldots$ is a finite or infinite path in the arena.
Being a path means $p_i \leftderive p_{i+1}$ for all positions.  
If it is finite, the path ends in a vertex denoted $p_{\mathit{last}}\in \vartheta$.
A path corresponds to a sequence of left-derivations,   
where for each leftmost non-terminal the owning player selects the rule that should be applied.
A play is \emph{maximal} if it has infinite length or if the last position is a terminal word.

The winning condition of the game is defined by inclusion or non-inclusion in a regular language (depending on who is the player) for the terminal words derived in maximal plays. 
If the maximal play is infinite, it does not derive a terminal word and satisfies inclusion.
The regular language is given by a (non-deterministic) finite automaton \mbox{$A = (T, Q, q_0, Q_F, \to)$}.
Here, $T$ is a finite alphabet, $Q$ is a finite set of states, $q_0 \in Q$ is the initial state, $Q_F \subseteq Q$ is the set of final states, and $\to \ \subseteq Q \times T \times Q$ is the transition relation.
Instead of $(q,a,q') \in\ \to$, we write $q \overset{a}{\to} q'$ 
 and extend the relation to words: $q \overset{w}{\to} q'$ means there is a sequence of states starting in $q$ and ending in $q'$ labeled by $w$. 
The language $\lang{A}$ consists of all words $w\in T^*$ with $q_0 \tow{w} q_f$ for some $q_f\in Q_F$. 
We write $\overline{\lang{A}}=T^*\setminus\lang{A}$ for the complement language.

From now on, we use $A=(T, Q, q_0, Q_F, \to)$ for finite automata and \mbox{$G=(N_\roundss \dotcup N_\square, T, P)$} for grammars with ownership.
Note that both use the terminal symbols $T$.

\begin{definition}
\label{def:inclusion_game}
    The \emph{inclusion game} and the \emph{non-inclusion game} \wrt $A$ on the arena induced by $G$ are defined by the following winning conditions.
    A maximal play $p$ satisfies the \emph{inclusion winning condition} if it is either infinite or we have  $p_{\mathit{last}} \in \lang{A}$.
    A maximal play satisfies the \emph{non-inclusion winning condition} if it is finite and $p_{\mathit{last}} \in \overline{\lang{A}}$.
\end{definition} 
The two games are complementary: For every maximal play, exactly one of the winning conditions is satisfied.
We will fix player $\round$ as the \emph{refuter},
the player wanting plays to satisfy non-inclusion, which is a reachability condition.
The opponent $\square$ is the \emph{prover}, wanting plays to satisfy inclusion, which is a safety condition. 
Since refuter has a single goal to achieve and has to enforce termination, we will always explain our constructions from refuter's point of view.
To win, prover just has to ensure that she stays in her winning region.  
She does not need to care about termination.

A \emph{strategy} for player $\player \in \players$ is a function that takes a non-maximal play $p$ with $p_{\mathit{last}}\in\vartheta_{\player}$ (it is $\player$'s turn) and returns a successor of this last position. 
A play \emph{conforms} to a strategy if whenever it is the turn of $\player$, her next move coincides with the position returned by the strategy. 
A strategy is \emph{winning from a position $p_0$} if every play starting in $p_0$ that is conform to the strategy eventually satisfies the winning condition of the game.
The \emph{winning region} for a player is the set of all positions from which the player has a winning strategy.

\begin{example}
\label{ex:running}
    Consider the grammar
    $
        \gex=(\set{X,Y}, \set{a,b}, \set{X \to aY, X \to \varepsilon, Y \to bX})
        \ .
    $
    The automaton $\aex$ is given in Figure~\ref{fig:automaton_boxes} and accepts $(ab)^*$.
    If refuter owns $X$ and prover owns $Y$, then prover has a winning strategy for the inclusion game from position $X$.
    Indeed, finite plays only derive words in $(ab)^*$.
    Moreover, if refuter enforces an infinite derivation, prover wins inclusion as no terminal word is being derived.
    Refuter can win non-inclusion starting from $Y$.
    After prover has chosen $Y \to bX$, refuter selects $X \to \varepsilon$ to derive $b \not\in (ab)^*$. 
\qed
\end{example}
Our contribution is an algorithm to compute (a representation of) both, the winning region of the non-inclusion game for $\round$ and the winning region of the inclusion game for $\square$.

\section{From Inclusion Games to Fixed Points}
\label{sec:domain}

We give a summary-based representation of the set of all plays from each non-terminal and a fixed-point analysis to compute it.
We lift the information to the sentential forms. 


\subsection{Domain}

The idea of the analysis domain is to use Boolean formulas over words. 
To obtain a finite set of propositions, we consider words equivalent that induce the same state changes on $A$, denoted by $\stateequiv{A}$. 
The winning condition is insensitive to the choice of $\stateequiv{A}$-equivalent words. 
This means it is sufficient to take formulas over
$\stateequiv{A}$-equivalence classes.

To finitely represent the $\stateequiv{A}$-equivalence classes, we rely on the \emph{transition monoid of~$A$}, defined as $\Boxes_A = (\pwrset{Q \times Q},\ ;\ ,\ \id)$.
We refer to the elements $\tboxr, \tboxt \in \Boxes_A$ as \emph{boxes}.
Since boxes are relations over the states of $A$, their relational composition is defined as usual, 
$
    \tboxr;\tboxt
    =
    \Set{ (q,q'') }
    {
        \exists q' \in Q:
        (q, q') \in \tboxr\text{ and } (q', q'') \in \tboxt
    }
    .
$  
Relational composition is associative. 
The identity box $\id = \Set{ (q,q) }{ q \in Q}$ is the neutral element \wrt relational composition.  

A box $\tboxr$ represents the language
$\lang{\tboxr} = \{ w\in\lang{\tboxr} \ \mid \ \forall q,q' \in Q \colon q\xrightarrow w q' \text{ iff } (q, q')\in\tboxr \}$.
That is, the words induce
the state changes specified by the box. 
Hence, $\lang{\tboxr}$ is an equivalence class of $\stateequiv{A}$, finitely represented by $\tboxr$.
The function  $\tbox{-} : T^* \to \Boxes_A$ maps $w$ to the unique box $\tbox{w}$ representing the word, $w \in \lang{\tbox{w}}$\,.
More explicitly, $\tbox{\varepsilon} = \id$, 
$\tbox{a} = \smallset{ (q,q') }{ q \overset{a}{\to} q' }$ for all $a\in T$, and $\tbox{uv}=\tbox{u};\tbox{v}$. 
The image $\tbox{T^*}$ contains exactly the boxes $\tboxr$ with $\lang{\tboxr}\neq \emptyset$. 
Figure~\ref{fig:automaton_boxes} illustrates the representation of words as boxes.
\begin{figure}[t]
\vspace{-2mm}
    {
        \begin{minipage}{.3\textwidth}
            \hspace{3mm}
            \scalebox{0.9}{\begin{tikzpicture}[scale=0.8,->,>=stealth',shorten >=1pt,auto,node distance=2.8cm,
semithick]

\node[initial,state,accepting,transform shape, initial text={}] (A)                    {$q_0$};
\node[state, transform shape] (B) [right of=A] {$q_1$};

\path
(A) edge [bend left = 15] node [above]  {a} (B)
(B) edge [bend left = 15] node [below] {b} (A);
\end{tikzpicture}}
        \end{minipage}
        \hspace*{0.6cm}
        \begin{minipage}{.6\textwidth}
            \scalebox{0.8}{\begin{tikzpicture}[scale=0.5]
    \pic  (fid) [transform shape] {tbox={2}{1/1/,2/2/}};
    \node [below = 0.6cm] {$\fid = \tbox{\varepsilon}$};
    
    \pic (a) at (3.5,0) [transform shape] {tbox={2}{1/2/}};
    \node [below = 0.6cm] at (3.5,0) {$\tbox{a}$};
    
    \pic (b) at (7,0) [transform shape] {tbox={2}{2/1/}};
    \node [below = 0.6cm] at (7,0) {$\tbox{b}$};
    
    \pic (ab) at (10.5,0) [transform shape] {tbox={2}{1/1/}};
    \node [below = 0.6cm] at (10.5,0) {$\tbox{ab}$};
    
    \pic (ba) at (14, 0) [transform shape] {tbox={2}{2/2/}};
    \node [below = 0.6cm] at (14, 0) {$\tbox{ba}$};
    
    \pic (aa) at (17.5, 0) [transform shape] {tbox={2}{}};
    \node [below = 0.6cm] at (17.5,0) {$\tbox{aa} = \tbox{bb}$};
\end{tikzpicture}}
        \end{minipage}
    }
    \vspace*{-0.3cm}
    \caption{The automaton $\aex$ accepting $(ab)^*$ and all its boxes with non-empty language. The first dash on each side of a box represents state $q_0$, the second dash represents $q_1$.}
    \label{fig:automaton_boxes}
\vspace{-3mm}
\end{figure}
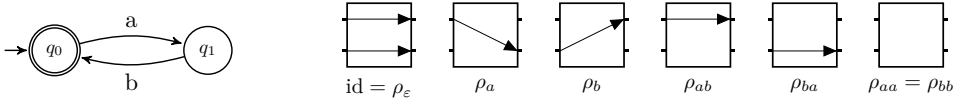

The terminal words generated by maximal plays are represented by boxes, disjunction gives the alternatives of refuter, and conjunction expresses the options for prover. 
The set of plays from a given position is thus represented by a formula $F$ from the set $\BF_A$ of \emph{negation-free Boolean formulas over the transition monoid} (propositions are boxes).
This set includes the unsatisfiable formula $\ifalse$. We use the rules $\ifalse \wedge F = F \wedge \ifalse = \ifalse$ and $\ifalse \vee F = F \vee \ifalse = F$ to evaluate conjunctions and disjunctions involving $\ifalse$ on the syntactical level. As a consequence, $\ifalse$ is the only syntactic representation of the unsatisfiable formula. 
This will simplify the definition of relational composition.
From now on and without further mentioning, $F$ and $G$ will refer to formulas from $\BF_A$.

Our goal is to decide whether refuter can force the plays from an initial position to end in a terminal word rejected by $A$. 
To mimic this, we define a formula to be \emph{rejecting} if it is satisfied under the assignment $\nu:\Boxes_A\rightarrow \{\itrue,\ifalse\}$ such that  
$\nu(\rho) = \itrue$ if and only if $\rho$ does not contain a pair $(q_0, q_f)$ with $q_f\in Q_F$.

To use formulas in a Kleene iteration, we have to define a partial ordering on them. 
Intuitively, $F$ should be smaller than $G$ if $G$ makes it easier for refuter to win. 
Taking the logical perspective, 
it is easier for refuter to win if $F$ implies $G$. 
Implication on $\BF_A$ is not antisymmetric.
To factor out the symmetries, we reason modulo logical equivalence, $\BFE$. 
Every formula is understood as a representative of the class of logically equivalent formulas of $\BF_A$. 
Extending $\lleq$ to $\BFE$ by comparing representatives then 
yields a partial order.
The least element of the partial order is the equivalence class of $\ifalse$.


\subsection{Operations}
\label{sec:operations}

We combine formulas by conjunction, disjunction, and by an operation of relational composition that lifts $;$ from the transition monoid to formulas over boxes.
To explain the definition of relational composition, note that every finite maximal play from $\alpha\beta$ proceeds in two phases. 
It starts with a maximal play turning $\alpha$ into a terminal word, say $w$, followed by a play from $w\beta$. 
Since there are no more derivations for $w$, the play from $w\beta$ coincides with a play from~$\beta$, except that all sentential forms have a prefix $w$.

Let $F$ and $G$ represent all plays starting in $\alpha$ and $\beta$, respectively. 
In $F$, terminal words like $w$ are represented by boxes $\rho$. 
We append the plays from $\beta$ by replacing $\rho$ with $\rho;G$.
To take into account all plays from $\alpha$, we do this replacement for all boxes in $F$. 
It remains to add the prefix $w$ to the sentential forms
in the plays from $\beta$. 
In $\rho;G$, every box $\tboxt$ in $G$ is replaced by $\tboxr;\tboxt$. 
The so-defined formula $F;G$ will represent all plays from $\alpha\beta$. 

\raggedbottom
\newpage

\begin{example}
    \newcommand {\ta}
    { \tboxr_a }
    \newcommand {\tb}
    { \tboxr_b }
    \newcommand {\tc}
    { \tboxr_c }
    \newcommand {\td}
    { \tboxr_d }
Let $F =\ta \vee \tb$ and $G =  \tc \wedge \td$.
    We have
$
    (\ta \vee \tb);(\tc \wedge \td)=  \ta;(\tc \wedge \td)  \vee \tb;(\tc \wedge \td) =  (\ta;\tc \wedge \ta;\td)  \vee (\tb;\tc \wedge \tb;\td)\ .
 $
The first equality replaces $\ta$ and $\tb$ by $\ta;G$ and $\tb;G$, respectively. 
The second equality prefixes $\tc$ and $\td$ in $G$ by the corresponding box $\ta$ or $\tb$.\tqed
\end{example}

\begin{definition}
\label{def:cnf_seqcomp}
    \emph{Relational composition} over $\BF_A$ is defined by $F;\ifalse = \ifalse;G = \ifalse$ and for composite formulas ($\star\in\{\land,\lor\}$, $\rho\in\Boxes_A$) by
\begin{align*}
        (F_1\star F_2) ; G =  F_1 ; G \star F_2 ; G\qquad\text{and}\qquad  \rho ; (G_1 \star G_2) =\ 
         \rho ; G_1 \star \rho ; G_2\ .
\end{align*}
\end{definition}
Note that the composition of two non-$\ifalse$ formulas is not $\ifalse$.
Therefore, the result of a relational composition is $\ifalse$ if and only if at least one of the arguments was $\ifalse$.

Relational composition equips the set of formulas with a monoid structure. In particular, relational composition is associative.
For a fixed-point iteration, the operations also have to be monotonic \wrt $\lleq$. 
For conjunction and disjunction, monotonicity obviously holds.

\begin{lemma}
\label{lemma:leq_monotonic}
    If $F \lleq F'$ and $G \lleq G'$, then
    $F;G \lleq F';G'$.
\end{lemma}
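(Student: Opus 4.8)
The plan is to prove the statement in two steps, by first establishing monotonicity of relational composition in each argument separately and then combining them via transitivity of $\lleq$. That is, I would show (i) if $F \lleq F'$ then $F;G \lleq F';G$, and (ii) if $G \lleq G'$ then $F;G \lleq F;G'$; the lemma then follows from $F;G \lleq F';G \lleq F';G'$.

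For step (i), recall that $F \lleq F'$ means $F$ implies $F'$ (modulo logical equivalence), so it suffices to argue that any truth assignment $\nu : \Boxes_A \to \{\itrue,\ifalse\}$ satisfying $F;G$ also satisfies $F';G$. The natural route is to observe that $F;G$ is, up to logical equivalence, the formula obtained from $F$ by substituting $\rho \mapsto \rho;G$ for each box $\rho$, and that this substitution is a \emph{positive} (negation-free) context. More precisely, I would fix $\nu$ and consider the induced assignment $\nu_G$ on boxes defined by $\nu_G(\rho) = \itrue$ iff $\nu$ satisfies $\rho;G$. By structural induction on $F$ (using Definition~\ref{def:cnf_seqcomp} and the evaluation rules for $\ifalse$), $\nu$ satisfies $F;G$ iff $\nu_G$ satisfies $F$. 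Since $F$ implies $F'$, if $\nu_G$ satisfies $F$ it also satisfies $F'$, hence $\nu$ satisfies $F';G$. This gives $F;G \lleq F';G$.

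For step (ii), the argument is symmetric but uses positivity on the right: $\rho;G$ is a negation-free formula built from $G$ by substituting each box $\tboxt$ with $\rho;\tboxt$, and $F;G$ assembles these via the connectives of $F$. Fixing an assignment $\nu$ satisfying $F;G$, I would again peel off the structure of $F$ to reduce to the claim that for each box $\rho$, $\rho;G \lleq \rho;G'$, i.e. $\rho;G$ implies $\rho;G'$ whenever $G$ implies $G'$. This last claim follows by structural induction on $G$: the substitution $\tboxt \mapsto \rho;\tboxt$ only introduces the connectives $\land,\lor$ (and respects the $\ifalse$ rules), so it preserves implication because positive Boolean contexts are monotone. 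Combining, $F;G \lleq F;G'$, and then transitivity yields the lemma.

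The main obstacle I anticipate is purely bookkeeping rather than conceptual: one must be careful that the syntactic rewriting rules for $\ifalse$ (namely $\ifalse \wedge F = \ifalse$, $\ifalse \vee F = F$, and $F;\ifalse = \ifalse;G = \ifalse$) are correctly handled in the induction, so that the equivalences "$\nu$ satisfies $F;G$ iff $\nu_G$ satisfies $F$" genuinely hold including the degenerate cases where $F$, $G$, or some $\rho;G$ collapses to $\ifalse$. Since the excerpt notes that $\ifalse$ is the unique syntactic representative of the unsatisfiable formula and that composition of two non-$\ifalse$ formulas is never $\ifalse$, these cases are consistent, but they need to be spelled out as base cases. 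Everything else — associativity, the precise shape of $F;G$ — is already available from the preceding definitions and remarks, so no deeper machinery is required.
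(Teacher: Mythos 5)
Your proof is correct, but it takes a genuinely different route from the paper's. The paper proves the two-sided statement directly by four nested structural inductions (first over $G$ with $F,F',G'$ boxes, then over $F$, then over $G'$, then over $F'$), each phase reducing the claim to the previous one via the purely syntactic Boolean equivalences $(a \star b) \lleq c \;\equiv\; (a\lleq c)\mathrel{\bar\star}(b\lleq c)$ and $a \lleq (b\star c) \;\equiv\; (a\lleq b)\star(a\lleq c)$. You instead split the lemma into monotonicity in each argument separately and chain them by transitivity of $\lleq$, proving left-monotonicity via a substitution lemma (the value of $F;G$ under $\nu$ equals the value of $F$ under the induced assignment $\nu_G(\rho)=\nu(\rho;G)$) and right-monotonicity via $\rho;G\lleq\rho;G'$ plus monotonicity of negation-free contexts. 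Both arguments are sound; your observation that $F;G$ is literally $F$ with each box $\rho$ substituted by $\rho;G$ is exactly what Definition~\ref{def:cnf_seqcomp} encodes, the induced-assignment trick is a standard substitution lemma, and the degenerate $\ifalse$ cases you flag are indeed harmless because composition of non-$\ifalse$ formulas is never $\ifalse$ and $F\lleq\ifalse$ forces $F=\ifalse$ for negation-free formulas. Your decomposition buys modularity and a more semantic, arguably less error-prone argument (each half is a one-line consequence of a standard fact); the paper's version buys a self-contained, purely syntactic derivation that never leaves the level of implications between formulas and handles both arguments varying simultaneously in a single pass.
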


\begin{proof}
    The proof proceeds in phases (1) to (4) so that the claim in each phase is proven under the assumption of the claim proven in the previous phase. Let $\set{ \star, \bar{\star} } = \{\land,\lor \}$.
    In the following, we will use $\star$ and $\bar\star$ as syntactic parts of formulas as well as to connect statements in the proof.
    \begin{enumerate}[(1)]
        \item
        First, we prove the lemma for the case when $F,F', G'\in \Boxes_A$
        by induction on the structure of $G$. 
        In the base case, all formulas are boxes, hence $F = F'$ and $G = G'$, and the lemma holds trivially.
        For the induction step, let $G = G_1\star G_2$.
        Note that the Boolean formulas  $(a \star b) \lleq c$ and $(a\lleq c) \mathrel{\bar{\star}} (b \lleq c)$ are equivalent, called Equivalence (i) in the following.
        By (i), we get $(G_1\lleq G')\mathrel{\bar\star}(G_2\lleq G')$.
        Hence, by the induction hypothesis applied twice and by the monotonicity of $\bar{\star}$,
        \mbox{$(F;G_1 \lleq F';G')\mathrel{\bar\star}( F;G_2\lleq F';G')$}.
        Again by Equivalence~(i), we get
        \mbox{$(F;G_1 \star F;G_2) \lleq F';G'$}. 
        This is $F;G \lleq F';G'$ by the definition of relational composition since $F$ is a box.
        \item
        Next, we assume that $F',G'\in \Boxes_A$ and $F$ and $G$ are arbitrary formulas.
        We prove the statement by induction on $F$.
        In the base case, all formulas except $G$ are boxes, hence (1) proves the statement.
        For the induction step, let $F = F_1\star F_2$. 
        By Equivalence (i), 
        we get $(F_1\lleq F')\mathrel{\bar\star}(F_2\lleq F')$. 
        Therefore, by the induction hypothesis and the monotonicity of $\bar{\star}$,
        \mbox{$(F_1;G \lleq F';G')\mathrel{\bar\star}( 
            G_2;M\lleq G';M')$}.   
        This is by (i) equivalent to 
        \mbox{$\left( F_1;G \star F_2;G \right) \lleq F';G'$}. 
        This shows \mbox{$F;M \lleq F';G'$} by the definition of relational composition.
        \item
        Next, we assume only that $F'\in \Boxes_A$ and prove the statement using induction on the structure of $G'$. 
        In the base case, all formulas except $F$ and $G$ are boxes, hence the statement is proven by (2).
        Let $G' = G_1'\star G_2'$. 
        By the general equivalence of the Boolean formulas 
        $a \lleq (b\star c)$ and  $(a \lleq b) \star (a\lleq c)$, called Equivalence (ii) in the following,
        we get $(G \lleq G_1')\mathrel{\star}(G \lleq G_2')$. 
        Therefore, by the induction hypothesis and the monotonicity of $\star$, $(F;G \lleq F';G_1')\mathrel{\star}( F;G\lleq F';G_2')$ holds.
        Again by (ii), we get $F;G \lleq (F';G_1' \star F';G_2')$.  
        This is $F;G \lleq F';G'$ by the definition of relational composition since $F'$ is a box.
        \item
        Finally, we show the general case by induction to the structure of $F'$. 
        In the base case, $F'$ is a box, hence (3) proves the statement.
        Let $F' = F_1'\star F_2'$. 
        By Equivalence~(ii), we get $(F\lleq F_1')\mathrel{\star}(F\lleq F_2')$. 
        Therefore, by the induction hypothesis, and the monotonicity of $\star$, $(F;G \lleq F_1';G')\mathrel{\star}( F;G\lleq F_2';G')$ holds.   
        Again by Equivalence~(ii), we get \mbox{$F;G \lleq (F_1';G' \star F_2';G')$}, which is $F;G \lleq F';G'$ by the definition of relational composition.    
    \end{enumerate}
    \vspace*{-0.5cm}
\end{proof}
We lift the three operations to $\lsim$-equivalence classes by applying them to arbitrary representatives. 
Since implication is transitive, monotonicity of the operations ensures well-definedness. 
Moreover, the operations still behave monotonically on $\BFE$.
From now on, we can thus identify formulas with the classes they represent.

\subsection{System of Equations}

We introduce one variable $\var{X}$ for each non-terminal $X \in N$. 
Terminals $a\in T$ yield boxes, and we write $\var{a}$ for $\tbox{a}$. 
We lift the notation $\var{-}$ to sentential forms: $\var{\varepsilon} = \id$ and $\var{\alpha\beta} = \var{\alpha};\var{\beta}$. 
This means concatenation in rules is replaced by relational composition.
All rules for the same non-terminal are combined into one equation using disjunction or conjunction, depending on who is the owner of the non-terminal.

\begin{definition}
\label{def:game_dfa}
    The \emph{system of equations (over $\BFE$) induced by $G$ and $A$} has one equation for each non-terminal $X \in N_{\player}$ with $\player \in \players$.
    If $X \to \eta_1$, \ldots, $X \to \eta_k$ are all rules with $X$ as their left-hand side, the equation is
    \[
        \var{X}
        =
        \var{\eta_1} \wedge \cdots \wedge \var{\eta_k}\ ,
        \qquad \text{if } X \in N_\square \ ,
    \]
    \[
        \var{X}
        =
        \var{\eta_1} \vee \cdots \vee \var{\eta_k}\ ,
         \qquad \text{if } X \in N_\roundss\ .
    \]
\end{definition}
With Lemma~\ref{lemma:leq_monotonic}, for each non-terminal $X$ we can understand the right-hand side of the associated equation as a monotonic function $f_X : (\BFE)^N \to \BFE$.
It takes as input a vector of formulas (one for each non-terminal)  and computes a new formula for $\var{X}$. 
We combine the functions for each non-terminal to a single function $f : (\BFE)^N \to (\BFE)^N$. 
It is monotonic on the product domain \wrt the product order $\Rightarrow^N$.

Since $\BFE$ with $\Rightarrow$ is a finite bottomed partial order,
there is a unique least solution $\sol{}$ for the equation $\var{} = f ( \var{})$, namely $\sol{} = \bigsqcup_{i \in \N} f^i (\bot)$ \cite{LatticesAndOrder}.
The least element of the product domain is the vector with the $\lsim$-equivalence class of $\ifalse$ in every component. 
Note that the solution is computed by iteratively applying $f$   until a fixed point is reached. 
This procedure terminates since the chain
 \[
     \bot \Rightarrow^N f(\bot) \Rightarrow^N f(f(\bot)) \Rightarrow^N \ldots
 \]
stabilizes on a finite domain.

The solution $\sol{} : N \to \BFE$  yields a value $\sol{X}$ for each non-terminal $X \in N$.
We lift the notation to sentential forms by $\sol{\varepsilon} = \id$, $\sol{a} = \tbox{a}$ for all $a\in T$, 
and $\sol{ \alpha\beta} = \sol{\alpha};\sol{\beta}$.
From now on, $\sol{}$ will always be the least solution to a system of equations.
The system will be clear from the context (either $G$, $A$ from the development or $\gex$, $\aex$ from the running example).

\begin{example}
\label{ex:sol_game_dfa_ab}
    For $\gex$ and $\aex$ from Example~\ref{ex:running}, 
   the system of equations consists of
\mbox{$
        \var{X}
        = \var{a};\var{Y} \vee \var{\varepsilon}
        =  \tbox{a};\var{Y} \vee  \id
$} and $
        \var{Y}
        = \var{b};\var{X}
        = \tbox{b};\var{X}\ .
$
    Its least solution is \mbox{$\sol{X}= \id \lor \tbox{ab}$} and $\sol{Y}=  \tbox{b}$. 
    To terminate the iteration, use $\tbox{bab} = \tbox{b}$.
    \tqed
\end{example}
%
%

\section{Semantics}
\label{sec:semantics}

Our goal is to determine whether refuter has a winning strategy for the non-inclusion game played from a given sentential form $\alpha$.  
The plays starting in $\alpha$ form a (typically infinite) tree.
The given sentential form is the root and each node has one successor for each sentential form that can be obtained by a left-derivation step. 
This means the leaves are precisely the terminal words derivable in plays from $\alpha$. 
Recall that it is refuters goal to disprove the inclusion in $\lang{A}$. 
We call a leaf \emph{rejecting} if it corresponds to a word outside this language. 
An inner node is rejecting if it is either owned by refuter and it has a rejecting successor, or it is owned by prover and all  successors are rejecting. 
Refuter then has a winning strategy for non-inclusion when playing from $\alpha$ if and only if the root is rejecting.

We understand the tree as an infinite negation-free Boolean formula.  
The terminal words at the leaves are the atomic propositions. 
Each inner node corresponds to an operation of conjunction or disjunction, depending on who is the owner of the sentential form.  
Deciding whether the root is rejecting then amounts to computing the truth value of this infinite formula under the assignment that sets to true precisely the words outside $\lang{A}$.

Our goal is to compare this infinite formula with the finite formula obtained as the least solution of the fixed-point iteration presented in Section \ref{sec:domain}.

\subsection{Emptiness Games}

As a preparation, we note that the fixed-point solution to the inclusion game in particular solves the so-called \emph{emptiness game}, where a maximal play is winning if it is infinite. 
The emptiness game can be understood as the inclusion game \wrt $\lang{A} = \emptyset$. 
A winning strategy of prover for the emptiness game is thus a strategy that only generates infinite plays.   
We show that such a strategy exists when playing from $\alpha$ if and only if $\sol{\alpha} = \ifalse$. 

To prove the equivalence, we will also show that if $\sol{\alpha} \neq \ifalse$ refuter has a strategy to enforce finite plays. 
To define this strategy, we use the following notation.
Let $\sol{}^i$ be the $i^{\text{th}}$ Kleene approximant of the least solution to the system of equations, so $\sol{X}^i$ is the $i^{\text{th}}$ approximation to the value of the non-terminal $X$. 
We define $\sol{a}^i = \tbox{a}$ for $a \in T \cup \set{ \varepsilon }$ and all $i \in \N$. 
Just as we did for the fixed-point solution, we inductively define $\sol{\alpha.\beta}^i = \sol{\alpha}^i ; \sol{\beta}^i$.

Note that we deal with negation free-formulas and evaluate conjunctions and disjunctions involving $\ifalse$ on the syntactic level.
As a consequence, the result of a conjunction is $\ifalse$ if and only if at least one of the conjuncts was $\ifalse$ and the result of a disjunction is $\ifalse$ if and only if both disjuncts were $\ifalse$.

\begin{theorem}
    Prover has a winning strategy for the emptiness game iff \mbox{$\sol{\alpha} = \ifalse$}.
\end{theorem}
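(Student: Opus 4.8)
The plan is to prove both directions by relating Kleene approximants to winning strategies, using the compositionality of relational composition as the central technical tool. For the direction ``$\sol{\alpha} = \ifalse$ implies prover wins the emptiness game'', I would first reduce to the case of a single non-terminal: by the definitions $\sol{\varepsilon} = \id \neq \ifalse$ and $\sol{a} = \tbox{a} \neq \ifalse$ for $a \in T$, and by the observation recorded just before the theorem (composition of two non-$\ifalse$ formulas is non-$\ifalse$, and $\ifalse$ propagates through $;$), one sees $\sol{\alpha} = \ifalse$ forces $\sol{\alpha}$ to be a composition in which some factor $\sol{X}$ with $X \in N$ equals $\ifalse$. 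So it suffices to build, for each non-terminal $X$ with $\sol{X} = \ifalse$, a prover strategy from $X$ generating only infinite plays; the strategy from $\alpha$ then plays this strategy on the first such bad factor (every play from $\alpha$ will get stuck inside that factor forever). To build the strategy from $X$ with $\sol{X} = \ifalse$: prover must, at each prover-owned sentential form, pick a rule whose right-hand side still evaluates to $\ifalse$ under $\sol{}$, and I must argue refuter can never escape. If $X \in N_\square$, then $\sol{X} = \bigwedge_i \sol{\eta_i} = \ifalse$ means some disjunct... more carefully, a conjunction is $\ifalse$ iff some conjunct is, so some $\sol{\eta_i} = \ifalse$ and prover picks that rule; if $X \in N_\round$, then $\sol{X} = \bigvee_i \sol{\eta_i} = \ifalse$ means \emph{every} $\sol{\eta_i} = \ifalse$, so whatever refuter picks stays $\ifalse$. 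In both cases the successor sentential form $\eta$ satisfies $\sol{\eta} = \ifalse$, hence (by the reduction above) again contains a factor that is a non-terminal valued $\ifalse$, and we recurse. This never terminates at a terminal word, because $\sol{w} \neq \ifalse$ for $w \in T^*$; so the play is infinite and prover wins.

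For the converse ``prover wins the emptiness game from $\alpha$ implies $\sol{\alpha} = \ifalse$'', I would prove the contrapositive: if $\sol{\alpha} \neq \ifalse$ then refuter has a strategy enforcing a finite play (which then, in the emptiness game, is losing for prover). This is where the approximants $\sol{}^i$ enter. The idea is a rank/descent argument: since $\sol{\alpha} = \bigsqcup_i \sol{\alpha}^i$ and the chain stabilizes, $\sol{\alpha} \neq \ifalse$ means there is a least index $i$ with $\sol{\alpha}^i \neq \ifalse$; define this $i$ to be the rank of $\alpha$. Refuter's strategy is to always move so that the rank of the current sentential form strictly decreases (while staying non-$\ifalse$). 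Concretely, at a refuter-owned $wX\gamma$, since $\sol{wX\gamma}^i = \tbox{w};\sol{X}^i;\sol{\gamma}^i \neq \ifalse$ we have $\sol{X}^i \neq \ifalse$ and $\sol{\gamma}^i \neq \ifalse$; now $\sol{X}^i = \bigvee_j \sol{\eta_j}^{i-1}$, so some disjunct $\sol{\eta_j}^{i-1} \neq \ifalse$, and refuter picks that rule, landing in $w\eta_j\gamma$ with $\sol{w\eta_j\gamma}^{i-1} \neq \ifalse$, i.e. rank $\le i-1$. At a prover-owned $wX\gamma$, $\sol{X}^i = \bigwedge_j \sol{\eta_j}^{i-1} \neq \ifalse$ forces \emph{every} $\sol{\eta_j}^{i-1} \neq \ifalse$, so whatever prover picks has rank $\le i-1$. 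Since the rank is a natural number that strictly decreases at every step and can never reach $0$ without the sentential form having rank $0$, meaning $\sol{\cdot}^0 \neq \ifalse$, meaning it is already a terminal word (as $\sol{X}^0 = \ifalse$ for every non-terminal), every conforming play must reach a terminal word after finitely many steps. Hence refuter enforces termination and prover loses.

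The main obstacle I anticipate is the bookkeeping in the reduction from arbitrary sentential forms to single non-terminals, and in particular making precise the statement ``$\sol{\alpha} = \ifalse$ iff some non-terminal occurring in $\alpha$ has value $\ifalse$'' together with the dynamic version used in refuter's strategy (that the rank of $w\eta\gamma$ is governed by that of $\eta$ and $\gamma$ and the $\tbox{w}$ prefix never introduces $\ifalse$). This needs a clean lemma: for sentential forms $\alpha$, $\sol{\alpha}^i = \ifalse$ iff $\sol{X}^i = \ifalse$ for some non-terminal $X$ in $\alpha$, proved by induction on $\alpha$ using that $;$ yields $\ifalse$ only when a factor is $\ifalse$ and that boxes $\tbox{a}$ are never $\ifalse$. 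A secondary subtlety is that refuter's strategy above is only defined on positions reachable under the strategy (those of finite rank), which is fine since we only need conforming plays to be finite, but it should be stated carefully. Everything else — the propagation rules for $\ifalse$ through $\wedge$, $\vee$, $;$ — is already spelled out in the excerpt and can be cited directly.
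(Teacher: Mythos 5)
Your first direction is essentially the paper's argument: you maintain the invariant $\sol{\beta} = \ifalse$ (equivalently, some non-terminal occurring in $\beta$ has value $\ifalse$) along every conforming play, using that a conjunction is $\ifalse$ iff some conjunct is and a disjunction is $\ifalse$ iff all disjuncts are; since $\sol{w} \neq \ifalse$ for terminal words, conforming plays are infinite. The only imprecision is that at a position $wX\gamma$ the $\ifalse$-valued non-terminal need not be the leftmost one; in that case prover picks an arbitrary rule and the invariant is preserved by the untouched suffix. The paper makes exactly this case split, and your ``play the infinite strategy on the first bad factor'' reduction amounts to the same thing.

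The second direction has a genuine gap. You define the rank of $\alpha$ as the least $i$ with $\sol{\alpha}^i \neq \ifalse$ and claim refuter can make it strictly decrease in every step. This fails: since $\sol{wX\gamma}^i = \tbox{w};\sol{X}^i;\sol{\gamma}^i$, the rank of $wX\gamma$ is the \emph{maximum} of the attractor levels of the non-terminals occurring in it, while a left-derivation step only rewrites the leftmost non-terminal $X$. If some $Y$ in $\gamma$ realizes the maximum, the rank of $w\eta\gamma$ is unchanged whatever rule is applied; in particular your step ``$\sol{\eta_j}^{i-1}\neq\ifalse$, hence $\sol{w\eta_j\gamma}^{i-1}\neq\ifalse$'' silently assumes $\sol{\gamma}^{i-1}\neq\ifalse$, which need not hold. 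Worse, the resulting strategy need not terminate: with refuter-owned rules $X \to X$ and $X \to a$, at a position $XY$ whose rank is dominated by $Y$, the rule $X \to X$ satisfies your selection criterion and can be chosen forever. The fix is a finer well-founded measure that charges every symbol rather than only the maximum: the paper uses $\sum_j |G|^{\ell_j}$, where $\ell_j$ is the attractor level of the $j$-th symbol (terminals get level $0$), and observes that replacing one level-$i$ non-terminal by at most $|G|-1$ symbols of level at most $i-1$ strictly decreases this sum because $(|G|-1)\cdot|G|^{i-1} < |G|^{i}$. A multiset ordering on the levels would serve equally well. With such a measure, your case analysis (refuter picks a rule whose right-hand side is non-$\ifalse$ at the previous approximant relative to the \emph{leftmost} non-terminal's own attractor level; prover has no choice) goes through.
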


\begin{proof}
    We have $\sol{X}^0 = \ifalse$ for every non-terminal $X$.
    If $i_0$ is a number such that $\sol{X}^{i_0} \neq \ifalse$, then $\sol{X}^{i_0+k} \neq \ifalse$ for all $k \in \N$.
    This follows from monotonicity of the solution to the system of equations, $\sol{X}^{i} \lleq \sol{X}^{i+k}$ for all $i, k \in \N$, and from the fact that $F \lleq \ifalse$ implies $F \lsim \ifalse$. 
    Note that for a terminal word, we have $\sol{w}^i \neq \ifalse$ for all $i \in \N$.
    Therefore, $\sol{\beta} \neq \ifalse$ holds if and only if $\sol{X} \neq \ifalse$ for all non-terminals $X$ occurring in $\beta$ by the definition of relational composition.
    
    Assume $\sol{\alpha} = \ifalse$.
    We define a strategy $s_{\infty}$ for prover such that all plays conform to it satisfy $\sol{\beta} = \ifalse$ for all positions $\beta$ occurring in the play. 
    Since we have $\sol{w} \neq \ifalse$ for terminal words, this means the plays have to be infinite. 
        
    Assume $wX\beta$ is given and it is prover's turn. 
    If there is $Y$ in $\beta$ so that $\sol{Y} = \ifalse$, we can pick any rule $X \to \eta$ and the result will still satisfy $\sol{w \eta \beta} = \ifalse$. 
    If there is no such non-terminal, we know that $\sol{X} = \ifalse$ has to hold. 
    The fixed-point solution to the system of equations satisfies 
    $\sol{X} = \bigwedge_{X \to \eta} \sol{\eta}$.  
    Since the conjunction is $\ifalse$, there is at least one rule $X \to \eta$ with $\sol{\eta} = \ifalse$.  
    If we pick this rule, the resulting position will satisfy $\sol{w \eta \beta} = \ifalse$. 
        
    Assume $wX\beta$ is given and it is refuters's turn. 
    If there is a non-terminal $Y$ in $\beta$ with $\sol{Y} = \ifalse$, the position resulting from any rule will still satisfy $\sol{w \eta \beta} = \ifalse$. 
    If there is no such position, we know that $\sol{X} = \ifalse$. 
    The fixed-point solution to the system of equations satisfies
        $\sol{X} = \bigvee_{X \to \eta} \sol{\eta}$. 
    Since the disjunction is $\ifalse$, for any rule $X \to \eta$ we get $\sol{\eta} = \ifalse$. 
    Therefore, $\sol{w \eta \beta} = \ifalse$ has to hold, no matter which rule $X \to \eta$ refuter chooses.
        
    For the other direction, we show that whenever $\sol{\alpha} \neq \ifalse$ refuter has a strategy such that all plays conform to it are finite (a winning strategy for the non-emptiness game). 
    In this case, prover cannot have a strategy that only generates infinite plays.
        
    We define the \emph{$i$-step attractor} $\Attr_i$ to be the set of non-terminals $X$ such that $\sol{X}^i \neq \ifalse$.
    Note that the $i$-step attractors form a chain $\Attr_0 \subseteq \Attr_1 \subseteq \ldots$\ .
    The chain has to stabilize because the set of non-terminals is finite.
    Given a non-terminal $X$, we define its \emph{attractor level} to be the lowest index $i$ such that $X \in \Attr_i$, and $\infty$ if no such index exists.
    We define the attractor level to be $0$ for terminal symbols $a \in T$.
\newcommand{\absG}{|G|}
    The \emph{attractor level} of a sentential form $\beta$ is defined to be
    \[
        \level(\beta) = \sum_{j = 1, \ldots, |\beta|} {\absG}^{\textit{attractor level of } \beta_j}
        \ .
    \]
    We give a strategy for refuter such that for each play conform to it, the levels of the occurring positions form a strictly decreasing chain. 
    Since such a chain has to be finite, this proves the claim.
        
    Consider $wX\beta$ with $\sol{wX\beta} \neq \ifalse$.
    In particular, $\sol{X} \neq \ifalse$ and $\sol{Y} \neq \ifalse$ for all non-terminals $Y$ in $\beta$. 
    This means $\level{wX\beta} \neq \infty$.
        Let $i$ be the attractor level of $X$.
        
    Assume it is prover's turn.
    We have $\sol{X}^i = \bigwedge_{X \to \eta} \sol{\eta}^{i-1}$. 
    Since $X$ is in the $i$-step attractor, we get $\sol{X}^i \neq \ifalse$ and thus $\sol{\eta}^{i-1} \neq \ifalse$ for all rules $X \to \eta$.
    Hence, any symbol occurring in the right-hand side $\eta$ of a rule for $X$ has attractor level at most $i-1$. 
    Applying such a rule replaces $X$, which contributes ${\absG}^i$ to the level of $w X \beta$, by $\eta$, a sequence with
    \[
            \level{\eta}
            \leq |\eta| {\absG}^{i-1}
            < |G| {\absG}^{i-1}
            = {\absG}^{i}\ .
    \]
    We conclude $\level{w X \beta} > \level{ w \eta \beta}$ for any rule $X \to \eta$.
        
    Assume it is refuter's turn. 
    We have $\sol{X}^i = \bigvee_{X \to \eta} \sol{\eta}^{i-1}$.
    Since $\sol{X}^i \neq \ifalse$, there is a rule $X \to \eta$ with $\sol{\eta}^{i-1} \neq \ifalse$. 
    In particular, all symbols in $\eta$ have attractor level at most $i-1$. 
    If we pick the rule $X \to \eta$, we get $\level{w X \beta} > \level{w \eta \beta}$.
\end{proof}

\subsection{The Infinite Tree of Plays vs. the Fixed-Point Solution}
\label{subsec:semantics}

We discuss how the fixed-point solution to the system of equations relates to the (typically infinite) formula representing the tree of all plays from a given position.
Let $\FT{\alpha}$ be the tree of plays from $\alpha$. 
We argued that we can understand $\FT{\alpha}$ as an infinite negation-free Boolean formula, where inner nodes are disjunctions or conjunctions (logical connectives) and leaves are boxes (atomic propositions).   
We call $\FT{\alpha}$ a \emph{formula tree} to emphasize the fact that we can see it as both, a formula and a tree. 
Note that the outdegree of inner nodes in $\FT{\alpha}$ is bounded by the maximal number of rules for each non-terminal.  
We identify the empty formula tree with $\ifalse$.
To generalize the notions of assignment and value from finite formulas to infinite formula trees, we need a least fixed point that propagates the values from the leaves up the tree.
The following constructions are conservative extensions of the finite case and behave as expected when applied to finite formulas.

Given a formula tree $\FT$, an \emph{evaluation} $e$ is a map from the nodes to $\set{\itrue,\ifalse}$.  
We understand an evaluation as the set of nodes with value $\itrue$. 
This helps us see that the set of evaluations on a fixed tree $\FT$ ordered by inclusion forms a complete lattice: The least element is the empty set, the join is the union, and the meet is the intersection.

Given $e$, we define the \emph{1-step propagation} $p(e)$ to be the evaluation obtained from $e$ by adding 
(1) all disjunctions $n$ so that some successor of $n$ is in $e$ and
(2) all conjunctions $n$ so that all successors of $n$ are in $e$.  
Note that $e \subseteq p(e)$, so $p$ is monotone over tree evaluations.

An assignment of boxes $\nu : \Boxes_A \to \set{\itrue, \ifalse}$ induces an evaluation $e_\nu$
on a formula tree over $\Boxes_A$.  
The leaves $\tboxr$ evaluate to $\nu(\tboxr)$, all other nodes evaluate to $\ifalse$. 
We define the \emph{propagation} of $\nu$ on $\FT$ to be 
the join $\bigsqcup_{i \in \N} p^i (e_\nu)$.
The \emph{value} of $\FT$ under $\nu$ is the value of the root node in the propagation of $\nu$ on $\FT$.
We define implication as usual: $\FT \lleq \FT'$ if under all $\nu$ the value of $\FT'$ is at least the value of $\FT$.

\begin{lemma}
\label{lemma:index_true}
    The value of $\FT$ under $\nu$ is $\itrue$ iff the root of $\FT$ is in $p^{i_0} (e_\nu)$, for some $i_0\in \N$.
\end{lemma}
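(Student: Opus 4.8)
The idea is to unfold the definitions introduced just above the lemma and reduce the statement to the elementary fact that, in a powerset lattice, the join of a family of sets is their union, so that membership in the join is the same as membership in one of the members.

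First, I recall that the evaluations of $\FT$, ordered by inclusion, form the powerset lattice of the node set of $\FT$; in this lattice the join of any family of evaluations is simply their union. Applying this to the family $\{p^i(e_\nu)\}_{i\in\N}$, the propagation $\bigsqcup_{i\in\N} p^i(e_\nu)$ of $\nu$ on $\FT$ coincides with $\bigcup_{i\in\N} p^i(e_\nu)$. (One may additionally note that, since $p$ is monotone and $e \subseteq p(e)$, the evaluations $p^i(e_\nu)$ form an ascending chain, but this is not needed for the argument.)

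Next, by definition the value of $\FT$ under $\nu$ is the value of the root node in the propagation of $\nu$, hence it is $\itrue$ precisely when the root is an element of $\bigcup_{i\in\N} p^i(e_\nu)$, and $\ifalse$ otherwise; if $\FT = \ifalse$ is the empty tree there is no root and both sides of the claimed equivalence are false. Finally, an element lies in $\bigcup_{i\in\N} p^i(e_\nu)$ if and only if it lies in $p^{i_0}(e_\nu)$ for some $i_0 \in \N$. Chaining the equivalences ``value is $\itrue$'' $\iff$ ``root $\in$ propagation'' $\iff$ ``root $\in \bigcup_{i\in\N} p^i(e_\nu)$'' $\iff$ ``root $\in p^{i_0}(e_\nu)$ for some $i_0$'' yields the lemma.

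Every step is a direct consequence of the definitions of evaluation, $1$-step propagation, propagation, and value, so there is no substantial obstacle here; the only point worth spelling out is the identification of the join in the evaluation lattice with the set-theoretic union, which is immediate because that lattice is a powerset lattice. This lemma will then serve as the ``compactness'' observation needed later to match the finitely-computed fixed-point solution against the infinite formula tree.
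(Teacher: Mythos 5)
Your proof is correct and follows essentially the same route as the paper's: both directions reduce to the observation that the join in the evaluation lattice is the set-theoretic union, so the root lies in the propagation $\bigsqcup_{i\in\N} p^i(e_\nu)$ exactly when it lies in some $p^{i_0}(e_\nu)$. Your presentation as a single chain of equivalences (plus the remark on the empty tree) is just a repackaging of the paper's two-direction argument.
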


\begin{proof}
    If $i_0$ exists, the claim follows from 
    $p^{i_0} (e_\nu) \subseteq \bigsqcup_{i \in \N} p^i (e_\nu)$. 
    Assume for all $i$, the root is not in $p^i (e_\nu)$. Since the join is the union, the root node will not be in the propagation.
\end{proof}
Given a formula tree $\FT$, a subset of its nodes $S$ induces the \emph{prefix} $\FT^{S}$ defined by removing from $\FT$ the subtrees with root in $S$. 
We think of a removed subtree as being evaluated to $\ifalse$. 
We simplify the formula by propagating $\ifalse$ upwards from the removed subtrees. 
Formally, $\FT{}{S}$ is created by
\mbox{(1) marking} all subtrees with root in $S$, \
(2) repeating until fixed point:
marking all disjunctions that have all successors marked 
and 
marking all conjunctions that have some successor marked, including their subtrees,
(3) deleting all marked nodes. 
$\FT^{S}$ is again a formula tree, each disjunction and conjunction has at least one successor. 

Since we deal with negation-free formulas, the removed subtrees  will never lead to the value of a tree being $\itrue$ that would have been $\ifalse$ prior to the removal. 
Hence, a larger prefix will be easier to satisfy than a smaller one (created by removing more nodes). 

\begin{lemma}
\label{lemma:remove_superset_implies}
    Let $\FT$ be a formula tree and let $S \subseteq R$ be subsets of its nodes.
    Then $\FT{}{R} \lleq \FT{}{S}$.
\end{lemma}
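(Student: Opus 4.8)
The plan is to compare the propagations of $\nu$ on $\FT{}{S}$ and $\FT{}{R}$ by relating them, node by node, to the propagation of $\nu$ on the original tree $\FT$. The key observation is that $\FT{}{R}$ is obtained from $\FT{}{S}$ by removing a further set of subtrees (those with root in $R\setminus S$, restricted to whatever survives in $\FT{}{S}$), so it suffices to prove the lemma in the special case $R = S \cup \{n\}$ for a single node $n$ still present in $\FT{}{S}$, and then iterate/transitively compose using that $\lleq$ on formula trees is transitive (immediate from the definition via values under all $\nu$). Actually, since the sets may be infinite, I would instead argue directly: fix an arbitrary assignment $\nu$, assume the root of $\FT{}{R}$ evaluates to $\itrue$ under $\nu$, and show the root of $\FT{}{S}$ does too; by Lemma~\ref{lemma:index_true} it is enough to work with the finite approximants $p^i(e_\nu)$.

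The core of the argument is a monotonicity-of-evaluation claim: if $S \subseteq R$, then for every node $n$ that is present in both $\FT{}{R}$ and $\FT{}{S}$ (note every node of $\FT{}{R}$ is a node of $\FT{}{S}$, since removing more subtrees only deletes nodes), and for every $i \in \N$, membership of $n$ in $p^i(e_\nu)$ computed within $\FT{}{R}$ implies membership of $n$ in $p^i(e_\nu)$ computed within $\FT{}{S}$. I would prove this by induction on $i$. The base case $i = 0$ is immediate: the leaves of $\FT{}{R}$ that lie in $e_\nu$ are exactly the leaves of $\FT$ (surviving in $\FT{}{R}$) with $\nu(\tboxr) = \itrue$, and every such leaf also survives in $\FT{}{S}$ and is assigned $\itrue$ there. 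For the induction step, take a node $n$ added at step $i+1$ in $\FT{}{R}$. If $n$ is a disjunction, some successor $m$ of $n$ in $\FT{}{R}$ is already in $p^i(e_\nu)$; but the successors of $n$ in $\FT{}{S}$ include $m$ (the surviving children in $\FT{}{R}$ are a subset of those in $\FT{}{S}$), so by the induction hypothesis $m \in p^i(e_\nu)$ in $\FT{}{S}$, hence $n \in p^{i+1}(e_\nu)$ there. If $n$ is a conjunction, all successors of $n$ in $\FT{}{R}$ lie in $p^i(e_\nu)$; here the subtlety is that $n$ may have \emph{more} children in $\FT{}{S}$ than in $\FT{}{R}$ — but that cannot happen, because the prefix construction only ever marks-and-deletes a conjunction when \emph{some} successor is marked, deleting the whole conjunction with it; so if $n$ survives in $\FT{}{R}$, none of its successors were deleted in passing from $\FT{}{S}$ to $\FT{}{R}$, i.e. $n$ has exactly the same children in both trees. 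Thus the induction hypothesis applies to each child and $n \in p^{i+1}(e_\nu)$ in $\FT{}{S}$. Applying this with $n$ the root (assuming the root of $\FT{}{R}$ is not deleted, i.e. $\FT{}{R}\neq\ifalse$; if it is deleted then $\FT{}{R}=\ifalse$ is the least element and the claim is trivial) and using Lemma~\ref{lemma:index_true} in both directions yields the value of $\FT{}{S}$ under $\nu$ is $\itrue$ whenever the value of $\FT{}{R}$ is. Since $\nu$ was arbitrary, $\FT{}{R} \lleq \FT{}{S}$.

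The main obstacle I anticipate is getting the bookkeeping of the prefix construction exactly right, specifically the claim that a surviving conjunction node has identical child sets in $\FT{}{S}$ and $\FT{}{R}$ (and more generally that the "surviving children" relationship behaves as claimed). This hinges on the precise semantics of step (2) in the definition of $\FT^{S}$ — that marking a conjunction drags its entire subtree along, so partial deletion of a conjunction's children never occurs — and on the fact that a disjunction survives only if at least one child survives, so its surviving-child set in the larger-deletion tree is a nonempty subset of that in the smaller-deletion tree. Once these structural facts are pinned down, the induction on the propagation index is routine.
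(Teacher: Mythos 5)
Your proof is correct and follows essentially the same route as the paper's: both establish that the $i$-step propagations of $e_\nu$ on $\FT{}{R}$ are contained in those on $\FT{}{S}$ and pass to the limit to compare the root values. In fact your write-up is more careful than the paper's one-line ``hence'': containment of propagations does not follow from node containment alone, and your observation that a surviving conjunction keeps all of its children while a surviving disjunction keeps a nonempty subset of them is exactly the structural fact needed to make the inductive step for $p^{i+1}$ go through.
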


\begin{proof}
The nodes of $\FT{}{R}$ form a subset of the nodes of $\FT{}{S}$.
Hence, for all assignments $\nu$ we have 
    $e_\nu (\FT{}{R}) \subseteq e_\nu (\FT{}{S})$ and
    $p^i\left(e_\nu (\FT{}{R})\right) \subseteq p^i\left(e_\nu (\FT{}{S})\right)$
    for all $i$.
    In the limit, this yields 
    $
        \bigsqcup_{i \in \N} p^i\left(e_\varphi (\FT{}{R})\right)
        \subseteq
        \bigsqcup_{i \in \N} p^i\left(e_\varphi (\FT{}{S})\right)
    $.
    In particular, the value of the root node of $\FT{}{S}$ is at least the value of the root node of $\FT{}{R}$.
\end{proof}
Let $S_i$ be the set of nodes in $\FT$ of depth strictly greater than $i$. 
We write $\FT{}{i}$ for $\FT{}{S_i}$ and call this tree the \emph{cut at level $i$}. 
The previous lemma implies $\FT{\alpha}{i} \lleq \FT{\alpha}{i+k}$ for all $i, k \in \N$. 
The value of the whole tree $\FT$ is $\itrue$ if and only if there is a finite cut such that the value of the prefix is already $\itrue$.

\begin{lemma}
    \label{lemma:index_true_cut}
    The value of $\FT$ under $\nu$ is $\itrue$ iff the value of $\FT{}{i_0}$ is $\itrue$, for some $i_0\in \N$.
\end{lemma}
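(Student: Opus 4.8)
The plan is to relate the value of the infinite formula tree $\FT$ under $\nu$ to the values of its finite cuts $\FT{}{i}$, using the characterization of "value $\itrue$" in terms of membership of the root in $p^{i_0}(e_\nu)$ from Lemma~\ref{lemma:index_true}. The two directions are genuinely different in character: the "if" direction is an easy monotonicity argument, while the "only if" direction requires turning a finite certificate for the whole tree into a finite certificate for a finite cut.

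For the "if" direction, suppose the value of $\FT{}{i_0}$ is $\itrue$ for some $i_0$. Since $\FT{}{i_0} = \FT{}{S_{i_0}}$ is obtained by removing subtrees (evaluating them to $\ifalse$) and we are in the negation-free setting, I would invoke Lemma~\ref{lemma:remove_superset_implies} with $S = \emptyset \subseteq R = S_{i_0}$, which gives $\FT{}{i_0} \lleq \FT$ (identifying $\FT{}{\emptyset}$ with $\FT$). By the definition of $\lleq$ on formula trees, the value of $\FT$ under $\nu$ is then at least the value of $\FT{}{i_0}$, hence $\itrue$.

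For the "only if" direction, assume the value of $\FT$ under $\nu$ is $\itrue$. By Lemma~\ref{lemma:index_true}, the root of $\FT$ lies in $p^{i_0}(e_\nu)$ for some $i_0 \in \N$. The key observation is that membership of the root in $p^{i_0}(e_\nu)$ is witnessed by only finitely much of the tree: unwinding the definition of the $1$-step propagation $p$, one shows by induction on $i$ that if a node $n$ is in $p^i(e_\nu)$, then its membership depends only on nodes at depth at most $\mathit{depth}(n) + i$ below $n$ — at each propagation step we descend at most one level, and each inner node has finite outdegree (bounded by the maximal number of rules per non-terminal), so the witnessing set of nodes is finite. Concretely, I would prove: if $n \in p^i(e_\nu)$ in $\FT$ and $S$ contains no node of depth $\le \mathit{depth}(n) + i$ (measured from $n$), then $n \in p^i(e_{\nu})$ in the prefix $\FT{}{S}$ as well, since the propagation only ever inspects successors up to that depth and those successors survive in $\FT{}{S}$. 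Applying this with $n$ the root and $S = S_{i_0}$ (all nodes of depth $> i_0$), the root lies in $p^{i_0}(e_\nu)$ computed in $\FT{}{i_0}$, and then by the other direction of Lemma~\ref{lemma:index_true} the value of $\FT{}{i_0}$ under $\nu$ is $\itrue$.

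The main obstacle is the bookkeeping in the "only if" direction: one must be careful that the way $\FT{}{i_0}$ is constructed (marking and deleting subtrees, propagating $\ifalse$ upward so that every inner node retains a successor) does not interfere with the upward propagation of $\itrue$ via $p$. The point to check is that deleting a subtree only ever removes nodes that were evaluated to $\ifalse$ under the propagation of $\nu$ (since we identify removed subtrees with $\ifalse$), so it cannot destroy any $\itrue$-labelled node that the root's certificate relies on; and since the certificate has depth at most $i_0$, it lies entirely within the undeleted part. Once this compatibility between the two "$\ifalse$-propagations" (the structural one defining $\FT{}{i_0}$ and the semantic one defining $p$) is made precise, the argument closes. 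A small detail is to handle the degenerate case where the relevant cut or prefix is empty, which by convention we identify with $\ifalse$ — but this case only arises when the root itself is removed, which does not happen for $S_{i_0}$ since the root has depth $0 \le i_0$.
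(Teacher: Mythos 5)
Your proof is correct and follows essentially the same route as the paper: the ``if'' direction is the monotonicity of propagation (which you obtain by reusing Lemma~\ref{lemma:remove_superset_implies} with $S=\emptyset$, where the paper argues directly from $e_\nu(\FT{}{i_0}) \subseteq e_\nu(\FT{}{})$), and the ``only if'' direction is the same finite-certificate argument, truncating $e_\nu$ below depth $i_0$ and checking that the structural $\ifalse$-propagation defining the cut cannot interfere with the certificate. One small inaccuracy worth noting: a deleted subtree \emph{can} contain nodes that are $\itrue$ under the propagation of $\nu$ (e.g.\ a true leaf sitting below a marked conjunction), so the right statement is not that deletion only removes $\ifalse$-valued nodes, but that the root's certificate never passes through a marked node -- a marked node is never $\itrue$ under the propagation of the truncated evaluation -- which is exactly the compatibility check you announce, so the argument closes as you describe.
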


\begin{proof}
    By Lemma \ref{lemma:index_true}, it is sufficient to show that for all $i_0 \in \N$, the root of $\FT$ is in $p^{i_0} (e_\nu)$ if and only if the value of $\FT{}{i_0}$ is $\itrue$.
        
    Assume the root of $\FT$ is in $p^{i_0} (e_\nu)$. 
    Since each step of propagation can only propagate the value of a node to its immediate predecessors, there is a set of leaves of height at most $i_0$ assigned to $\itrue$ by $\nu$ that are sufficient to cause the root to be in $p^{i_0} (e_\nu)$.
    Consider the evaluation $e'$ that coincides with $e_\nu$ on nodes up to height at most $i_0$ and evaluates all other nodes to $\ifalse$. 
    We obtain $e' \subseteq e_\nu$, and by the monotonicity of propagation $p^{i_0} (e') \subseteq p^{i_0} (e_\nu)$. 
    Since $\itrue$ got propagated to the root by leaves of depth $i_0$ and those leaves are still evaluated to $\itrue$ by $e'$, the root is contained in $p^{i_0} (e')$. 
    To finish the proof, note that the semantics of removing nodes of depth greater than $i_0$ is defined by propagating $\ifalse$ from nodes of depth greater than $i_0$. This coincides with the behavior of $p^{i_0} (e')$.
    
    Assume the value of $\FT{}{i_0}$ is $\itrue$, which means the root of $\FT{}{i_0}$ is in
    $\bigsqcup_{i \in \N} p^{i}( e_\nu ( \FT{}{i_0} ))$. Since $\FT{}{i_0}$ has height at most $i_0$, we know
    $\bigsqcup_{i \in \N} p^{i}( e_\nu ( \FT{}{i_0} )) = p^{i_0}( e_\nu ( \FT{}{i_0} ))$.
    By $e_\nu (\FT{}{i_0}) \subseteq e_\nu (\FT{}{})$ we get 
    $p^{i_0}\left(e_\nu (\FT{}{i_0})\right) \subseteq p^{i_0}\left(e_\nu (\FT{}{})\right)$, as desired.
\end{proof}
The next lemma shows that the composition of two trees for sentential forms, each cut at some level, is a prefix of the tree for the concatenation of the sentential forms, cut at the sum of levels.
The composition appends the second tree to every leaf of the first, as before.

\begin{lemma}
\label{lemma:composition_subtree}
    Let $\alpha, \beta$ be sentential forms, $i,j \in \N$.
    The tree $\FT{\alpha}{i};\FT{\beta}{j}$ is a prefix of $\FT{\alpha \beta}{i+j}$.
\end{lemma}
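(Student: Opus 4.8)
The plan is to analyze the left-derivation structure of plays from $\alpha\beta$ and match it against the recursive construction of the cut trees. Recall that a maximal play from $\alpha\beta$ factors into a first phase that turns $\alpha$ into some terminal word $w$ (a maximal play from $\alpha$), followed by a second phase that is a play from $w\beta$, which is a play from $\beta$ with the harmless prefix $w$ glued on front. At the level of formula trees, $\FT{\alpha\beta}$ is obtained from $\FT{\alpha}$ by replacing each leaf — a terminal word $w$, carrying the box $\tbox{w}$ — by a copy of $\FT{\beta}$ whose leaf boxes are all prefixed by $\tbox{w}$; this is precisely the tree-level composition $\FT{\alpha};\FT{\beta}$. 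So the \emph{uncut} identity $\FT{\alpha\beta} = \FT{\alpha};\FT{\beta}$ holds by definition of the arena and of composition. The task is to track what the two cut operations do to this identity.

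First I would fix the terminology: for a node in $\FT{\alpha\beta}$, I claim that its depth decomposes cleanly along the factorization. Every inner node of $\FT{\alpha\beta}$ either lies in the "$\alpha$-part" (its sentential form still has a non-terminal inside the $\alpha$-portion, so its depth as a node of $\FT{\alpha\beta}$ equals its depth in $\FT{\alpha}$), or it lies below some leaf $w$ of $\FT{\alpha}$, in which case its depth in $\FT{\alpha\beta}$ equals $d_\alpha(w) + d_\beta$, where $d_\alpha(w)$ is the depth of the leaf $w$ in $\FT{\alpha}$ and $d_\beta$ is its depth inside the attached copy of $\FT{\beta}$. With this, I would argue: a node of $\FT{\alpha}{i};\FT{\beta}{j}$ survives both cuts, meaning it sits at depth $\le i$ in $\FT{\alpha}$, or it lies under a surviving leaf $w$ (so $d_\alpha(w) \le i$) at depth $\le j$ inside $\FT{\beta}$; in either case its depth in $\FT{\alpha\beta}$ is $\le i$ or $\le i + j$, hence $\le i+j$. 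So every node of $\FT{\alpha}{i};\FT{\beta}{j}$ is a node of $\FT{\alpha\beta}{i+j}$. This inclusion of node sets is the heart of the statement.

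It then remains to check that this node inclusion is actually witnessing the "is a prefix of" relation in the precise sense of the paper, i.e. that $\FT{\alpha}{i};\FT{\beta}{j}$ is obtained from $\FT{\alpha\beta}{i+j}$ by the marking-and-deleting procedure for some node set $S$ — equivalently, that the node set of $\FT{\alpha}{i};\FT{\beta}{j}$ is downward closed in $\FT{\alpha\beta}{i+j}$ and closed under the propagation of $\ifalse$ (every disjunction keeps all of its surviving successors, every conjunction that loses a successor is itself removed). Downward closedness is clear from the depth computation. For the propagation-closure I would note that cutting $\FT{\alpha}$ at level $i$ already removes subtrees in the prescribed way, and cutting each attached $\FT{\beta}$-copy at level $j$ does likewise; I must check that these two removals, performed inside the single tree $\FT{\alpha\beta}$, compose into one admissible removal — here the subtlety is that a leaf $w$ of $\FT{\alpha}$ that gets cut drags its whole $\FT{\beta}$-subtree with it, but that is consistent since in $\FT{\alpha}{i}$ the leaf $w$ is simply absent, so its $\FT{\beta}$-copy is absent in the composition too. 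A small case distinction on whether a conjunction/disjunction node of $\FT{\alpha\beta}$ is an $\alpha$-node or a $\beta$-node, using that composition does not alter the Boolean connective at any node, finishes this.

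The main obstacle I anticipate is purely bookkeeping: reconciling the two notions of "cut" (which is defined via the marking procedure of Lemma~\ref{lemma:remove_superset_implies}) with the recursive tree surgery of composition, and in particular making the depth-additivity claim airtight when $\alpha$ itself contains several non-terminals (so that "the $\alpha$-part of a sentential form $\alpha'\beta$" must be defined carefully, tracking which occurrences descend from $\alpha$ versus from $\beta$). Once depth-additivity along the factorization is stated and justified, everything else is a routine verification that the surviving-node set is a legitimate prefix, and an appeal to the definition of "prefix" — no separate evaluation argument is needed, since being a prefix is a structural (not semantic) property.
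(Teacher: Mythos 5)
Your overall decomposition coincides with the paper's (factor $\FT{\alpha\beta}{}$ as $\FT{\alpha}{};\FT{\beta}{}$ and add depths along the factorization), but there is a genuine gap at the step you yourself call the heart of the argument. From ``every node of $\FT{\alpha}{i};\FT{\beta}{j}$ has depth at most $i+j$ in $\FT{\alpha\beta}{}$'' you conclude that it is therefore a node of $\FT{\alpha\beta}{i+j}$. That inference is invalid: the cut at level $i+j$ is \emph{not} the restriction to nodes of depth at most $i+j$. By the definition of $\FT{}{S}$, the initial marking of the deep nodes is propagated upwards --- a conjunction with one marked successor is marked together with its entire subtree --- so a node of small depth can perfectly well be deleted from $\FT{\alpha\beta}{i+j}$ because some descendant of depth greater than $i+j$ triggers the marking of an ancestor conjunction. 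Ruling this out is precisely the non-trivial content of the lemma, and it is where the paper's proof spends its main effort: it takes the deepest node on a purportedly surviving branch that gets marked, observes that it must be a conjunction triggered by nodes of depth greater than $i+j$, and shows that every such trigger corresponds to a node that is already removed when forming $\FT{\alpha}{i}$ or $\FT{\beta}{j}$, so the same marking occurs in the composition and the branch could not have been there --- a contradiction. Your proposal contains no counterpart to this argument.

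The ingredients to repair the gap are in fact present in your ``propagation-closure'' step, but you deploy them only to certify the prefix structure \emph{after} taking the node inclusion for granted. The workable order is the reverse: first establish that the node set $V$ of $\FT{\alpha}{i};\FT{\beta}{j}$ is downward closed, lies entirely at depth at most $i+j$, and satisfies that every conjunction in $V$ retains \emph{all} of its $\FT{\alpha\beta}{}$-successors in $V$ while every disjunction retains at least one (these follow because $\FT{\alpha}{i}$ and $\FT{\beta}{j}$ are themselves prefixes and composition preserves connectives and successor structure). An induction on the propagation steps then shows that no node of $V$ is ever marked when computing $\FT{\alpha\beta}{i+j}$, which yields the node inclusion and the prefix property simultaneously. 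Note also that your stated closure condition for disjunctions (``keeps all of its surviving successors'') is not what the definition of prefix requires --- a prefix may drop successors of a disjunction as long as one remains, and the composition genuinely does drop such branches, for instance those passing through leaves of $\FT{\alpha}{}$ whose depth lies strictly between $i$ and $i+j$.
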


\begin{proof}
    By the definition of relational composition, a branch of $\FT{\alpha}{i};\FT{\beta}{j}$ can be decomposed into a branch of $\FT{\alpha}{i}$ of length at most $i$ and a branch of $\FT{\beta}{j}$ of length at most $j$.  
    The total length of the branch is at most $i+j$, which means it will also occur in $\FT{\alpha \beta}{i+j}$, unless other nodes that were removed in $\FT{\alpha \beta}{i+j}$ trigger some node on the branch to be marked. 
    Towards a contradiction, take the deepest node on the branch that gets marked triggered by a set of nodes of depth greater than $i+j$ being marked. 
    Since it got marked although it has one successor that will not be marked, namely the one on the branch, it has to be a conjunction. 
    Each triggering node is either in the part of the tree corresponding to the derivation process of $\alpha$ or in the part of the tree corresponding to the derivation process of $\beta$, i.e.\ there is a corresponding node in $\FT{\alpha}{}$ or $\FT{\beta}{}$. 
    In the first case, the corresponding node has depth $ i+j +1 > i$, so it will be removed in $\FT{\alpha}{i}$. 
    In the second case, it either occurs on a branch of $\FT{\beta}{}$ that was appended to a branch of $\FT{\alpha}{}$ of length greater than $i$, which will be removed in $\FT{\alpha}{i}$, or it occurs on a branch of length greater than $j$ in $\FT{\beta}{}$, which will be removed in $\FT{\beta}{j}$.
    In any case, all triggering nodes will also be removed to obtain $\FT{\alpha}{i};\FT{\beta}{j}$, which would lead to the removal of the node on the branch in $\FT{\alpha}{i};\FT{\beta}{j}$.
    This is a contradiction to the fact that the branch appears there.
        
    To obtain $\FT{\alpha}{i};\FT{\beta}{j}$ from $\FT{\alpha \beta}{i+j}$ by removing nodes, note that any branch of $\FT{\alpha \beta}{i+j}$ decomposes into a branch of $\FT{\alpha}{i+j}$ and a branch of $\FT{\beta}{i+j}$ so that their total length is at most $i+j$. 
    If the length of the $\alpha$-part is longer than $i$, remove the first node of depth strictly greater than $i$ (this will also remove the $\beta$-part potentially appended to it).
    If the length of the $\beta$-part is longer than $j$, remove the first node of relative depth strictly greater than $j$.
\end{proof}
Combined with Lemma \ref{lemma:remove_superset_implies}, we obtain 
$\FT{\alpha}{i};\FT{\beta}{j} \lleq \FT{\alpha \beta}{i+j}$.

We can now prove the fundamental correspondence between the $i^{\text{th}}$ Kleene approximant and the formula tree cut at level $i$. 
We do not get a precise result like $\sol{\alpha}^i \lsim \FT{\alpha}{f(i)}$. 
This is due to the fact that exploring the tree for one more level and doing one step of Kleene iteration behave differently.
Exploring the tree for one more level will consider one more derivation step in each branch. 
This derivation step will be applied only to the leftmost non-terminal of the sentential form forming the last node in a branch of length $i$. 
Doing one step of Kleene iteration will replace $\sol{X}^i$ by $\sol{X}^{i+1}$ for every non-terminal, which means applying one derivation to each non-terminal that occurs at level $i$. 
A sentential form obtained from $\alpha$ and represented by the $i^{\text{th}}$ Kleene approximation has at most $|\alpha| |G|^i$ symbols. 
Indeed, in each step, we replace a non-terminal by at most the number of symbols in the largest right-hand side of any rule in $G$. 
We can think the Kleene iteration as exploring the trees for all non-terminals simultaneously and composing them to get the formula for $\alpha$, instead of just exploring the tree for $\alpha$ directly.

\begin{lemma}
\label{lemma:ith_sol_vs_ith_tree}
    $\FT{\alpha}{i}\lleq\sol{\alpha}^i\lleq\FT{\alpha}{ |\alpha| |G|^i}$.
\end{lemma}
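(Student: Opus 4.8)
The statement $\FT{\alpha}{i}\lleq\sol{\alpha}^i\lleq\FT{\alpha}{|\alpha||G|^i}$ is a double inequality, so I would prove each bound separately, both by induction on $i$. The key tool in each induction step is Lemma~\ref{lemma:composition_subtree} (and its consequence $\FT{\alpha}{i};\FT{\beta}{j}\lleq\FT{\alpha\beta}{i+j}$), together with the monotonicity of relational composition (Lemma~\ref{lemma:leq_monotonic}), which lets me push the inductive hypothesis through the operations $\wedge$, $\vee$, and~$;$. I would first reduce the sentential-form statement to the non-terminal statement: since $\sol{\alpha}^i=\sol{\alpha_1}^i;\cdots;\sol{\alpha_n}^i$ with $\alpha=\alpha_1\cdots\alpha_n$, and since $\FT{\alpha}{i}$ can be related to the composition $\FT{\alpha_1}{i};\cdots;\FT{\alpha_n}{i}$ via iterated application of Lemma~\ref{lemma:composition_subtree}, it suffices to establish the bounds for a single non-terminal $X$ and then compose. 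For the upper bound the exponents add up: composing $n\le|\alpha|$ trees each cut at level $|G|^i$ gives a prefix of $\FT{\alpha}{|\alpha||G|^i}$, which matches the claimed bound. For terminals and $\varepsilon$ both sides are just $\tbox{a}$ resp. $\id$, so the base of this reduction is trivial.

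**The lower bound $\FT{\alpha}{i}\lleq\sol{\alpha}^i$.** Induction on $i$. For $i=0$: $\FT{\alpha}{0}$ is the tree cut to just its root; if $\alpha$ is a terminal word it equals $\tbox{\alpha}=\sol{\alpha}^0$, and if $\alpha$ contains a non-terminal the cut leaves a single unexpanded leaf — but actually the relevant observation is that $\sol{X}^0=\ifalse$, and $\FT{X}{0}$ is a formula tree consisting of a single node which under the propagation semantics we must check is also $\lleq\ifalse$; more carefully, cutting at level $0$ removes everything below the root, so for a non-terminal root the prefix is the empty formula tree, identified with $\ifalse$ — this needs a short sanity check against the definition of $\FT{}{S}$. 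For the step, consider $X\in N_\square$ with rules $X\to\eta_1,\dots,\eta_k$. The tree $\FT{X}{i+1}$ has the root $X$ (a conjunction) with children the trees $\FT{\eta_j}{i}$. Hence $\FT{X}{i+1}=\bigwedge_j \FT{\eta_j}{i}\lleq\bigwedge_j\sol{\eta_j}^i=\sol{X}^{i+1}$, where the middle inequality uses the induction hypothesis applied to each $\eta_j$ (a sentential form, handled by the composition reduction above) and monotonicity of $\wedge$. The case $X\in N_\round$ is identical with $\vee$. The one subtlety is that "$\FT{X}{i+1}$ is literally $\bigwedge_j\FT{\eta_j}{i}$" requires matching the cut operation with the recursive unfolding of the play tree — I would spell this out by noting that the subtree of $\FT{X}{}$ rooted at child $j$ is exactly $\FT{\eta_j}{}$, and cutting the whole thing at $i+1$ cuts each such subtree at $i$.

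**The upper bound $\sol{\alpha}^i\lleq\FT{\alpha}{|\alpha||G|^i}$.** Again induction on $i$, reduced to a single non-terminal $X$ where we want $\sol{X}^i\lleq\FT{X}{|G|^i}$. Base case $i=0$: $\sol{X}^0=\ifalse\lleq$ anything. Step: $\sol{X}^{i+1}=\bigstar_j\sol{\eta_j}^i$ (with $\bigstar=\bigwedge$ or $\bigvee$). By induction hypothesis and the composition reduction, $\sol{\eta_j}^i\lleq\FT{\eta_j}{|\eta_j||G|^i}\lleq\FT{\eta_j}{|G|\cdot|G|^i}=\FT{\eta_j}{|G|^{i+1}}$, using $|\eta_j|\le|G|$ and Lemma~\ref{lemma:remove_superset_implies} for the monotonicity of cutting in the level. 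Then $\sol{X}^{i+1}=\bigstar_j\sol{\eta_j}^i\lleq\bigstar_j\FT{\eta_j}{|G|^{i+1}}$, and the right-hand side is precisely $\FT{X}{|G|^{i+1}+1}$ — wait, the root of $\FT{X}{}$ adds one more level, so I actually get $\bigstar_j\FT{\eta_j}{|G|^{i+1}}$ as the children of a root cut at $|G|^{i+1}+1$, which is $\le$ (as a cut level, hence $\lleq$ by Lemma~\ref{lemma:remove_superset_implies}) $\FT{X}{|G|^{i+1}}$ only if... — here I need to be slightly careful about the off-by-one; the clean way is to absorb it by noting $|G|\ge 2$ so $|G|^{i+1}\ge|G|^i\cdot|G|\ge|G|^{i+1}$ with room to spare, or simply track the bound as $(|\alpha|)(|G|^i)$ where the definition of $\level$ and the "at most $|\alpha||G|^i$ symbols" remark in the paragraph before the lemma already accounts for this growth.

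**Main obstacle.** The genuine difficulty is not the inductive skeleton — that is routine given Lemmas~\ref{lemma:leq_monotonic}, \ref{lemma:remove_superset_implies}, and~\ref{lemma:composition_subtree} — but rather making precise the identification of syntactic tree operations with the algebraic operations on formulas: namely that $\FT{X}{i+1}$, as a formula tree, is logically equivalent (under the propagation semantics, so $\lleq$ in both directions) to the formula $\bigstar_{X\to\eta}\FT{\eta}{i}$, and that $\FT{\alpha\beta}{}$ decomposes compatibly with $\FT{\alpha}{};\FT{\beta}{}$ for multi-symbol $\alpha$. This is where the asymmetry noted in the prose before the lemma bites: one Kleene step expands every non-terminal at level $i$ simultaneously, whereas one tree level expands only the leftmost one, which is exactly why the upper bound needs the factor $|\alpha||G|^i$ rather than $|\alpha|+i$. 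I would handle this by carefully invoking the "at most $|\alpha||G|^i$ symbols after $i$ steps" bound to control how deep the tree for $\alpha$ must be unfolded to dominate $\sol{\alpha}^i$, and by iterating Lemma~\ref{lemma:composition_subtree} over the $\le|\alpha|$ factors of $\alpha$ so the per-factor depth $|G|^i$ sums to $|\alpha||G|^i$.
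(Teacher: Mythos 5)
Your proposal follows essentially the same route as the paper: an outer induction on $i$ with an inner reduction from sentential forms to single symbols via Lemma~\ref{lemma:composition_subtree}, pushing the inductive hypothesis through $\wedge$, $\vee$, and $;$ by monotonicity, with the identification $\FT{X}{i}\lsim\bigstar_{X\to\eta}\FT{\eta}{i-1}$ justified by the fact that formula trees are built by applying grammar rules. The off-by-one you flag in the upper bound is resolved exactly as you suspected: the paper uses $|\eta|\leq|G|-1$ so that $|\eta|\,|G|^{i-1}+1\leq|G|^{i}$, which absorbs the extra level contributed by the root.
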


\begin{proof}
    We prove the statement using induction on $i$.
    
    Let $i = 0$. We consider two cases.
    If $\alpha$ is a terminal word, $\alpha = w \in T^*$, then the tree of plays actually only consists of the root node labeled by $\tbox{w}$.
    Therefore, we have
    \[
        \FT{w}{|w| |G|^0}
        = \FT{w}{|w|}
        = \FT_{w}^0
        = \FT_{w}
        = \tbox{w}
        \lsim \sol{w}^0\ .
    \]
    If $\alpha$ contains a non-terminal, then $\FT_{\alpha}$ has height at least $1$.
    This means the root node is a conjunction or disjunction, which will be deleted since all its successors of depth larger than $0$ will be marked.
    The empty tree remains, which we identify with $\ifalse$.
    Furthermore, \mbox{$\sol{\alpha}^0 = \ifalse$}, since we initialize the solution with $\ifalse$ for every non-terminal.
    As the implication $\FT{\alpha}{i} \lleq \FT{\alpha}{|\alpha| |G|^i}$ always holds, we obtain
    $
        \sol{\alpha}
        \lsim \FT{\alpha}{0}
        \lleq \FT{\alpha}{|\alpha| |G|^0}
    $.

    Let $i > 0$.
    We prove the statement by an inner induction on the structure of $\alpha$.
    Assume $\alpha = X$ is a single non-terminal.
    (If $\alpha = \varepsilon$ or $\alpha = a \in T$, the proof for $i = 0$ carries over.)
    Assume $X$ is owned by prover.
    By the induction hypothesis,
    \[
        \FT{\eta}{i-1}
        \lleq
        \sol{\eta}^{i-1}
        \lleq
        \FT{\eta}{|\eta| |G|^{i-1}}
    \]
    for any rule $X \to \eta$.
    By the monotonicity of conjunction,
    \[
        \bigwedge_{X \to \eta}
        \FT{\eta}{i-1}
        \lleq
        \bigwedge_{X \to \eta} \sol{\eta}^{i-1}
        \lleq
        \bigwedge_{X \to \eta}
        \FT{\eta}{|\eta| |G|^{i-1}}\ .
    \]
    We have 
    \mbox{$\sol{X}^i = \bigwedge_{X \to \eta} \sol{\eta}^{i-1}$}\ ,
    and since formula trees are created by applying grammar rules, we know that
    $
        \bigwedge_{X \to \eta} \FT{\eta}{i-1}
        =
        \FT{X}{i}
    $.
    Furthermore, 
    \[
        \bigwedge_{X \to \eta}
        \FT{\eta}{|\eta| |G|^{i-1}}
        =
        \FT{X}{|\eta| |G|^{i-1} +1}
        \lleq
        \FT_{X}^{|G|^i}\ .
    \]
    The implication is due to $|\eta| \leq |G| - 1$ and Lemma~\ref{lemma:remove_superset_implies}. 
    Altogether, 
    $
        \FT{X}{i}
        \lleq
        \sol{X}^i
        \lleq
        \FT{X}{|G|^i}
    $.     
    If $X$ is owned by refuter, the proof is similar (disjunction is also monotonic).
        
    Let $\alpha = x.\gamma$.
    We may assume by the inner induction that
    $
        \FT{\gamma}{i}
        \lleq
        \sol{\gamma}^{i}
        \lleq
        \FT{\gamma}{|\gamma| |G|^i }
    $
    holds.
    As in the base case, we know
    $
        \FT{x}{i}
        \lleq
        \sol{x}^{i}
        \lleq
        \FT{x}{|G|^i}
    $.
    By an argumentation analogous to the one used in the proof  Lemma~\ref{lemma:composition_subtree}, $\FT{x \gamma}{i}$ is a prefix of
    $\FT{x}{i};\FT{\gamma}{i}$, so
    \[
        \FT{x \gamma}{i}
        \lleq \FT{x}{i} ; \FT{\gamma}{i}
        \lleq \sol{x}^i ; \sol{\gamma}^i
        = \sol{x\gamma}^i\ .
    \]
    The first implication is Lemma~\ref{lemma:remove_superset_implies}, the second is by monotonicity of composition. 
    The equality holds by definition.

    Similarly,
    $
        \sol{x}^i ; \sol{\gamma}^i
        \lleq
        \FT{x}{|G|^{i}} ; \FT{\gamma}{|\gamma| |G|^i}
    $ holds by monotonicity,
    and $\FT{x}{|G|^{i}} ; \FT{\gamma}{|\gamma| |G|^i}$
    is a prefix of
    $\FT{x \gamma}{(|\gamma| + 1) |G|^{i}}$ by Lemma~\ref{lemma:composition_subtree},
    so we conclude
    $
        \sol{x\gamma}^i
        \lleq
        \FT{x\gamma}{(|\gamma| + 1) |G|^{i}}
        =
        \FT{x\gamma}{|x \gamma| |G|^{i}}
    $.
\end{proof}
We now lift the correspondence between the $i^{\text{th}}$ Kleene approximant and the tree cut at level~$i$ to a correspondence between the (usually infinite) formula tree and the fixed-point solution. 
The result is an exact characterization of the fixed-point solution.

\begin{theorem}
    $\FT_\alpha \lsim \sol{\alpha}$.
\end{theorem}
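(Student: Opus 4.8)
The plan is to derive the theorem directly from Lemma~\ref{lemma:ith_sol_vs_ith_tree} by taking suprema along the Kleene chain and matching it with the limit characterizing the value of the infinite formula tree. The key observation is that $\sol{\alpha} = \bigsqcup_{i \in \N} \sol{\alpha}^i$, where the supremum is taken in $\BFE$ with respect to $\lleq$ (this follows from the fact that $\sol{}$ is the least fixed point computed as $\bigsqcup_i f^i(\bot)$, together with the definition $\sol{\alpha\beta} = \sol{\alpha};\sol{\beta}$ and monotonicity of composition, so that $\sol{\alpha}^i$ is indeed the $i$-th approximant of $\sol{\alpha}$). Dually, by Lemma~\ref{lemma:index_true_cut}, the value of $\FT_\alpha$ under any assignment $\nu$ is $\itrue$ iff the value of some finite cut $\FT{\alpha}{i_0}$ is $\itrue$; since the cuts form a $\lleq$-increasing chain $\FT{\alpha}{0} \lleq \FT{\alpha}{1} \lleq \cdots$ by Lemma~\ref{lemma:remove_superset_implies}, this says precisely that $\FT_\alpha$ is the supremum of the chain of its finite cuts in the appropriate ordering.

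First I would prove the inequality $\FT_\alpha \lleq \sol{\alpha}$. Fix an assignment $\nu$ and suppose the value of $\FT_\alpha$ under $\nu$ is $\itrue$. By Lemma~\ref{lemma:index_true_cut} there is $i_0$ with the value of $\FT{\alpha}{i_0}$ equal to $\itrue$. By the left inequality of Lemma~\ref{lemma:ith_sol_vs_ith_tree}, $\FT{\alpha}{i_0} \lleq \sol{\alpha}^{i_0}$, so $\sol{\alpha}^{i_0}$ also evaluates to $\itrue$ under $\nu$. Since $\sol{\alpha}^{i_0} \lleq \sol{\alpha}$, the value of $\sol{\alpha}$ under $\nu$ is $\itrue$. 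As $\nu$ was arbitrary, $\FT_\alpha \lleq \sol{\alpha}$.

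For the converse $\sol{\alpha} \lleq \FT_\alpha$, I would argue similarly: since the Kleene chain stabilizes on the finite domain $\BFE$, there is an index $i^*$ with $\sol{\alpha} \lsim \sol{\alpha}^{i^*}$. By the right inequality of Lemma~\ref{lemma:ith_sol_vs_ith_tree}, $\sol{\alpha}^{i^*} \lleq \FT{\alpha}{|\alpha||G|^{i^*}}$. Moreover, for every subset $S$ of nodes, $\FT{\alpha}{S} \lleq \FT_\alpha$ — this is the finite-cut analogue of the fact (used implicitly throughout Section~\ref{subsec:semantics} and following from negation-freeness) that removing subtrees can only decrease the value; concretely, one checks $e_\nu(\FT{\alpha}{S}) \subseteq e_\nu(\FT_\alpha)$ and hence $p^i(e_\nu(\FT{\alpha}{S})) \subseteq p^i(e_\nu(\FT_\alpha))$ for all $i$, so taking joins gives the value of the root of $\FT{\alpha}{S}$ is at most that of $\FT_\alpha$. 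Applying this with $S = S_{|\alpha||G|^{i^*}}$ yields $\FT{\alpha}{|\alpha||G|^{i^*}} \lleq \FT_\alpha$. Chaining the three inequalities gives $\sol{\alpha} \lsim \sol{\alpha}^{i^*} \lleq \FT{\alpha}{|\alpha||G|^{i^*}} \lleq \FT_\alpha$, which together with the first direction establishes $\FT_\alpha \lsim \sol{\alpha}$.

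The only subtle point — and the one I expect to need the most care — is justifying that $\sol{\alpha}^i$ (defined via $\sol{\alpha}^i = \sol{\alpha_1}^i; \cdots; \sol{\alpha_n}^i$) genuinely forms a chain with supremum $\sol{\alpha}$. For a single non-terminal this is the definition of the Kleene iteration; for a sentential form it requires that relational composition commutes with the relevant suprema, i.e.\ that $(\bigsqcup_i F_i);(\bigsqcup_j G_j) = \bigsqcup_k (F_k;G_k)$ along the stabilizing chains, which holds because composition is monotonic (Lemma~\ref{lemma:leq_monotonic}) and the chains stabilize, so the supremum is attained at a finite stage. Everything else is bookkeeping with the two directions of Lemma~\ref{lemma:ith_sol_vs_ith_tree} and the monotonicity of cutting.
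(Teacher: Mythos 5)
Your proof is correct and follows essentially the same route as the paper: both directions are obtained by combining Lemma~\ref{lemma:index_true_cut}, the two inequalities of Lemma~\ref{lemma:ith_sol_vs_ith_tree}, stabilization of the Kleene chain, and Lemma~\ref{lemma:remove_superset_implies} (your ``finite-cut analogue'' is exactly that lemma with the empty removal set). Your closing remark on why $\sol{\alpha}^{i}\lleq\sol{\alpha}$ and $\sol{\alpha}=\sol{\alpha}^{i_0}$ extend from non-terminals to arbitrary sentential forms via monotonicity of composition is a point the paper leaves implicit, but it does not change the argument.
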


\begin{proof}
    Assume the value of $\FT_\alpha$ under $\nu$ is $\itrue$.
    By Lemma~\ref{lemma:index_true_cut}, there is a finite index $i_0$ so that the value of $\FT{\alpha}{i_0}$ is $\itrue$.
    By Lemma~\ref{lemma:ith_sol_vs_ith_tree} and the fact that the  fixed-point solution is implied by any approximant,
    $\FT{\alpha}{{i_0}} 
    \lleq\sol{\alpha}^{i_0}
    \lleq \sol{\alpha}$, 
    so $\sol{\alpha}$ also evaluates to $\itrue$ under $\nu$.
        
    Assume $\sol{\alpha}$ evaluates to $\itrue$ under $\nu$. 
    Note that $\sol{\alpha} = \sol{\alpha}^{i_0}$ for some index $i_0 \in \N$. 
    Again by Lemma~\ref{lemma:ith_sol_vs_ith_tree}, we have $ \sol{\alpha}^{i_0} \lleq \FT{\alpha}{|\alpha||G|^{i_0} }$, so the value of $\FT{\alpha}{|\alpha||G|^{i_0} }$ is $\itrue$. 
    Since $\FT{\alpha}{|\alpha||G|^{i_0} }$ is a prefix of $\FT{\alpha}{}$, the value of $\FT_{\alpha}$ is $\itrue$ by Lemma~\ref{lemma:remove_superset_implies}.
\end{proof}
The theorem yields another method for computing the fixed-point solution by exploring the tree of plays up to a finite level. 
We could use the fact that $\sol{} = \sol{}^{i_0}$, where an upper bound for $i_0$ can be computed using $|G|$ and the number of equivalence classes of formulas over $\Boxes_A$.  
By Lemma~\ref{lemma:ith_sol_vs_ith_tree}, we obtain
\mbox{$\sol{\alpha} = \sol{\alpha}^{i_0} \lsim \FT{\alpha}{|\alpha| |G|^{i_0}}$}, so we may also explore $\FT{\alpha}{}$ up to \mbox{level $|\alpha| |G|^{i_0}$.}

Note that this will not only require exponentially more iterations to obtain the fixed point in the worst case, it is also impractical because $\FT{\alpha}{i} \lsim \FT{\alpha}{i+1}$ does not necessarily imply $\FT{\alpha}{i} \lsim \FT{\alpha}{}$. Unlike in the Kleene iteration, we cannot conclude to have reached the fixed point as soon as one unfolding does not change the solution. 
Consider the grammar given by $S \to X, S \to a, X \to Y, X \to a, Y \to b$ where refuter owns all non-terminals. 
Let the automaton be so that $\tbox{a} \neq \tbox{b}$.
Then we have $\FT{S}{1} \lsim \FT{S}{2} \lsim \tbox{a}$, but
$\FT{S}{2} \not\lsim \FT{S}{3} \lsim \FT{S}{} \lsim \tbox{a} \vee \tbox{b}$.

\section{Winning Regions and Strategy Synthesis}
\label{sec:strategies}

Define the set of sentential forms  
 \[
     \wrinclusion = \Set{ \alpha \in \vartheta }{ \sol{\alpha} \text{ is not rejecting } }
 \]
and denote its complement by
 \[
     \wrnoninclusion = \vartheta \setminus \wrinclusion = \Set{ \alpha \in \vartheta }{ \sol{\alpha} \text{ is rejecting } }
     \ .
 \]
Our goal is to prove the following result in a constructive way, by synthesizing strategies guided by the fixed-point solution to the system of equations.

\begin{theorem}
\label{thm:determinacy}
    Inclusion games are determined:
     \[
         \vartheta = \wrinclusion \cupdot \wrnoninclusion
         \ ,  
     \]
    where $\wrinclusion$ is the winning region of prover and $\wrnoninclusion$ is the winning region of refuter. 
\end{theorem}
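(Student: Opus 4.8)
The plan is to prove Theorem~\ref{thm:determinacy} by exhibiting, for every sentential form $\alpha$, a winning strategy for prover whenever $\alpha \in \wrinclusion$ and a winning strategy for refuter whenever $\alpha \in \wrnoninclusion$. Since the two winning conditions are complementary (for every maximal play exactly one holds), it suffices to show that each player wins from every position in the corresponding set; the disjointness $\wrinclusion \cap \wrnoninclusion = \emptyset$ is then automatic, because a position cannot be winning for both players, and the covering $\vartheta = \wrinclusion \cup \wrnoninclusion$ holds by definition of the two sets via the predicate ``$\sol{\alpha}$ is rejecting''. The central tool is the exact characterization $\FT_\alpha \lsim \sol{\alpha}$ together with Lemma~\ref{lemma:index_true_cut}: $\sol{\alpha}$ is rejecting iff the value of $\FT_\alpha$ under the rejecting assignment $\nu$ is $\itrue$ iff some finite cut $\FT{\alpha}{i_0}$ already evaluates to $\itrue$ under $\nu$.

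For the refuter direction (the reachability side), I would work with a finite cut $\FT{\alpha}{i_0}$ that evaluates to $\itrue$ under the rejecting assignment $\nu$. Using the propagation semantics, the root being in $p^{i_0}(e_\nu)$ means there is a finite ``witness subtree'': at every disjunction (a position owned by refuter) one can select a single successor that is still forced to true, and at every conjunction (a prover position) all successors are true. This selection defines a strategy for refuter on the actual game arena — at a refuter position $wX\beta$, apply the grammar rule corresponding to the chosen successor in the formula tree (the correspondence between nodes of $\FT_\alpha$ and left-derivation steps from $\alpha$ is exactly how $\FT_\alpha$ was built). Since the witness subtree is finite (bounded depth $i_0$ in the cut, and finite branching), every conforming play reaches, after finitely many steps, a leaf, i.e.\ a terminal word $w$ with $\nu(\tbox{w}) = \itrue$, which by definition of $\nu$ means $\tbox{w}$ contains no pair $(q_0,q_f)$ with $q_f \in Q_F$, hence $w \notin \lang{A}$. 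Thus refuter wins non-inclusion. The subtlety here is that the formula-tree nodes correspond to sentential forms of the \emph{whole} play, so the strategy is genuinely a function of the play prefix; one should also recall from the emptiness-game theorem that finiteness of the conforming plays is guaranteed because we only descend inside the finite witness subtree.

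For the prover direction, let $\alpha \in \wrinclusion$, so $\sol{\alpha}$ is not rejecting, i.e.\ the value of $\FT_\alpha$ under $\nu$ is $\ifalse$, meaning by Lemma~\ref{lemma:index_true} the root of $\FT_\alpha$ is in \emph{no} $p^i(e_\nu)$. I would define prover's strategy so as to maintain the invariant that the current position $\beta$ of the play still satisfies ``the value of $\FT_\beta$ under $\nu$ is $\ifalse$'' (equivalently, using $\FT_\beta \lsim \sol{\beta}$, that $\sol{\beta}$ is not rejecting). At a prover position $wX\beta'$, since the value of the conjunction node is $\ifalse$, some successor still has value $\ifalse$; prover picks the corresponding rule, preserving the invariant. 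At a refuter position, \emph{every} successor of the disjunction node has value $\ifalse$ (because a disjunction is in $p^i(e_\nu)$ as soon as one child is), so the invariant is preserved no matter what refuter does. Any maximal conforming play is then either infinite — in which case prover wins inclusion vacuously — or ends in a terminal word $w$ with the value of $\FT_w = \tbox{w}$ under $\nu$ equal to $\ifalse$, i.e.\ $\nu(\tbox{w}) = \ifalse$, which by definition of $\nu$ means $\tbox{w}$ contains a pair $(q_0,q_f)$ with $q_f \in Q_F$, hence $w \in \lang{A}$, so prover wins inclusion. Here one uses that $\sol{w} = \tbox{w}$ and $\sol{\beta\gamma} = \sol{\beta};\sol{\gamma}$, and that relational composition of formulas respects the rejecting/non-rejecting evaluation — a point that should be stated as an auxiliary observation about how $\nu$ interacts with $;$ on $\BF_A$ (reflecting that $\lang{\tboxr;\tboxt}$-words are concatenations).

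The main obstacle I anticipate is making the bookkeeping of the correspondence ``nodes of $\FT_\alpha$ $\leftrightarrow$ left-derivation steps from $\alpha$'' fully rigorous, so that a choice in the formula tree really does induce a legal move in the arena and conforming plays track paths in the tree; this is conceptually clear from the construction of $\FT_\alpha$ but needs care because the formula tree for $\alpha$ was in fact assembled compositionally from the per-non-terminal trees (via Lemma~\ref{lemma:composition_subtree}) rather than defined by direct unfolding, and one must check these two views coincide. A secondary technical point is verifying that the rejecting assignment $\nu$ behaves well under relational composition of formulas, which is needed to reduce ``$\sol{w}$ rejecting'' for the terminal word at a leaf to membership in $\overline{\lang{A}}$; this is where the definition of $\nu$ in terms of $(q_0,q_f)$-pairs is used, together with $\tbox{uv} = \tbox{u};\tbox{v}$.
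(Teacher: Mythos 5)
Your proof is correct, but it takes a genuinely different route from the paper. You argue semantically on the infinite tree of plays: you use the exact characterization $\FT_\alpha \lsim \sol{\alpha}$ together with Lemma~\ref{lemma:index_true} / Lemma~\ref{lemma:index_true_cut}, extract a finite witness subtree (for refuter) from the finite propagation rank, and maintain the invariant ``the current subtree evaluates to $\ifalse$ under $\nu$'' (for prover); this is the classical attractor\slash rank argument for reachability--safety games, transplanted onto the formula tree. The paper instead never touches the tree in Section~\ref{sec:strategies}: it works purely syntactically on the CNF representation of the fixed point. Prover's strategy preserves the existence of a clause without a rejecting box, via the clause-level analysis of relational composition (Lemma~\ref{lemma:inclusion_strat}, using Lemma~\ref{lem:cnf_seqcomp}); refuter's strategy refines a choice function along the play (Lemma~\ref{lemma:strategy_refinement}), with termination from a well-founded order on sequences of Kleene levels (Lemma~\ref{lem:wellfounded}, Proposition~\ref{prop:choice_fct}). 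What the paper's route buys is exactly what your route does not immediately give: finitely representable strategies (positional for prover, finite-memory via the image of the choice function for refuter), which is the point of the synthesis sections; your strategies live a priori on the infinite tree of play prefixes, and compressing them to finite memory would essentially re-derive the choice-function machinery. What your route buys is brevity and reuse of the Section~\ref{sec:semantics} machinery. Two small points to tighten: (i) your worry about the tree being ``assembled compositionally'' is unfounded --- $\FT_\alpha$ is \emph{defined} by direct unfolding of left-derivations, so nodes correspond to legal moves by construction; (ii) in the prover direction, the step ``a conjunction node with value $\ifalse$ has a successor with value $\ifalse$'' needs the bounded outdegree of $\FT_\alpha$ (if every successor became $\itrue$ at some finite stage, finite branching would let you take the maximum stage and propagate $\itrue$ to the node), so state that explicitly.
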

As a consequence, it is \emph{decidable} whether a given sentential form $\alpha$ is winning for a player: Compute the formula $\sol{\alpha}$ and evaluate it under $\nu$ to check whether it is rejecting. 

It has been shown in \cite{Serre03} that for all games on pushdown systems with $\omega$-regular winning conditions, the winning regions are regular.
Indeed, the winning region of the non-inclusion game can be accepted by a deterministic automaton.
The set of equivalence classes of formulas forms its set of control states, the equivalence class of $\id$ is the initial state and rejecting formulas are final states.
If the automaton in state $F$ reads symbol $x \in N \cup T$, it switches to the \mbox{state $F;\sol{x}$.}

Representing sentential forms by formulas is too imprecise to do strategy synthesis. (In fact, the leftmost non-terminal is not even encoded in the formula.)
Since relational composition is associative, we can represent the set of all sentential forms $\alpha = w X \beta$ by a set of triples
$ (\sol{w}, X, \sol{\beta})$,
where $\sol{w}$ and $\sol{\beta}$ are taken from a finite set of formulas (up to logical equivalence) and $X$ is a non-terminal from a finite set.
This finite representation will be sufficient for the strategy synthesis.
Our synthesis operates on normalized \mbox{formulas in CNF}.

\subsection{Conjunctive Normal Form}
\label{sec:CNF}
A formula in CNF is a conjunction of clauses, each clause being a disjunction of boxes.
We use set notation and write clauses as sets of boxes and formulas as sets of clauses. 
The set of \emph{CNF-formulas} over $\Boxes_A$ is thus 
$\CNF_A= \pwrset{ \pwrset { \Boxes_A } }$. 
Identify $\itrue=\set{}$ and $\ifalse = \set{ \set{} }$. 
In this section, all formulas will stem from $\CNF_A$. 

Since our CNF-formulas are negation-free, implication has a simple 
characterization. 

\begin{lemma}
    \label{lem:leq}
    $F \lleq G$ if and only if there is $j : G \to F$ so that $j(H) \subseteq H$ for all $H\in G$. 
\end{lemma}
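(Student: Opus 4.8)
The plan is to prove both directions of the equivalence, using the set-based representation of CNF-formulas: $F$ is a set of clauses, each clause a set of boxes, and we want to characterize $F \lleq G$ (i.e.\ $F$ implies $G$ under every box-assignment $\nu$).

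\textbf{The easy direction ($\Leftarrow$).} Suppose such a map $j : G \to F$ exists with $j(H) \subseteq H$ for every clause $H \in G$. Fix an assignment $\nu$ and suppose $F$ evaluates to $\itrue$ under $\nu$; I must show $G$ does too. Since $F$ is a conjunction of clauses and is true, every clause of $F$ is true, in particular $j(H)$ is true for each $H \in G$, meaning some box $\rho \in j(H)$ has $\nu(\rho) = \itrue$. Because $j(H) \subseteq H$, that same $\rho$ lies in $H$, so $H$ is true. As this holds for every clause $H \in G$, the conjunction $G$ is true. Hence $F \lleq G$. This step is routine.

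\textbf{The hard direction ($\Rightarrow$).} Suppose $F \lleq G$; I must produce the map $j$. The key observation is that for negation-free formulas we can test clause-implication with a single well-chosen assignment. Fix a clause $H \in G$; I need a clause $H' \in F$ with $H' \subseteq H$. Consider the assignment $\nu_H$ that sets $\nu_H(\rho) = \itrue$ exactly for $\rho \notin H$ (equivalently, $\nu_H(\rho)=\ifalse$ iff $\rho \in H$). Under $\nu_H$ the clause $H$ is false, so $G$ is false, so by $F \lleq G$ the formula $F$ must also be false under $\nu_H$ — i.e.\ some clause $H' \in F$ is false under $\nu_H$. A clause is false under $\nu_H$ exactly when all its boxes are assigned $\ifalse$, i.e.\ all its boxes lie in $H$; thus $H' \subseteq H$. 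Defining $j(H)$ to be (any choice of) such an $H'$ gives the required map. The main obstacle here — really the only subtlety — is making sure the clause notion of "false" is handled correctly for the degenerate cases built into the paper's conventions: the empty clause $\set{}$ is $\ifalse$ and is never false under any assignment in the "all boxes assigned $\ifalse$" sense (vacuously all its boxes are in $H$, so actually the empty clause $H' = \set{}$ trivially satisfies $H' \subseteq H$, which is consistent with $\ifalse \lleq$ anything); and $G = \ifalse = \set{\set{}}$ forces $F$ to contain the empty clause, i.e.\ $F = \ifalse$, matching $j(\set{}) = \set{} \subseteq \set{}$. I would spell out that $\itrue = \set{}$ (no clauses) makes the $\Rightarrow$ direction vacuous and the $\Leftarrow$ direction's map $j$ has empty domain, so everything is consistent. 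With these conventions checked, both directions go through cleanly.
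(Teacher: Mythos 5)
Your proof is correct and takes essentially the same approach as the paper: the right-to-left direction is the same routine check, and for the left-to-right direction the paper also builds, for a clause $H\in G$ with no embedded clause in $F$, an assignment that is true exactly on chosen boxes outside $H$ and false on $H$, deriving a contradiction with $F\lleq G$. Your version merely runs the argument directly (via the contrapositive, with $\nu_H$ true precisely off $H$) instead of by contradiction, and your handling of the degenerate cases $\itrue=\set{}$ and $\ifalse=\set{\set{}}$ is consistent with the paper's conventions.
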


\begin{proof}
    The implication from right to left is immediate.
    Assume $F\lleq G$ but there is no map~$j$ as required.
    Then there is some clause $H\in G$ so that for every clause $K\in F$ we find a variable $x_K\in K$ with $x_K \notin H$.
    Consider the assignment $\nu(x_K)=\itrue$ for all $x_K$ and $\nu(y)=\ifalse$ for the remaining variables.
    Then $\nu(F)=\itrue$.
    At the same time, $\nu(G)=\ifalse$ as $\nu(H)=\ifalse$. 
    This contradicts the assumption $F\lleq G$, which means $\nu(F)=\itrue$ implies $\nu(M)=\itrue$ for every assignment $\nu$.
\end{proof}
When computing a disjunction, we have to apply distributivity to obtain a CNF.
\begin{lemma}
    \label{lem:disjunctionconjunction}
    $F \vee G \lsim \Set{ K \cup H }{ K \in F, H \in G}$ and 
    $F \wedge G \lsim F \cup G$.
\end{lemma}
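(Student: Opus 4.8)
The plan is to argue \emph{semantically}: instead of manipulating the formulas through the Boolean laws, I will show that the two claimed equivalences hold under every box assignment $\nu : \Boxes_A \to \set{\itrue,\ifalse}$, which immediately yields logical equivalence. This has the benefit of treating the degenerate formulas $\itrue = \set{}$ and $\ifalse = \set{\set{}}$ uniformly, with no separate case distinction. Recall that a CNF-formula $M$ evaluates to $\itrue$ under $\nu$ exactly when every clause $K \in M$ contains some box $\rho$ with $\nu(\rho) = \itrue$.

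For the conjunction, $\nu(F \wedge G) = \itrue$ iff every clause of $F$ \emph{and} every clause of $G$ is satisfied under $\nu$, which is precisely the condition $\nu(F \cup G) = \itrue$. Since this holds for all $\nu$, we obtain $F \wedge G \lsim F \cup G$. For the disjunction, write $D = \Set{ K \cup H }{ K \in F, H \in G}$. If $\nu(F) = \itrue$, then every $K \in F$ is satisfied, hence so is every superset $K \cup H$, and thus $\nu(D) = \itrue$; symmetrically if $\nu(G) = \itrue$. Conversely, if $\nu(F \vee G) = \ifalse$, choose $K_0 \in F$ and $H_0 \in G$ each containing no box true under $\nu$ (they exist since $\nu(F) = \nu(G) = \ifalse$); then $K_0 \cup H_0 \in D$ contains no true box, so $\nu(D) = \ifalse$. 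Hence $\nu(F \vee G) = \nu(D)$ for every $\nu$, i.e.\ $F \vee G \lsim D$.

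I do not expect a real obstacle. The only point needing a moment's care is that the set conventions $\itrue = \set{}$ and $\ifalse = \set{\set{}}$ make $D$ and $F \cup G$ collapse correctly in the extreme cases (e.g.\ $F = \set{}$ forces $D = \set{}$ and $F \cup G = G$, while $F = \set{\set{}}$ forces $D = G$ and makes $F \cup G$ contain the empty clause), but the quantification over all $\nu$ in the argument above already subsumes these, since it never assumes $F$ or $G$ is nonempty or has nonempty clauses. One could alternatively give a purely syntactic proof by distributing $\vee$ over $\wedge$ and collapsing each resulting disjunction $\bigvee K \vee \bigvee H$ to the clause $K \cup H$, but then these edge cases would have to be checked by hand.
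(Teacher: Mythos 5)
Your proof is correct. The paper itself states this lemma without proof, treating it as the standard fact that conjunction of CNFs is union of clause sets and disjunction is obtained by distributing $\vee$ over $\wedge$; your semantic argument (checking both equivalences under every assignment $\nu$, with the observation that a CNF is satisfied iff every clause contains a true box) establishes exactly this, and it handles the degenerate cases $\itrue = \set{}$ and $\ifalse = \set{\set{}}$ cleanly, so nothing is missing.
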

When computing the relational composition $F;G$ of CNF-formulas, we obtain a formula with three alternations between conjunction and disjunction. 
We apply distributivity to normalize $F;G$.  
Lemma~\ref{lem:cnf_seqcomp} gives a closed-form representation of the result. 
To understand the idea, consider the composition of one clause with a CNF. 

\begin{example}
    \newcommand {\ta}
    { \tboxr_a }
    \newcommand {\tb}
    { \tboxr_b }
    \newcommand {\tc}
    { \tboxr_c }
    \newcommand {\td}
    { \tboxr_d }
    Consider
    $
        F;G
        = (\ta \vee \tb);(\tc \wedge \td)
        = (\ta;\tc \wedge \ta;\td) \vee (\tb;\tc \wedge \tb;\td)
    $.
    Distributivity yields
    $
        (\ta;\tc \vee \tb;\tc)
        \wedge (\ta;\tc \vee \tb;\td)
        \wedge ( \ta;\td \vee \tb;\tc )
        \wedge (\ta;\td \vee \tb;\td)
    $.
\tqed
\end{example}
To turn $F;G$ to CNF, we normalize the composition $K;G$ for every clause $K\in F$.
The formula $K;G$ is an alternation of disjunction (not in the example), conjunction, and disjunction.
Distributivity, when applied to the topmost two operations,
selects for every box  $\rho\in K$ a clause $H\in G$ to compose $\rho$ with.
This justifies the following set-theoretic characterization.

\begin{lemma}
\label{lem:cnf_seqcomp}
    $
        F;G
        \lsim
            \bigcup_{K \in F}
            \bigcup_{z : K \to G}
            \big\{ 
               \bigcup_{\tboxr \in K} \tboxr;z(\tboxr)
            \big\}
    $. 
\end{lemma}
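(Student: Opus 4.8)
The plan is to unfold the definition of relational composition on CNF-formulas and push distributivity through systematically, in the same spirit as the worked example preceding the lemma. First I would reduce to the case of a single clause: since $F = \bigwedge_{K \in F} \{K\}$ and relational composition distributes over the left conjunction (Definition~\ref{def:cnf_seqcomp}, together with Lemma~\ref{lem:disjunctionconjunction} which says $\bigwedge$ on CNFs is union of clause-sets), we have $F;G \lsim \bigcup_{K \in F} (\{K\};G)$. So it suffices to show, for a single clause $K = \{\rho_1, \ldots, \rho_n\}$, that $\{K\};G \lsim \bigcup_{z : K \to G} \big\{ \bigcup_{\rho \in K} \rho; z(\rho) \big\}$.

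For the single-clause case I would argue as follows. Writing $K$ as the disjunction $\rho_1 \vee \cdots \vee \rho_n$, relational composition distributes over this disjunction on the left, giving $\{K\};G \lsim (\rho_1;G) \vee \cdots \vee (\rho_n;G)$. Next, each $\rho_i;G$ is the composition of a box with a CNF; by the right-distribution clauses of Definition~\ref{def:cnf_seqcomp} (applied to conjunction and then to disjunction inside each clause $H \in G$) one gets $\rho_i;G \lsim \bigwedge_{H \in G} (\rho_i;H) \lsim \bigwedge_{H \in G} \big(\bigvee_{\sigma \in H} \rho_i;\sigma\big)$, i.e.\ $\rho_i;G$ is again in CNF, with one clause $\rho_i;H := \{\rho_i;\sigma : \sigma \in H\}$ per clause $H$ of $G$. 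Now I must normalize the disjunction $(\rho_1;G) \vee \cdots \vee (\rho_n;G)$ of these $n$ CNF-formulas. Iterating Lemma~\ref{lem:disjunctionconjunction} ($F \vee G$ in CNF is the set of unions $K \cup H$), the clauses of the result are exactly the sets $\bigcup_{i=1}^{n} (\rho_i;H_i)$ obtained by picking, independently for each $i$, one clause $H_i \in G$. Such a choice of $(H_1, \ldots, H_n)$ is precisely a function $z : K \to G$ with $z(\rho_i) = H_i$, and then $\bigcup_i (\rho_i; z(\rho_i)) = \bigcup_i \{\rho_i; \sigma : \sigma \in z(\rho_i)\}$ — which, once one observes $\rho_i; z(\rho_i)$ is the clause obtained by composing $\rho_i$ with every box of the clause $z(\rho_i)$, matches $\bigcup_{\rho \in K} \rho; z(\rho)$ after identifying the box-level composition notation. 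Reassembling over all $K \in F$ yields the claimed formula.

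Two points deserve care, and I expect the bookkeeping around them to be the main obstacle rather than any deep difficulty. First, the statement is an equivalence $\lsim$, not a syntactic identity, so at each distributivity step I am only claiming logical equivalence of the CNF obtained; this is fine because $\lsim$ is a congruence for $\wedge$, $\vee$ and, by Lemma~\ref{lemma:leq_monotonic} and the surrounding remarks, for $;$ as well, so the chain of equivalences composes. Second, I should double-check the degenerate cases against the conventions $\itrue = \{\}$ and $\ifalse = \{\{\}\}$: if $F = \ifalse$ then $F$ has the single empty clause $K = \{\}$, the only map $z : \{\} \to G$ is the empty map, $\bigcup_{\rho \in \{\}} \rho;z(\rho) = \{\}$, so the right-hand side is $\{\{\}\} = \ifalse$, consistent with $\ifalse;G = \ifalse$; if $G = \ifalse$ then every clause of $G$ — there is exactly one, the empty clause — forces $\rho;z(\rho) = \{\}$, so every produced clause is empty and the right-hand side is again $\ifalse$; and if $F = \itrue$ or $G = \itrue$ the outer or inner unions are over the empty index set and collapse correctly. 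Verifying these boundary cases and making the index-juggling in the iterated application of Lemma~\ref{lem:disjunctionconjunction} precise is the only real work.
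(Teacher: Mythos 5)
Your proposal is correct and follows essentially the same route as the paper, which justifies the lemma (without a formal proof) by exactly this argument: reduce to a single clause $K\in F$, observe that $K;G$ is an alternation of disjunction, conjunction, and disjunction, and note that distributivity applied to the topmost two operations selects for every box $\rho\in K$ a clause $H\in G$ — i.e.\ a function $z: K\to G$. Your additional check of the degenerate cases $\itrue$ and $\ifalse$ is a sensible supplement but not a departure from the paper's reasoning.
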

For negation-free formulas, the presence of the empty clause characterizes $\ifalse$.
\begin{lemma}
    \label{lemma:falseclause}
    $F\lsim \ifalse$ if and only if $\set{}$ is a clause of $F$.
\end{lemma}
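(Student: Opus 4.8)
The statement is $F \lsim \ifalse$ iff $\set{}$ is a clause of $F$. Recall that in the CNF-encoding a formula $F \in \CNF_A$ is a set of clauses, each clause a set of boxes, with $\itrue = \set{}$ and $\ifalse = \set{\set{}}$. The plan is a straightforward semantic argument over Boolean assignments, using the fact that our formulas are negation-free.

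First I would prove the ``if'' direction. Suppose $\set{} \in F$. Then under \emph{any} assignment $\nu : \Boxes_A \to \set{\itrue,\ifalse}$, the empty clause evaluates to $\ifalse$ (an empty disjunction is $\ifalse$), and since $F$ is a conjunction of its clauses, $\nu(F) = \ifalse$. Hence $F$ is logically equivalent to $\ifalse$; in particular $F \lleq \ifalse$ and $\ifalse \lleq F$ trivially, so $F \lsim \ifalse$.

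For the ``only if'' direction I would argue contrapositively: assume $\set{}$ is \emph{not} a clause of $F$, i.e.\ every clause $K \in F$ contains at least one box $\rho_K \in K$. Consider the assignment $\nu$ that sets $\nu(\rho) = \itrue$ for every box $\rho$ (equivalently, the all-true assignment; one could even just set $\nu(\rho_K) = \itrue$ for the selected boxes and $\ifalse$ elsewhere). Then every clause $K \in F$ contains a box evaluated to $\itrue$, so $\nu(K) = \itrue$, and therefore $\nu(F) = \itrue$. Since $\nu(\ifalse) = \ifalse$, we have $F \not\lsim \ifalse$. This completes the contrapositive, hence the ``only if'' direction. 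Alternatively, the same conclusion follows by applying Lemma~\ref{lem:leq} to $F \lleq \ifalse = \set{\set{}}$: such a map $j$ would have to send the clause $\set{} \in \ifalse$ to a clause of $F$ that is a subset of $\set{}$, i.e.\ to the empty clause, forcing $\set{} \in F$.

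There is no real obstacle here; the only thing to be careful about is the degenerate bookkeeping around the empty formula and empty clause (the conventions $\itrue = \set{}$, $\ifalse = \set{\set{}}$) and the fact that negation-freeness is exactly what makes the all-true assignment a witness. Since every operation preserves logical equivalence and the CNF normalization lemmas (Lemmas~\ref{lem:disjunctionconjunction}, \ref{lem:cnf_seqcomp}) are set up to never silently drop the empty clause, this characterization is the clean syntactic test for $\ifalse$ that the later algorithmic sections rely on.
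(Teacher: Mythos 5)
Your proof is correct, but it takes a slightly different route from the paper's. The paper proves both directions by going through the syntactic characterization of implication from Lemma~\ref{lem:leq}: for the ``if'' direction it exhibits the embedding $j : \ifalse \to F$ sending the clause $\set{}$ of $\ifalse = \set{\set{}}$ to the empty clause of $F$ (combined with the general fact $\ifalse \lleq F$), and for the ``only if'' direction it observes that any embedding witnessing $F \lleq \ifalse$ must map $\set{}$ to a clause of $F$ contained in $\set{}$, forcing $\set{} \in F$. You instead argue directly over truth assignments: the empty clause is an empty disjunction and hence falsifies $F$ under every $\nu$, and conversely the all-true assignment satisfies every formula without an empty clause, which is exactly where negation-freeness enters. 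Both arguments are equally short and correct; yours is self-contained and does not depend on Lemma~\ref{lem:leq}, while the paper's keeps all reasoning about $\lleq$ on CNFs funneled through the single embedding characterization, which is the style it uses throughout Section~\ref{sec:CNF}. You even note the embedding argument as an alternative for the ``only if'' direction, which is precisely the paper's proof.
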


\begin{proof}
    Note that $\ifalse \lleq F$ holds for any formula.
    In the following, we use the characterization of implication given in Lemma \ref{lem:leq}.
    Assume that $F$ contains the empty clause. We define $j : \ifalse \to F$ by mapping the clause $\set{}$ of $\ifalse = \set{ \set{} }$ to the empty clause in $F$.
    Assume that $F \lsim \ifalse$. In particular, the empty clause in $\ifalse$ embeds a clause of $F$. The embedded clause has to be the empty clause.
\end{proof}

\subsection{Strategy for Prover}
\label{sec:inclusion_prover}

Prover wins on infinite plays, and therefore does not have to care about termination. 
This yields a simple positional winning strategy.

\begin{theorem}
\label{thm:winning_strat_inclusion}
    The strategy $\wstratinclusion$ that applies a rule such that the formula for the resulting position is not rejecting (if possible) is a winning strategy for prover for the inclusion game from all positions in $ \wrinclusion$.
\end{theorem}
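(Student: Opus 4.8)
The plan is to show that the positional strategy $\wstratinclusion$ maintains membership in $\wrinclusion$ along every conforming play, and then argue that this invariant is already enough for prover to win the inclusion game (since prover wins all infinite plays and all plays ending in a word of $\lang{A}$).

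\textbf{Step 1: The invariant is preserved.} First I would show that from any position $\alpha \in \wrinclusion$ it is prover's or refuter's turn, and in either case the successor chosen stays in $\wrinclusion$. Write $\alpha = wX\beta$. The key algebraic fact is compositionality: $\sol{\alpha} = \sol{w};\sol{X};\sol{\beta} = \sol{X};\sol{\beta}$ (since $\sol{w} = \tbox{w}$ composed on the left does not affect whether a formula is rejecting in the relevant way — more precisely, I would use that $\sol{\alpha}$ being rejecting is determined by $\sol{X};\sol{\beta}$ together with $\tbox w$, and that after one derivation step $wX\beta \leftderive w\eta\beta$ we have $\sol{w\eta\beta} = \tbox w;\sol{\eta};\sol{\beta}$). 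Using Definition~\ref{def:game_dfa}, $\sol{X} = \bigwedge_{X\to\eta}\sol{\eta}$ if $X \in N_\square$ and $\sol{X} = \bigvee_{X\to\eta}\sol{\eta}$ if $X \in N_\roundss$. Composition distributes over $\wedge$ and $\vee$ (Definition~\ref{def:cnf_seqcomp}), so $\sol{wX\beta}$ is the conjunction (resp.\ disjunction) over rules of $\sol{w\eta\beta}$. Evaluating under $\nu$: if $\alpha$ is prover-owned and not rejecting, the conjunction evaluates to $\ifalse$, hence at least one conjunct $\sol{w\eta\beta}$ evaluates to $\ifalse$, i.e.\ is not rejecting — so a legal move of $\wstratinclusion$ exists and keeps us in $\wrinclusion$. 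If $\alpha$ is refuter-owned and not rejecting, the disjunction evaluates to $\ifalse$, hence \emph{every} disjunct is not rejecting, so no matter what refuter does the play stays in $\wrinclusion$. This is the main technical content and the step I expect to require the most care, chiefly in getting the bookkeeping with the terminal prefix $w$ right.

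\textbf{Step 2: Staying in $\wrinclusion$ wins.} Next I would argue that any maximal play that stays within $\wrinclusion$ satisfies the inclusion winning condition. If the play is infinite, prover wins by definition. If it is finite, it ends in a terminal word $w = p_{\mathit{last}}$, and $w \in \wrinclusion$ means $\sol{w} = \tbox{w}$ is not rejecting, i.e.\ $\nu(\tbox{w}) = \ifalse$, which by definition of $\nu$ means $\tbox{w}$ contains a pair $(q_0,q_f)$ with $q_f \in Q_F$, i.e.\ $q_0 \tow{w} q_f$, i.e.\ $w \in \lang{A}$. So prover wins. Combining with Step 1, every play conforming to $\wstratinclusion$ and starting in $\wrinclusion$ stays in $\wrinclusion$ (by induction on play length) and is therefore won by prover.

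\textbf{On "if possible".} The phrase in the statement is a safety net for positions outside $\wrinclusion$, where no non-rejecting successor need exist; Step 1 shows that from inside $\wrinclusion$ a non-rejecting successor always exists at prover's nodes, so the strategy is genuinely defined and the "if possible" clause is never exercised along a conforming play from $\wrinclusion$. I would close by noting that the strategy is positional (it depends only on $\sol{\alpha}$, or on the triple $(\sol w, X, \sol\beta)$ mentioned after Theorem~\ref{thm:determinacy}), which is all that is claimed. The one subtlety to flag is that this argument uses $\FT_\alpha \lsim \sol\alpha$ only implicitly; the proof can be carried out purely on the finite side using Definition~\ref{def:game_dfa} and the distributivity of composition, which keeps it self-contained.
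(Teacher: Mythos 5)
Your proof is correct, and its overall skeleton is the paper's: show that $\wstratinclusion$ can always move to a non-rejecting position at prover's nodes and that refuter can only move to non-rejecting positions, then observe that a play staying in $\wrinclusion$ is either infinite or ends in a terminal word $w$ with $\tbox{w}$ not rejecting, i.e.\ $w \in \lang{A}$ (your Step~2 matches the paper's proof of the theorem essentially verbatim). Where you genuinely diverge is in the invariant-preservation step, which the paper isolates as Lemma~\ref{lemma:inclusion_strat} and proves at the level of CNF clauses: it takes a non-rejecting clause $K'$ of $\sol{\alpha}$, decomposes it via Lemma~\ref{lem:cnf_seqcomp} into a clause of $\sol{wX}$ plus a map into $\sol{\beta}$, and tracks that clause through the union-of-clause-sets characterization of conjunction (prover case) or the clause-union characterization of disjunction (refuter case, which needs the containment $\tbox{w};K_\eta \subseteq \tbox{w};K$). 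You instead note that $\sol{wX\beta}$ is, up to $\lsim$, the conjunction (resp.\ disjunction) over rules $X \to \eta$ of $\sol{w\eta\beta}$, and that ``not rejecting'' is just evaluation to $\ifalse$ under the fixed assignment $\nu$; both cases then reduce to trivial propositional facts about conjunctions and disjunctions evaluating to false. This is shorter and bypasses the CNF machinery entirely (the paper presumably routes through clauses for uniformity with the refuter-side synthesis, where clause-level tracking genuinely is needed). Two points to tighten. First, the parenthetical claim $\sol{w};\sol{X};\sol{\beta} = \sol{X};\sol{\beta}$ is false — prefixing by $\tbox{w}$ can certainly change whether a formula is rejecting — and should be deleted; your actual argument never uses it. Second, the left-distribution $\rho;(G_1 \star G_2) = \rho;G_1 \star \rho;G_2$ from Definition~\ref{def:cnf_seqcomp} is only given for a single box on the left, and for a general formula it fails (over $\wedge$) even up to logical equivalence; your argument is sound only because $\sol{w} = \tbox{w}$ is a box, and you should say so explicitly.
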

Strategy $\wstratinclusion$ is \emph{uniform} and \emph{positional}. 
Uniform means it is winning from all position in $\wrinclusion$. 
The strategy is positional in that it only needs to consider the current position in order to make a choice.  
Moreover, we can precompute $\sol{\eta}$ for all $\eta$ occurring as the right-hand side of a rule. 
Together with the representation of sentential forms as triples 
$(\sol{w}, X, \sol{\beta})$,  
this will be sufficient to decide on prover's next move.
Hence, the strategy is not only positional (but depending on an infinite set of positions), it can even be 
\mbox{\emph{finitely represented}.}
One can also understand it as the winning strategy for a finite game where the triples $(\set{ \set{ \tbox{w}} }, X, \sol{\beta})$ form the nodes and the production rules induce the edges. 

For the proof, we show that whenever we are in a non-rejecting position and it is prover's turn, there is a move to a non-rejecting position. 
Hence, if we start from $\wrinclusion$, the condition on the existence of a move (stated in the theorem as \emph{if possible}) does not apply. 
Refuter can only move to positions with non-rejecting formulas. 

\begin{lemma}
    \label{lemma:inclusion_strat}
    Let $\alpha = wX\beta \in \vartheta$ with $\sol{\alpha}$ not rejecting.\\
    (1) If $X \in N_\square$, there is $X \to \eta$ so that $\sol{w \eta \beta}$ is not rejecting. \\    
    (2) If $X \in N_\roundss$, then $\sol{w \eta \beta}$ is not rejecting for all $X \to \eta$.
\end{lemma}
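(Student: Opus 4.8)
The plan is to reduce the statement to the compositionality of relational composition together with the characterization of rejecting formulas as an evaluation. Recall that $\sol{w X \beta} = \sol{w};\sol{X};\sol{\beta} = \tbox{w};\sol{X};\sol{\beta}$, and that a formula $F$ is rejecting iff $\nu(F) = \itrue$ for the fixed assignment $\nu$ that maps a box to $\itrue$ exactly when it contains no pair $(q_0,q_f)$ with $q_f \in Q_F$. The key auxiliary observation I would establish first is a \emph{monotonicity/compatibility of ``rejecting'' with relational composition}: prefixing or suffixing does not turn a rejecting formula non-rejecting in an uncontrolled way. Concretely, I would argue that for fixed boxes/formulas on the outside, the map $F \mapsto \tbox{w};F;\sol{\beta}$ is monotone with respect to $\lleq$ (Lemma~\ref{lemma:leq_monotonic}) and that whether $\tbox{w};F;\sol{\beta}$ is rejecting depends on $F$ only through which boxes appear in it — more precisely, it is determined by evaluating $F$ under the pulled-back assignment $\nu'$ defined by $\nu'(\rho) = \nu(\tbox{w};\rho;\rho')$ suitably quantified over clauses, using that relational composition distributes over $\wedge$ and $\vee$ (Definition~\ref{def:cnf_seqcomp}).

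First I would spell out the CNF picture. Put $\sol{X}$ in CNF and, using Lemma~\ref{lem:cnf_seqcomp} twice, write $\sol{w X \beta}$ in CNF as $\bigcup$ over clauses $K$ of $\sol{X}$ and over choice functions $z\colon K \to \sol{\beta}$ of the clauses $\bigcup_{\rho \in K}\tbox{w};\rho;z(\rho)$ (the outer $\tbox{w}$ is a single box, so it just gets prepended to every box). Evaluating a CNF under $\nu$ is: some clause is all-$\ifalse$. So $\sol{wX\beta}$ is rejecting iff there is a clause $K$ of $\sol{X}$ and a choice $z$ such that $\nu(\tbox{w};\rho;z(\rho)) = \ifalse$ for every $\rho \in K$.

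For part~(1), $X \in N_\square$: then $\sol{X} = \bigwedge_{X \to \eta}\sol{\eta}$, so in CNF its clauses are exactly the union of the clauses of the $\sol{\eta}$. If $\sol{wX\beta}$ is \emph{not} rejecting, then \emph{every} clause of $\sol{wX\beta}$ is satisfied by $\nu$; by the CNF description above, for each clause $K$ of each $\sol{\eta}$ and each $z$ the clause $\bigcup_{\rho\in K}\tbox{w};\rho;z(\rho)$ is satisfied. Since the clauses of $\sol{w\eta\beta} = \tbox{w};\sol{\eta};\sol{\beta}$ are, again by Lemma~\ref{lem:cnf_seqcomp}, exactly those of the form $\bigcup_{\rho\in K}\tbox{w};\rho;z(\rho)$ with $K$ ranging over clauses of $\sol{\eta}$, we must verify that for \emph{some} rule $X\to\eta$ \emph{all} such clauses are satisfied — but in fact all of them are satisfied for \emph{every} rule, because the clauses of $\sol{\eta}$ are among the clauses of $\sol{X}$ and ``not rejecting'' for $\sol{wX\beta}$ means all of its clauses hold. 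Hence $\sol{w\eta\beta}$ is not rejecting for every $X\to\eta$, which is stronger than (1); I would then just pick any such $\eta$. (Alternatively, phrase this abstractly: $\sol{w X\beta} = \bigwedge_\eta \sol{w\eta\beta}$ by distributivity of $;$ over $\wedge$, and a conjunction is non-rejecting iff every conjunct is.)

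For part~(2), $X \in N_\roundss$: then $\sol{X} = \bigvee_{X\to\eta}\sol{\eta}$, hence $\sol{wX\beta} = \bigvee_\eta \sol{w\eta\beta}$ by distributivity of $;$ over $\vee$. A disjunction is non-rejecting iff \emph{some} disjunct is non-rejecting; conversely, since $\sol{w\eta\beta} \lleq \sol{wX\beta}$ for every $\eta$ (a disjunct implies the disjunction, and composition is monotone by Lemma~\ref{lemma:leq_monotonic}), if $\sol{w\eta\beta}$ were rejecting then $\nu(\sol{w\eta\beta}) = \itrue$ forces $\nu(\sol{wX\beta}) = \itrue$, i.e.\ $\sol{wX\beta}$ rejecting — contradiction. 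So $\sol{w\eta\beta}$ is non-rejecting for \emph{all} $X\to\eta$, giving (2).

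The main obstacle I anticipate is purely bookkeeping: making precise that relational composition of a CNF with $\sol{\beta}$ on the right, and with the single box $\tbox{w}$ on the left, commutes with the Boolean structure in the way claimed, so that $\sol{wX\beta} = \bigwedge$ (resp.\ $\bigvee$) of the $\sol{w\eta\beta}$ over the rules. This is exactly the distributivity in Definition~\ref{def:cnf_seqcomp} applied to $\sol{X} = \bigwedge_\eta\sol{\eta}$ or $\bigvee_\eta\sol{\eta}$, but one must be careful that we are reasoning modulo $\lsim$ and that the assignment $\nu$ used for ``rejecting'' is evaluated on representatives — which is fine since $\nu$ respects $\lsim$ by definition of logical equivalence. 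Once that identity is in hand, both parts are immediate from the fact that a conjunction is non-rejecting iff all conjuncts are, and a disjunction is non-rejecting iff some disjunct is, plus monotonicity of $;$ (Lemma~\ref{lemma:leq_monotonic}) for the converse direction in~(2).
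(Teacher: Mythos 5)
Your overall skeleton --- distribute $;$ over the Boolean structure to get $\sol{wX\beta} \lsim \bigwedge_{X\to\eta}\sol{w\eta\beta}$ (prover) resp.\ $\bigvee_{X\to\eta}\sol{w\eta\beta}$ (refuter), then read off satisfaction under $\nu$ --- is sound and in fact more abstract than the paper's proof, which instead tracks an explicit witnessing clause through Lemma~\ref{lem:cnf_seqcomp} and the union/product characterization of Lemma~\ref{lem:disjunctionconjunction}. Your argument for part~(2) via $\sol{w\eta\beta}\lleq\sol{wX\beta}$ and Lemma~\ref{lemma:leq_monotonic} is correct and arguably cleaner than the paper's.

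However, part~(1) contains a genuine error: you have inverted the definition of \emph{rejecting}. The paper defines a formula to be rejecting iff it is \emph{satisfied} under $\nu$, so for a CNF, ``not rejecting'' means \emph{some} clause has no rejecting box --- not, as you write, that ``every clause is satisfied by $\nu$.'' (Your setup sentence ``$\sol{wX\beta}$ is rejecting iff there is a clause \ldots all-$\ifalse$'' already has it backwards, and you silently switch to the correct reading in part~(2), where you equate rejecting with $\nu(\cdot)=\itrue$.) As a consequence, your conclusion for~(1) that $\sol{w\eta\beta}$ is not rejecting for \emph{every} rule $X\to\eta$ is false in general: take $X\in N_\square$ with rules $X\to a$ and $X\to b$, $w=\beta=\varepsilon$, and $A$ accepting $a$ but not $b$; then $\sol{X}=\tbox{a}\wedge\tbox{b}$ evaluates to $\ifalse\wedge\itrue=\ifalse$, so it is not rejecting, yet the successor $\sol{b}=\tbox{b}$ \emph{is} rejecting. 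This also matches the game-theoretic intuition: prover must \emph{choose} a rule that keeps the formula non-rejecting, and not every rule does. The repair is immediate from your own decomposition: $\sol{wX\beta}\lsim\bigwedge_{X\to\eta}\sol{w\eta\beta}$ is unsatisfied under $\nu$ iff \emph{some} conjunct is unsatisfied, which is exactly the existential statement~(1) --- no more, no less.
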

\begin{proof}
    To prove Lemma~\ref{lemma:inclusion_strat}, note that
    $\sol{w X \beta} = \sol{w};\sol{X};\sol{\beta}$.
    Since $w$ is a terminal word, we have
    $\sol{w} = \set{ \set{ \tbox{w} } }$. 
    Hence, a clause of $\sol{w};\sol{X}$ will be of type $\tbox{w};K$, where $K$ is a clause\mbox{ of $\sol{X}$.}
    
    \begin{enumerate}[(1)]
        \item
        If $\sol{\alpha}$ is not rejecting, it contains a clause $K'$ without a rejecting box.  
        By Lemma~\ref{lem:cnf_seqcomp}, $K'$ is defined by a clause $K$ of $\sol{wX}$ and a mapping $z$ from the boxes of $K$ to the clauses of $\sol{\beta}$. 
        Since $X$ is owned by prover, we have
        $\sol{X} = \bigwedge_{ X \to \eta} \sol{\eta}$. 
        The conjunction is a union of the sets of clauses, Lemma~\ref{lem:disjunctionconjunction}. 
        Hence, clause $K$ also occurs in $\sol{w\eta}$ for some $\eta$. 
        By choosing the same mapping $z$, we get that $K'$ is also a clause in $\sol{w \eta \beta}$.   
        \item
        We prove (2).
        If $\sol{\alpha}$ is not rejecting, it contains a clause $K'$ without a rejecting box. 
        By Lemma~\ref{lem:cnf_seqcomp}, $K'$ is determined by a clause $\tbox{w};K$ of $\sol{wX}$ and a map \mbox{$z : \tbox{w};K \to \sol{\beta}$} mapping boxes to clauses. 
        Since $X$ is owned by refuter, we have $\sol{X} = \bigvee_{ X \to \eta}\sol{\eta}$.
        By the characterization of disjunction (Lemma~\ref{lem:disjunctionconjunction}), there is a representation of
        $\tbox{w};K$ as $\bigcup_{X \to \eta} \tbox{w};K_{\eta}$
        with $K_{\eta} \in \sol{\eta}$ for all $X \to \eta$.
        
        Consider a rule $X \to \eta$.
        Let $K''$ be the clause of $\sol{w \eta \beta}$ determined by the clause $\tbox{w};K_{\eta}$ and the map $z$ (nproperly restricted).   
        This clause is not rejecting.
        If $K''$ contained a rejecting box, $K'$ would also contain this box since $\tbox{w};K_{\eta} \subseteq \tbox{w};K$. 
        A contradiction.
    \end{enumerate}
\end{proof}
\begin{proof}[Proof of Theorem~\ref{thm:winning_strat_inclusion}]
    For $w\in T^*$, we have $\sol{w} = \snglt{ \tbox{w} }$. In particular, $w \in \lang{A}$ if and only if $\sol{w}$ is not rejecting.  
    This shows that $\lang{A} \subseteq \wrinclusion$ and $\overline{\lang{A}} \cap \wrinclusion = \emptyset$.
    
    We show that all positions occurring in a play conform to $\wstratinclusion$ and starting in a position from $\wrinclusion$ remain in $\wrinclusion$. 
    This proves the claim since we either obtain an infinite play or we end up in a position in $\lang{A}$. 
    In both cases, the inclusion winning condition is satisfied.
    Technically, whenever prover owns the leftmost non-terminal, the strategy will choose a rule such that the new position still has a non-rejecting formula. 
    By Lemma~\ref{lemma:inclusion_strat}(1) below, this is possible. 
    Whenever refuter owns the leftmost non-terminal, she can only choose a rule such that new position still has a non-rejecting formula by Lemma~\ref{lemma:inclusion_strat}(2).
\end{proof}

\subsection{Non-Inclusion (for Refuter)}
\label{sec:strategy_refuter}
A CNF-formula is rejecting iff for each clause chosen by prover, refuter can select a rejecting box in this clause.  
We formalize the selection process using the notion of choice functions. 
A \emph{choice function} on $F\in \CNF_A$ is a function
\mbox{$c : F \to \Boxes_A$}
selecting a box from each clause, $c(K) \in K$ for all $K \in F$.
We show that there is a strategy for refuter to derive a terminal word from one of the chosen boxes.  
In particular, the strategy will only generate finite plays. 
Note that a choice function can only exist if $F$ does not contain the empty clause.  
Otherwise, the formula is equivalent to $\ifalse$ (Lemma~\ref{lemma:falseclause}), and refuter cannot enforce termination of the derivation process. 

We show that by appropriately selecting the moves of refuter, we can refine the choice function along each play, independent on the choices of prover. 
Given a choice function $c$ on a CNF-formula $F$, a choice function $c'$ on $G$ \emph{refines} $c$ if $\Set{ c'(H) }{ H \in G } \subseteq \ChosenBoxes$, denoted by $c'(G) \subseteq c(F)$.
Given equivalent CNF-formulas, a choice function on the one can be refined to a choice function on the other formula. 
Hence, we can deal with representative formulas in the following proofs.

\begin{lemma}
\label{lemma:choice_fct}
    Consider $F \lleq G$.
    For any choice function $c$ on $F$, there is a choice function $c'$ on $G$ that refines it.
\end{lemma}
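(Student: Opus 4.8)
The plan is to use the combinatorial characterization of implication for negation-free CNF-formulas provided by Lemma~\ref{lem:leq}. Since $F \lleq G$, that lemma gives a map $j : G \to F$ with $j(H) \subseteq H$ for every clause $H \in G$; that is, every clause of $G$ contains, as a subset, some clause of $F$. Given a choice function $c$ on $F$, I would define the desired choice function $c'$ on $G$ by $c'(H) = c(j(H))$ for each clause $H \in G$.

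**Verification steps.** I would then check the two things required of $c'$. First, well-definedness as a choice function: for $H \in G$ we have $c(j(H)) \in j(H)$ because $c$ is a choice function on $F$ and $j(H)$ is a clause of $F$; since $j(H) \subseteq H$, it follows that $c'(H) = c(j(H)) \in H$, so $c'$ indeed selects a box from each clause of $G$. Second, the refinement property $c'(G) \subseteq c(F)$: every value $c'(H) = c(j(H))$ lies in $\Set{ c(K) }{ K \in F } = c(F)$, since $j(H)$ is a clause of $F$. This establishes $\Set{ c'(H) }{ H \in G } \subseteq c(F)$, which is exactly the definition of $c'$ refining $c$.

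**Handling representatives.** One subtlety is that Lemma~\ref{lem:leq} is stated for CNF-formulas, and $F \lleq G$ is really a statement about $\lsim$-equivalence classes; but as the text just before the lemma notes, a choice function on one CNF-formula can be refined to a choice function on any equivalent one, so it suffices to work with fixed CNF representatives of $F$ and $G$, and the argument above applies directly to those representatives.

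**Main obstacle.** There is essentially no hard step here: once Lemma~\ref{lem:leq} is invoked, the construction $c' = c \circ j$ is forced and the verification is immediate from $j(H) \subseteq H$. The only point that needs a word of care is making sure the empty-clause degenerate case cannot occur, but this is automatic: if $F$ had an empty clause it would be equivalent to $\ifalse$ by Lemma~\ref{lemma:falseclause} and would admit no choice function $c$ in the first place, so the hypothesis that $c$ exists already rules it out; and $F \lleq G$ with $F$ not equivalent to $\ifalse$ forces $G$ to have no empty clause either (again via Lemma~\ref{lem:leq}, since an empty clause of $G$ could not contain any clause of $F$ as a subset unless $F$ also had an empty clause).
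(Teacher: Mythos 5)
Your proposal is correct and follows exactly the paper's argument: invoke Lemma~\ref{lem:leq} to obtain the embedding $j : G \to F$ and set $c'(H) = c(j(H))$. The additional remarks on representatives and the empty-clause case are sound but not needed beyond what the paper's one-line proof already covers.
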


\begin{proof}
    By Lemma~\ref{lem:leq}, any clause $H$ of $G$ embeds a clause $j(H)$ of $F$. We can define $c'(H)$ as $c(j(H))$ to get a choice function with $c'(G)\subseteq c(F)$.
\end{proof}
To construct the strategy, we consider formulas obtained from Kleene approximants.
Define a \emph{sequence of levels} $\lvl$ associated to a sentential form $\alpha$ to be a sequence of natural numbers of the same length as $\alpha$.  
The formula $\sol{\alpha}^{\lvl}$ corresponding to $\alpha$ and $\lvl$ is defined by $\sol{a}^{i} = \snglt{\tbox{a}}$ for all $a\in T\cup\set{\varepsilon}$, $\sol{X}^{i}$ the solution to $X$ from the $i^{\text{th}}$ Kleene iteration,  
and $\sol{\alpha.\beta}^{\lvl.\lvl'} = \sol{\alpha}^{\lvl} ; \sol{\beta}^{\lvl'}$.
A choice function for $\alpha$ and $\lvl$ is a choice function on $\sol{\alpha}^{\lvl}$.
Note that $\sol{a}^{i}$ is independent of $i$ for terminals $a$.
Moreover, there is an $i_0$ so that $\sol{X}^{i_0} = \sol{X}$ for all non-terminals $X$.
This means a choice function on $\sol{\alpha}$ can be understood as a choice function on $\sol{\alpha}^{i_0}$.
Here, we use a single number $i_0$ to represent a sequence $\lvl=i_0 \ldots i_0$ of the appropriate length.

By definition, $\sol{X}^0$ is $\ifalse$ for all non-terminals, and $\ifalse$ propagates through relational composition by definition. 
We combine this observation with the fact that choice functions do not exist on formulas that are equivalent to $\ifalse$.

\begin{lemma}
\label{lem:refuterbasecase}
    If there is a choice function for $\alpha$ and $\lvl$, then $\lvl$ does not assign zero to any non-terminal $X$ in $\alpha$.
\end{lemma}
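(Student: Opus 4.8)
The plan is to prove the contrapositive: if the sequence of levels $\lvl$ assigns $0$ to some non-terminal occurring in $\alpha$, then $\sol{\alpha}^{\lvl}$ is the formula $\ifalse$, and no choice function exists on $\ifalse$. Since existence of a choice function for $\alpha$ and $\lvl$ is by definition existence of a choice function on $\sol{\alpha}^{\lvl}$, this is exactly what we need.

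First I would recall that the Kleene iteration is initialized with $\ifalse$ in every component, so $\sol{X}^0 = \ifalse$ for every non-terminal $X$. Write $\alpha = \alpha_1 \cdots \alpha_n$ with associated levels $\lvl = \lvl_1 \cdots \lvl_n$, and suppose some factor $\alpha_j$ is a non-terminal $X$ with $\lvl_j = 0$. Then the $j$-th factor of the relational composition $\sol{\alpha}^{\lvl} = \sol{\alpha_1}^{\lvl_1} ; \cdots ; \sol{\alpha_n}^{\lvl_n}$ equals $\sol{X}^0 = \ifalse$. By the base clause of the definition of relational composition, $F;\ifalse = \ifalse;G = \ifalse$, so a short induction on the number of factors shows that having one $\ifalse$ factor forces the whole composition to be $\ifalse$.

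Next I would invoke Lemma~\ref{lemma:falseclause}: in CNF we have $\ifalse = \set{ \set{} }$, so $\sol{\alpha}^{\lvl} = \ifalse$ contains the empty clause $\set{}$. A choice function on $\sol{\alpha}^{\lvl}$ would in particular have to select a box from the empty clause, i.e.\ an element of $\set{}$, which is impossible. Hence no choice function for $\alpha$ and $\lvl$ exists, contradicting the hypothesis; this establishes the lemma.

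I do not expect a real obstacle here — the argument is a direct unfolding of definitions — but the one point that deserves care is to keep the syntactic and semantic levels straight: the paper evaluates $\ifalse$-conjunctions and $\ifalse$-disjunctions on the syntactic level, so $\sol{\alpha}^{\lvl}$ is \emph{literally} $\set{\set{}}$ (not merely logically equivalent to $\ifalse$), which is precisely what makes Lemma~\ref{lemma:falseclause} and the non-existence of a choice function apply without any further reasoning.
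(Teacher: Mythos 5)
Your proof is correct and follows essentially the same route as the paper, which disposes of this lemma in the sentence immediately preceding it: $\sol{X}^0 = \ifalse$ for every non-terminal, $\ifalse$ propagates through relational composition by the base clause of Definition~\ref{def:cnf_seqcomp}, and no choice function exists on a formula containing the empty clause (Lemma~\ref{lemma:falseclause}). Your added remark about the syntactic evaluation of $\ifalse$ is a fair point of care but does not change the argument.
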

The lemma has an important consequence.
Consider a sentential form $\alpha$ with an associated sequence $\lvl \in 0^*$ and a choice function $c$ for $\alpha$ and $\lvl$.
Then $\alpha$ has to be a terminal word, $\alpha = w \in T^*$, $\sol{\alpha}^{\lvl} = \snglt{\tbox{w}}$, and the choice function has to select $\tbox{w}$. 
In particular, $w$ itself forms a maximal play from this position on, and indeed the play ends in a word whose box is contained in the image of the choice function. 

Consider now $\alpha =wX\beta$ and $\lvl$ an associated sequence of levels.
Assume $\lvl$ assigns a positive value to all non-terminals.  
Let $j$ be the position of $X$ in $\alpha$ and let $i = \lvl_j$ be the corresponding entry of $\lvl$. 
We split $\lvl = \lvl' . i . \lvl''$ into the prefix for $w$, the entry $i$ for $X$, and the suffix for $\beta$.
For each rule $X \to \eta$, we define
$
    \lvl_\eta = \lvl' . (i-1) \dots (i-1) . \lvl''
$
to be the sequence associated  to $w \eta \beta$. 
It coincides with $\lvl$ on $w$ and $\beta$ and has entry $i-1$ for all symbols in $\eta$.
Note that for a terminal word, the formula is independent of the associated level, so we have
$\sol{wX}^{\lvl'.i} = \sol{wX}^{i}$
and
$\sol{w \eta}^{\lvl'. (i-1) \ldots (i-1)} = \sol{w \eta}^{i-1}$.

We show that we can (1) always refine a choice function $c$ on $\sol{\alpha}^{\lvl}$ along the moves of prover and (2) whenever it is refuter's turn, pick a specific move to refine $c$.

\begin{lemma}
\label{lemma:strategy_refinement}
    Let $c$ be a choice function for $\alpha=wX\beta$ and $\lvl$.\\
    (1) If $X\in N_{\square}$, for all $X \to \eta$ there is a choice function $c_\eta$ for $w \eta \beta$ and $\lvl_\eta$ that refines $c$.
    (2) If $X\in N_{\roundss}$, there is $X \to \eta$ and a choice function $c_{\eta}$ for $w \eta \beta$ and $\lvl_\eta$ that \mbox{refines $c$.}
\end{lemma}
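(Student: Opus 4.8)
The plan is to unfold one derivation step at the leftmost non-terminal $X$ and use Lemma~\ref{lemma:choice_fct} together with the CNF characterizations of conjunction, disjunction, and relational composition (Lemmas~\ref{lem:disjunctionconjunction} and~\ref{lem:cnf_seqcomp}). The key structural fact is that $\sol{\alpha}^{\lvl} = \sol{w}^{\lvl'};\sol{X}^{i};\sol{\beta}^{\lvl''}$, and since $w$ is terminal, $\sol{w}^{\lvl'} = \snglt{\tbox{w}}$, so clauses of $\sol{w X}^{\lvl'.i}$ have the form $\tbox{w};K$ with $K$ a clause of $\sol{X}^{i}$. By Lemma~\ref{lem:cnf_seqcomp}, each clause of $\sol{\alpha}^{\lvl}$ is determined by such a clause $\tbox{w};K$ together with a map $z$ sending each box of $\tbox{w};K$ to a clause of $\sol{\beta}^{\lvl''}$. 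Since $c$ exists, $\lvl$ assigns a positive value to every non-terminal by Lemma~\ref{lem:refuterbasecase}, so the Kleene unfolding $\sol{X}^{i} = f_X(\sol{}^{i-1})$ with $i\geq 1$ is available, and $\sol{w\eta}^{\lvl'.(i-1)\dots} = \sol{w\eta}^{i-1} = \snglt{\tbox{w}};\sol{\eta}^{i-1}$.

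For part~(1), $X\in N_\square$ means $\sol{X}^{i} = \bigwedge_{X\to\eta}\sol{\eta}^{i-1}$, which by Lemma~\ref{lem:disjunctionconjunction} is the union $\bigcup_{X\to\eta}\sol{\eta}^{i-1}$ of the clause-sets. Hence for each rule $X\to\eta$ we have $\sol{\eta}^{i-1}$ as a subset of the clauses of $\sol{X}^{i}$, so $\sol{w\eta\beta}^{\lvl_\eta}$ is a ``sub-CNF'' of $\sol{\alpha}^{\lvl}$ in the sense that every clause of $\sol{w\eta\beta}^{\lvl_\eta}$ is already a clause of $\sol{\alpha}^{\lvl}$ (same composition data $\tbox{w};K$, same $z$). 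Therefore $\sol{\alpha}^{\lvl} \lleq \sol{w\eta\beta}^{\lvl_\eta}$, and Lemma~\ref{lemma:choice_fct} yields the refining choice function $c_\eta$. For part~(2), $X\in N_\roundss$ means $\sol{X}^{i} = \bigvee_{X\to\eta}\sol{\eta}^{i-1}$, so by Lemma~\ref{lem:disjunctionconjunction} every clause $K$ of $\sol{X}^{i}$ decomposes as $K = \bigcup_{X\to\eta}K_\eta$ with $K_\eta\in\sol{\eta}^{i-1}$; mirroring the argument in Lemma~\ref{lemma:inclusion_strat}(2), pick the rule $X\to\eta$ as follows. The box $c(\tbox{w};K;z)$ chosen by $c$ in each clause of $\sol{\alpha}^{\lvl}$ lies in some composed clause; since there are finitely many rules, one can select a single $\eta$ that works uniformly — more carefully, one first argues $\sol{\alpha}^{\lvl} \lleq \bigvee_{X\to\eta}\sol{w\eta\beta}^{\lvl_\eta}$ from the disjunction characterization, and then, because $c$ is a choice function on the left-hand formula, the standard fact that a choice function on a disjunction's CNF restricts to a choice function on at least one disjunct (again Lemma~\ref{lemma:choice_fct} applied disjunct-wise, using that the chosen boxes all persist into at least one $\sol{w\eta\beta}^{\lvl_\eta}$ since $\tbox{w};K_\eta \subseteq \tbox{w};K$) gives the desired $\eta$ and $c_\eta$.

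The main obstacle is part~(2): unlike the prover case, where any rule works and the CNF only shrinks, here refuter must commit to \emph{one} rule, and one must ensure a single $\eta$ simultaneously accommodates the refinement for \emph{all} clauses of $\sol{\alpha}^{\lvl}$ at once, not clause by clause. The resolution is that the decomposition $K = \bigcup_\eta K_\eta$ of a clause of $\sol{X}^{i}$ into per-rule subclauses is, by Lemma~\ref{lem:disjunctionconjunction}, exactly a \emph{product} over the rules, so $\sol{\alpha}^{\lvl}$ itself is logically equivalent to $\bigvee_{X\to\eta}\sol{w\eta\beta}^{\lvl_\eta}$; a choice function on a disjunction-in-CNF (distributed out) picks, in each top-level clause, one box, and that box comes from one of the disjuncts — pigeonholing over the finitely many disjuncts and using that the subset relation $\tbox{w};K_\eta \subseteq \tbox{w};K$ makes chosen boxes survive restriction, we extract the single winning $\eta$. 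I would isolate this as the crux and handle it by directly invoking Lemma~\ref{lemma:choice_fct} against each disjunct $\sol{w\eta\beta}^{\lvl_\eta} \lleq \sol{\alpha}^{\lvl}$ — wait, the implication goes the other way for a disjunct, so instead I use that $c$, restricted via the embedding $j$ of Lemma~\ref{lem:leq} from some disjunct into $\sol{\alpha}^{\lvl}$, already lands in the clauses contributed by that disjunct; finiteness of the rule set guarantees such a disjunct exists.
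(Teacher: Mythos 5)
Your part (1) is correct and follows the paper's own route: by Lemma~\ref{lem:disjunctionconjunction} the conjunction $\sol{X}^i=\bigwedge_{X\to\eta}\sol{\eta}^{i-1}$ is a union of clause sets, so every clause of $\sol{w\eta\beta}^{\lvl_\eta}$ (determined by a clause $\tbox{w};K$ of $\sol{w\eta}^{i-1}$ and a map $z$ into $\sol{\beta}^{\lvl''}$) is already a clause of $\sol{\alpha}^{\lvl}$, and $c$ restricts to the desired $c_\eta$.

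Part (2) has a genuine gap at exactly the point you yourself identify as the crux. You correctly reduce to the claim: since $\sol{\alpha}^{\lvl}$ distributes as $\bigvee_{X\to\eta}\sol{w\eta\beta}^{\lvl_\eta}$, there is a \emph{single} rule $\eta$ such that \emph{every} clause of $\sol{w\eta\beta}^{\lvl_\eta}$ meets the image of $c$. Neither of your two proposed resolutions establishes this. The pigeonhole as stated only yields that the box chosen in each individual clause of $\sol{\alpha}^{\lvl}$ comes from the contribution of \emph{some} disjunct; different clauses may point to different disjuncts, so no single $\eta$ is extracted. The fallback via the embedding $j$ of Lemma~\ref{lem:leq} witnessing $\sol{w\eta\beta}^{\lvl_\eta}\lleq\sol{\alpha}^{\lvl}$ also fails: $j$ sends a clause $H$ of $\sol{\alpha}^{\lvl}$ to a subclause $j(H)\subseteq H$ inside the disjunct, but $c(H)$ need not lie in $j(H)$, so ``restricting $c$ via $j$'' does not produce a choice function on the disjunct; the sentence ``finiteness of the rule set guarantees such a disjunct exists'' asserts precisely the statement to be proved. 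The paper closes this by contradiction: if no rule works, then for each $\eta$ there is a clause $K''_\eta$ of $\sol{w\eta\beta}^{\lvl_\eta}$ missed by the image of $c$, determined by a clause $\tbox{w};K'_\eta$ of $\sol{w\eta}^{i-1}$ and a map $z_\eta$ into $\sol{\beta}^{\lvl''}$; by the product structure of disjunction (Lemma~\ref{lem:disjunctionconjunction}) and Lemma~\ref{lem:cnf_seqcomp}, the union $\tbox{w};(\bigcup_{X\to\eta}K'_\eta)$ together with $z'=\bigcup_{X\to\eta}z_\eta$ determines a clause of $\sol{\alpha}^{\lvl}$, and the box that $c$ selects from that clause necessarily lies in some $K''_\eta$ --- a contradiction. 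That assembly of one bad clause per rule into a single clause of $\sol{\alpha}^{\lvl}$ is the one missing idea; the rest of your setup (the decomposition via Lemma~\ref{lem:cnf_seqcomp}, the use of Lemma~\ref{lem:refuterbasecase} to ensure $i\geq 1$) is sound.
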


\begin{proof}
    \mbox{}\\
    \vspace*{-0.6cm}
    \begin{enumerate}[(1)]
        \item
            Let $F = \sol{\alpha}^{\lvl}$ and $F_\eta = \sol{w\eta\beta}^{\lvl_\eta}$.
            By Lemma~\ref{lem:cnf_seqcomp}, the clauses of $F$ are given by a clause $\tbox{w};K$ of $\sol{wX}^{\lvl'.i}=\sol{wX}^{i}$ and a function mapping the boxes in this clause to
            $\sol{\beta}^{\lvl''}$. 
            Similarly, the clauses of $F_\eta$ are given by a clause of $\sol{w \eta}^{i-1}$ and a mapping from the boxes to $\sol{\beta}^{\lvl''}$.
            We have $\sol{X}^i = \bigwedge_{X \to \eta} \sol{\eta}^{i-1}$.
            Since the conjunction corresponds to a union of the clause sets, Lemma~\ref{lem:disjunctionconjunction}, every clause of $\sol{w\eta}^{i-1}$ is already a clause of $\sol{wX}^{i}$. 
            Hence, the clauses of $F_\eta$ form a subset of the clauses of $F$.
            Since $c$ selects a box from every clause of $F$, we can define the refinement $c_\eta$ on $F_\eta$ by restricting $c$.
        \item
           We show that there is a rule $X \to \eta$ and a choice function $c_\eta$ on $\sol{w\eta \beta}^{\lvl_\eta}$ refining $c$.
           Towards a contradiction, assume this is not the case.
           Then for each rule $X \to \eta$, there is at least one clause $K_\eta''$ of $\sol{w\eta \beta}^{\lvl_\eta}$ that does not contain a box in the image of $c$.
           By Lemma~\ref{lem:cnf_seqcomp}, this clause is defined by a clause $\tbox{w};K_\eta'$ of $\sol{w\eta}^{i-1}$ and a function $z_\eta$ mapping the boxes from this clause to $\sol{\beta}^{\lvl''}$.
           
           We have $\sol{X}^i = \bigvee_{X \to \eta} \sol{\eta}^{i-1}$.
           A clause of $\sol{wX}^i$ is thus (Lemma~\ref{lem:disjunctionconjunction}) of the form 
           \[
               K=\tbox{w};( \bigcup_{X \to \eta} K_\eta)=\bigcup_{X \to \eta} \tbox{w};K_\eta
               \ ,
            \]
            where each $K_\eta$ is a clause of $\sol{\eta}^{i-1}$.
            We construct the clause $K' = \tbox{w}; ( \bigcup_{X \to \eta} K_\eta')$ of $\sol{wX}^i$ using the $K_\eta'$ from above.
            On this clause, we define the map $z' = \bigcup_{X \to \eta} z_\eta$ that takes a box $\tbox{w};\tboxr \in \tbox{w};K_\eta'$ and returns $z_\eta(\tbox{w};\tboxr)$. 
            (If a box $\tbox{w};\tboxr$ is contained in $\tbox{w};K_\eta'$ for several $\eta$, pick an arbitrary $\eta$ among these.)
            By Lemma~\ref{lem:cnf_seqcomp}, $K'$ and $z'$ define a clause of $\sol{\alpha}^{\lvl}$. 
            The choice function $c$ selects a box $\tbox{w};\tboxr;\tboxt$ out of this clause, where there is a rule $X \to \eta$ such that $\tboxr \in K_\eta'$ and
            $\tboxt \in z' (\tbox{w};\tboxr) = z_\eta(\tbox{w};\tboxr)$.
            This box is also contained in $K_\eta''$.  
            A contradiction to the assumption that no box from $K_\eta''$ is in the image of $c$.
    \end{enumerate}
    \vspace*{-0.4cm}
\end{proof}
\noindent Notice that the sequence $\lvl_{\eta}$ is smaller than $\lvl$ in the following ordering
$\natlt$ on $\N^*$. 
Given $v, w \in \N^*$, we define $v \natlt w$ if there are decompositions $v = x y z$ and $w = x i z$  so that $i > 0$ is a positive number and $y \in \N^*$ is a sequence of numbers that are all strictly smaller than~$i$.
Note that requiring $i$ to be positive will prevent the sequence $xz$ from being smaller than $x0z$, since we are not allowed to replace zeros by $\varepsilon$.

The next lemma states that $\natlt$ is well founded.
Consequently, the number of derivations $w X \beta\Rightarrow w \eta \beta$ following the strategy that refines an initial choice function will be finite.

\begin{lemma}
\label{lem:wellfounded}
    $\natlt$ on $\N^*$ is  well founded with minimal elements $0^*$.
\end{lemma}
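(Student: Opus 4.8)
I want to show that the relation $\natlt$ on $\N^*$ is well founded and that its minimal elements are exactly the sequences in $0^*$. The standard route is to exhibit an order-preserving map into a well-known well-founded order. The natural target is the ordinals below $\omega^\omega$, or equivalently finite multisets of natural numbers under the Dershowitz--Manna multiset order. Given $w = w_1 \dots w_n \in \N^*$, I would assign the multiset $M(w) = \{\!\!\{ w_1, \dots, w_n \}\!\!\}$ consisting of the entries of $w$ with multiplicity, but I must be careful: a zero entry contributes an element that can never be removed by a $\natlt$-step (since we may only replace a positive $i$ by strictly smaller numbers, never delete it), so zeros behave like an immovable base. Concretely, I would map $w$ to the multiset of its \emph{positive} entries; the zeros contribute nothing and are exactly what witnesses minimality. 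Alternatively, and perhaps more transparently given the paper's later use, one can map $w = w_1 \dots w_n$ to the ordinal $\sum_{j=1}^n |G|^{w_j}$ as already suggested by the attractor-level function in the emptiness-game proof, but since we want a statement independent of any grammar I will phrase it via multisets.

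**Key steps.** First, I would verify that a single $\natlt$-step strictly decreases $M(\cdot)$ in the multiset order: if $v = xyz$ and $w = xiz$ with $i > 0$ and every entry of $y$ strictly below $i$, then $M(v)$ is obtained from $M(w)$ by removing one copy of $i$ and adding the (possibly empty) multiset of positive entries of $y$, each of which is $< i$. This is precisely one Dershowitz--Manna reduction step, so $M(v) \prec M(w)$. Since the multiset order over $(\N, <)$ is well founded (a classical fact I may cite, e.g.\ \cite{LatticesAndOrder} or by folklore), there is no infinite descending $\natlt$-chain, so $\natlt$ is well founded. Second, for the characterization of minimal elements: if $w \in 0^*$ then $M(w) = \emptyset$, the least multiset, so no $v \natlt w$ can exist (a decomposition $w = xiz$ with $i > 0$ is impossible); hence every element of $0^*$ is minimal. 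Conversely, if $w \notin 0^*$, then $w$ has some positive entry, say $w = x i z$ with $i > 0$; taking $y = \varepsilon$ (which trivially has all entries below $i$) gives $v = xz \natlt w$, so $w$ is not minimal. Thus the minimal elements are exactly $0^*$.

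**Main obstacle.** The only subtlety — and the reason I would not just blindly invoke the multiset order — is the treatment of zeros: a naive map sending $w$ to the multiset of \emph{all} its entries fails to be strictly decreasing when $y$ contains zeros that get "added back", but this is harmless since zeros are minimal in $(\N,<)$; still, the cleanest fix is to drop zeros entirely from the multiset, and then one must double-check that the step from $xiz$ to $xyz$ with $i>0$ genuinely removes the copy of $i$ (it does, since $i>0$ means $i$ is actually in the positive-entry multiset) and only adds elements strictly below $i$ (true for all entries of $y$, in particular the positive ones). Once this bookkeeping is pinned down, the rest is routine. I would also make sure the empty decomposition $y = \varepsilon$ is explicitly allowed in the definition of $\natlt$ so that the converse direction of the minimality claim goes through; the paper's definition does allow $y \in \N^*$ to be empty, so this is fine.
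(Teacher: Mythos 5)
Your proof is correct, and it is the same idea as the paper's proof dressed in slightly different clothing. Both arguments abstract a sequence to the counts of its entries and observe that a $\natlt$-step removes one occurrence of a positive value $i$ and adds only occurrences of values strictly below $i$; both handle minimality identically (a positive entry can be erased by taking $y=\varepsilon$, while $0^*$ admits no decomposition $xiz$ with $i>0$). The difference is in how the target order's well-foundedness is obtained: you invoke the Dershowitz--Manna multiset order over $(\N,<)$ as a black box, whereas the paper first notes that no element of a descending chain can contain an entry exceeding the maximal entry $b$ of the chain's first element, and then maps each sequence to its Parikh image in $\N^{b+1}$, which decreases strictly in the lexicographic order (most significant coordinate being the count of the largest value). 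The paper's route is self-contained and elementary but needs the boundedness observation to land in a finite-dimensional lexicographic product; yours skips that step at the cost of citing the multiset-order theorem. Your side remarks are fine but slightly over-cautious: even the naive map to the multiset of \emph{all} entries decreases strictly under a $\natlt$-step, since adding zeros is still adding elements strictly below $i>0$; restricting to positive entries is a clean but not necessary fix.
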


\begin{proof}
    Note that any element of $\N^*$ containing a non-zero entry is certainly not minimal, since we can obtain a smaller element by replacing any non-zero entry by $\varepsilon$. Any element of the form $0^*$ is minimal, since there is no $i$ as required by the definition of $\natlt$.
    
    Assume $v_0 \natgt v_1 \natgt \dots$ is an infinite descending chain. Let $b$ be the maximal entry of $v_0$, i.e.\ $b = \max_{j = 1, ..., |v_0|} v_0$, and note that no $v_l$ with $l \in \N$ can contain an entry larger than $b$ by the definition of $\natlt$.
    Therefore, we may map each $v_l$ to its Parikh image $\psi(v_l) \in \N^{b + 1}$, the vector such that $\psi(v_l)_j$ (for $j \in \zeroto{b}$) is the number of entries equal to $j$ in $v_l$.
    
    Now note that we have $\psi(v_i) > \psi(v_{i+1})$ with respect to the lexicographic ordering on $\N^{b + 1}$.
    Hence, the chain
    $\psi(v_0) > \psi(v_1) > \dots$
    is an infinite descending chain, which cannot exist since the lexicographic ordering is known to be well-founded.
\end{proof}
Lemma~\ref{lem:wellfounded} is used in the main technical result of this section. 
Proposition~\ref{prop:choice_fct} in particular says that all maximal plays that conform to $s_{\alpha,c}$ are finite.
If $\sol{\alpha}$ is rejecting, there is a choice function on $\sol{\alpha}$ that only selects rejecting boxes. The desired theorem is then immediate.   
\begin{proposition}
\label{prop:choice_fct}
    Let $c$ be a choice function on $\sol{\alpha}$.
    There is a strategy $s_{\alpha,c}$ such that all maximal plays starting in $\alpha$ that conform to $s_{\alpha,c}$ end in a terminal word $w$ with
    $\tbox{w} \in c(\sol{\alpha})$.
\end{proposition}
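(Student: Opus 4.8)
The plan is to construct the strategy $s_{\alpha,c}$ recursively, maintaining as an invariant that at every position $wX\beta$ reached in a conforming play there is an associated level sequence $\lvl$ and a choice function $c_{wX\beta}$ for $wX\beta$ and $\lvl$ whose image is contained in $c(\sol{\alpha})$. Start by fixing $i_0$ with $\sol{X}^{i_0} = \sol{X}$ for all non-terminals $X$, so that $c$ on $\sol{\alpha}$ is also a choice function for $\alpha$ and the constant sequence $\lvl = i_0 \ldots i_0$ (of length $|\alpha|$). This is the initial data. The strategy is then defined locally: at a position $wX\beta$ with current data $(\lvl, c_{wX\beta})$, if $X$ is owned by refuter, apply Lemma~\ref{lemma:strategy_refinement}(2) to pick a rule $X \to \eta$ and a refining choice function $c_\eta$ for $w\eta\beta$ and $\lvl_\eta$; this defines $s_{\alpha,c}$ at that position. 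If $X$ is owned by prover, whatever rule $X \to \eta$ prover picks, Lemma~\ref{lemma:strategy_refinement}(1) supplies a refining choice function $c_\eta$ for $w\eta\beta$ and $\lvl_\eta$. In both cases the new data is $(\lvl_\eta, c_\eta)$, and since refinement is transitive, $c_\eta(\sol{w\eta\beta}^{\lvl_\eta}) \subseteq c_{wX\beta}(\sol{wX\beta}^{\lvl}) \subseteq c(\sol{\alpha})$, so the invariant is preserved.

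Next I would argue termination of every conforming play. Each derivation step $wX\beta \Rightarrow w\eta\beta$ replaces the level sequence $\lvl$ by $\lvl_\eta$, and by the observation immediately preceding Lemma~\ref{lem:wellfounded}, $\lvl_\eta \natlt \lvl$. Since $\natlt$ is well founded (Lemma~\ref{lem:wellfounded}), there can be no infinite conforming play: an infinite play would yield an infinite strictly descending chain in $(\N^*, \natlt)$. Hence every maximal conforming play from $\alpha$ is finite and therefore ends in a terminal word $w$.

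Finally I would identify the box of the terminal word at the end of a conforming play. When a play reaches a terminal word $w$, the associated level sequence $\lvl$ still carries a choice function $c_w$ for $w$ and $\lvl$ with image inside $c(\sol{\alpha})$. By Lemma~\ref{lem:refuterbasecase}, since a choice function for $w$ and $\lvl$ exists and $w$ contains no non-terminals, $\sol{w}^{\lvl} = \snglt{\tbox{w}}$, so the only choice function on it selects $\tbox{w}$; thus $c_w$ has image $\set{\tbox{w}}$, giving $\tbox{w} \in c(\sol{\alpha})$, as required. The main obstacle is making the recursive bookkeeping precise: a strategy formally takes an entire play history, so I must be careful that the pair $(\lvl, c)$ is genuinely determined by the play prefix (it is, since each move deterministically updates it given prover's choice), and that the level sequence decomposition $\lvl = \lvl'.i.\lvl''$ used in Lemma~\ref{lemma:strategy_refinement} tracks exactly the position of the leftmost non-terminal — the part of the play that has already been rewritten to a terminal prefix keeps its levels frozen, which is consistent because $\sol{-}$ on terminal words is level-independent. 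Everything else is a direct assembly of the lemmas already proved.
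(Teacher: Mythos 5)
Your proof is correct and follows essentially the same route as the paper's: both define the strategy by always selecting the rule that affords a refinement of the choice function (Lemma~\ref{lemma:strategy_refinement}), use the well-foundedness of $\natlt$ from Lemma~\ref{lem:wellfounded} on the decreasing level sequences to force termination, and conclude via Lemma~\ref{lem:refuterbasecase} that the terminal word's box lies in the image of the original choice function. The only presentational difference is that you phrase it as invariant maintenance plus well-founded termination where the paper packages the same ingredients as a Noetherian induction.
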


\begin{proof}
    We show the following stronger claim: 
    Given any triple consisting of a sentential form $\alpha$, an associated sequence of levels $\lvl$, and a choice function $c$ for $\alpha$ and $\lvl$, there is a strategy $s_{\alpha,c}$ such that all maximal plays conform to it and starting in $\alpha$ end in a terminal word $w$ with $\tbox{w} \in \Set{ c(K) }{ K \in \sol{\alpha} }$\ . 
    This proves the proposition by choosing $\alpha$ and $c$ as given and $\lvl = i_0 ... i_0$, where $i_0 \in \N$ is a number such that $\sol{} = \sol{}^{i_0}$. 
    
    To show the claim, note that $\natlt$ on $\N^*$ is well founded and the minimal elements are exactly $0^*$ by Lemma~\ref{lem:wellfounded}, and $\lvl_\eta \natlt \lvl$. 
    This means we can combine Lemma~\ref{lem:refuterbasecase} and  Lemma~\ref{lemma:strategy_refinement} (for the step case) into a Noetherian induction. 
    The latter lemma does not state that $\lvl_{\eta}$ assigns a positive value to each non-terminal, which was a requirement on $\lvl$. 
    This follows from Lemma~\ref{lem:refuterbasecase} and the fact that $c_{\eta}$ is a choice function. 
    The strategy $s_{\alpha,c}$ for refuter always selects the rule that affords a refinement of the initial choice function $c$.   
\end{proof}

\begin{theorem}
\label{thm:winning_strat_noninclusion}
    Let $\alpha \in \wrnoninclusion$ and let $c$ select a rejecting box in each clause of $\sol{\alpha}$. 
    Then $s_{\alpha,c}$ is a winning strategy for refuter for the non-inclusion game played from~$\alpha$.
\end{theorem}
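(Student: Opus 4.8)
The plan is to combine Proposition~\ref{prop:choice_fct} with the observation that a rejecting formula always admits a choice function selecting only rejecting boxes, and then verify that the resulting strategy actually meets the non-inclusion winning condition. First I would note that since $\alpha \in \wrnoninclusion$, the formula $\sol{\alpha}$ is rejecting by definition, i.e.\ it evaluates to $\itrue$ under the assignment $\nu$ that sets to $\itrue$ exactly the boxes that do \emph{not} contain a pair $(q_0, q_f)$ with $q_f \in Q_F$. Because $\sol{\alpha}$ is in CNF and negation-free, evaluating to $\itrue$ under $\nu$ means every clause $K$ of $\sol{\alpha}$ contains at least one box $\rho$ with $\nu(\rho) = \itrue$, that is, a rejecting box. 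Hence we may pick, for each clause $K$, such a rejecting box $c(K) \in K$; this defines a choice function $c$ on $\sol{\alpha}$ whose image consists solely of rejecting boxes. (In particular $\sol{\alpha}$ cannot contain the empty clause, so by Lemma~\ref{lemma:falseclause} it is not equivalent to $\ifalse$, which is exactly what makes a choice function possible.)

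Next I would invoke Proposition~\ref{prop:choice_fct} for this $\alpha$ and $c$: it yields a strategy $s_{\alpha,c}$ for refuter such that every maximal play starting in $\alpha$ that conforms to $s_{\alpha,c}$ ends in a terminal word $w$ with $\tbox{w} \in c(\sol{\alpha})$. Two things need to be extracted from this. First, every such maximal play is \emph{finite} — it ends in a terminal word — so the infinite-play escape route that would let prover satisfy the inclusion condition is closed off. Second, the final box $\tbox{w}$ lies in the image of $c$, hence is a rejecting box, which by definition means $\tbox{w}$ contains no pair $(q_0, q_f)$ with $q_f \in Q_F$. Unfolding the definition of $\tbox{w}$, this says precisely that there is no run of $A$ on $w$ from $q_0$ to a final state, i.e.\ $w \notin \lang{A}$, equivalently $w \in \overline{\lang{A}}$. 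Therefore the maximal play is finite with $p_{\mathit{last}} = w \in \overline{\lang{A}}$, which is exactly the non-inclusion winning condition of Definition~\ref{def:inclusion_game}.

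Putting it together: every play from $\alpha$ conforming to $s_{\alpha,c}$ is maximal-extendable to such a finite play, and the winning condition is a reachability (safety-complement) condition, so reaching a terminal word in $\overline{\lang{A}}$ at the end of a conforming maximal play is enough to conclude that $s_{\alpha,c}$ is winning for refuter from $\alpha$. I do not expect genuine difficulty here — the proof is essentially a one-line application of Proposition~\ref{prop:choice_fct} once the choice function is set up correctly. The only point that requires a small amount of care is the bookkeeping around the definitions: making sure that ``rejecting box'' (no $(q_0,q_f)$ pair) translates correctly into ``$w \notin \lang{A}$'' via the definition of $\tbox{-}$ and $\lang{A}$, and confirming that a conforming maximal play is indeed finite rather than getting stuck or running forever. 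The latter is guaranteed because Proposition~\ref{prop:choice_fct} asserts all conforming maximal plays end in a terminal word, which rests in turn on the well-foundedness of $\natlt$ (Lemma~\ref{lem:wellfounded}) driving the Noetherian induction in its proof.
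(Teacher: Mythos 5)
Your proposal is correct and takes essentially the same route as the paper: the paper also observes that a rejecting CNF formula admits a choice function selecting a rejecting box from every clause, and then declares the theorem immediate from Proposition~\ref{prop:choice_fct}, which guarantees that all conforming maximal plays are finite and end in a terminal word whose box lies in the image of $c$, hence outside $\lang{A}$. Your extra bookkeeping (unfolding ``rejecting box'' into $w \notin \lang{A}$ via the definition of $\tbox{-}$, and noting that finiteness closes off prover's infinite-play escape) is exactly the right content, just spelled out more explicitly than the paper bothers to.
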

\paragraph*{Implementing Winning Strategies}
The strategy $s_{\alpha,c}$ from Proposition~\ref{prop:choice_fct} is not positional.
However, it can be implemented as a \emph{strategy with finite memory}. 
We initialize it with a choice function on $\sol{\alpha}$ and keep track of the current refinement of this function in each play. 
Observe that it suffices to store the image of the current choice function, which is a set of boxes, and the number of boxes is finite.
To further optimize the implementation, instead of storing all Kleene approximants we annotate every box in the fixed point by the iteration step in which it entered the solution.
Rather than selecting an arbitrary rejecting box from each clause to initialize the choice function, one should then choose the rejecting box which entered the solution the earliest.

The strategy can be implemented without levels but at the expense of an enumeration. 
Whenever prover makes a move, the strategy computes the refinement of the choice function.
Given a position $wX\beta$ owned by refuter, it checks in increasing order, for all $j = 0, 1, \ldots$ whether there is a rule $X \to \eta$ such that there is a refinement of the current choice function on $\tbox{w};\sol{\eta}^j; \sol{\beta}$. 
Lemma~\ref{lemma:strategy_refinement}(2) guarantees the existence of such a rule.
Moreover, in terms of levels, $j$ will be smaller than the current level of $X$, say $i$. 
This shows termination of the overall procedure.  
Another advantage of this enumeration of Kleene approximants is that we may find a refinement of the choice function with an iteration number $j<i-1$. 

One can also use a pushdown automaton to implement the strategy. Its stack will always have one entry for each non-terminal of the sentential form currently under consideration, storing the symbol, the level, and the formula for the corresponding suffix of the sentential form.
The automaton iterates over the grammar rules and uses the stored level for the current non-terminal and the formula for the suffix to determine which rule to pick.
The implementation using bounded space requires linear time (in the size of the current sentential form) to select rules.
In contrast to this, the pushdown strategy needs to be initialized once for the initial position and then can determine rules in constant time.

As soon as we know that a winning strategy for refuter exists (i.e.\ by evaluating the formula for the initial position), we can find one by a breadth-first search in the tree formed by all plays. 
The winning strategy will only generate finite plays and the tree has finite out-degree. 
So, by K\"onig's lemma, the tree formed by all plays conform to this strategy has to be finite.  
This allows us to obtain a \emph{positional strategy}.

\begin{example}
    In the running example, 
    formula $\sol{Y} = \set{ \set{ \tbox{b}  } }$ is rejecting.   
    In fact, refuter can win the non-inclusion game played from $Y$. 
    The initial choice function on $\sol{Y}$ has to be $c ( \set{ \tbox{b}  }) = \tbox{b}$. 
    In the first step, prover has no alternative but $Y \to bX$.
    Position $bX$ has the formula 
    $\sol{bX} 
    = \set{ \set{ \tbox{b}  } }; \set{ \set{ \id, \tbox{ab} } }
    = \set{ \set{ \tbox{b}, \tbox{bab} } }
    = \set{ \set{ \tbox{b} } }$.
    Pick the same choice function as before.  
    The rule $X \to \varepsilon$ causes $\id$ to enter $\sol{X}$ in the first Kleene step. 
    This causes $\tbox{b}$ to enter $\sol{bX}$ also in the first step.
    Indeed, by choosing $X \to \varepsilon$ refuter wins non-inclusion.
    \tqed
\end{example}

\section{Complexity}
\label{sec:complexity}

We show that deciding whether refuter has a winning strategy for non-inclusion from a given position is a \textsf{2EXPTIME}-complete problem.
Moreover, the algorithm presented in the previous sections
achieves this optimal time complexity.

\subsection{Hardness}

We prove that deciding the non-inclusion game is $\mathsf{2EXPTIME}$-hard.
Our proof of the lower bound follows the proof of the analogue result for the games considered in~\cite{Muscholl2005}.

\begin{theorem}
\label{thm:hardness}
    Given a non-inclusion game and an initial position, deciding whether refuter has a winning strategy from the specified position is $\mathsf{2EXPTIME}$-hard.
\end{theorem}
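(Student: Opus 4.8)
The plan is to prove the lower bound by a reduction from the word problem for alternating exponential-space Turing machines (equivalently, $\mathsf{2EXPTIME}$), following the blueprint of the analogous hardness argument in~\cite{Muscholl2005}. Fix such a machine $M$ that on an input $x$ of length $n$ uses at most $2^n$ tape cells; we must encode accepting computations as plays won by refuter in an inclusion game of polynomial size. A configuration is a string of length $2^n$ over (tape alphabet $\times$ states)$\cup$(tape alphabet); a computation is a sequence of such configurations. The derivation process of the grammar will generate, non-terminal by non-terminal, the symbols of the configuration sequence, with the ownership partitioning mirroring $M$'s alternation: existential states of $M$ become refuter-owned non-terminals (refuter picks the successor configuration / the transition), universal states become prover-owned. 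Refuter wins iff the guessed computation is a genuine accepting run.

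**The key steps, in order, are the following.** First, set up the grammar so that a left-derivation produces a word of the form $c_0 \# c_1 \# c_2 \# \cdots$, where each $c_t$ is a configuration block; the non-terminals carry, in their finite control, the current head-neighbourhood and the chosen transition, so that consistency of local updates (the cell under the head and its neighbours change according to the transition, all other cells are copied) is enforced directly by the production rules. The only thing the finite-size grammar cannot enforce locally is that a cell at position $p$ in $c_{t+1}$ is consistent with the cells at positions $p-1,p,p+1$ in $c_t$, because the blocks have exponential length $2^n$ and positions must be addressed by $n$-bit counters. Second, following the standard trick, delegate exactly this check to the automaton $A$: refuter must, when challenged, point to an alleged inconsistency, and the automaton $A$ verifies that the pointed-to pair of cells (at the same $n$-bit address in consecutive blocks) is in fact \emph{consistent} — so that \emph{any} cheating by the player who is really supposed to lose gets caught. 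Concretely, the automaton recognises the language of "badly formed" or "correctly refuted" witness words, and the non-inclusion condition $p_{\mathit{last}} \notin \lang{A}$ becomes "refuter produced a word encoding a correct accepting computation (or a correct refutation of prover's cheating)". Third, argue the two directions of correctness: if $M$ accepts $x$, refuter has a strategy to derive a word outside $\lang{A}$ by faithfully simulating an accepting subtree of $M$'s computation tree and answering every consistency challenge correctly; conversely, if refuter has a winning strategy, extract from it an accepting computation of $M$ (here one uses that whenever refuter would be forced to cheat on a local or counter-consistency condition, prover can challenge and push the resulting word into $\lang{A}$). Fourth, check that grammar and automaton have size polynomial in $n$ and in the description of $M$: the $n$-bit position counters are the only source of size, and they contribute only polynomially since we write the $n$ counter bits explicitly as terminal symbols interleaved with the cell contents.

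**The main obstacle** I expect is getting the interface between "what the grammar enforces locally" and "what the automaton checks" exactly right, in particular handling the exponential-length position counters: the grammar must maintain and increment an $n$-bit counter across a configuration block using only polynomially many non-terminals (so the counter bits are emitted as terminals and the rules just append the next bit, with carry propagation encoded in a constant amount of control), and the automaton must be able to compare the $n$-bit address of a challenged cell in $c_t$ against the address of the corresponding cell in $c_{t+1}$ — which an NFA can do because it only needs to remember one $n$-bit string at a time, i.e. a polynomial-size DFA suffices for each fixed comparison. A second delicate point is the direction of the inclusion: since refuter wins on \emph{non}-inclusion and prover wins on infinite plays, one must make sure the encoding cannot be abused by refuter to force an infinite play that never commits to a terminal word — this is handled by making every configuration block finite and every transition step strictly advance the computation, so that a halting (accepting) run yields a finite play while a non-accepting branch either terminates in a rejecting configuration whose word lies in $\lang{A}$, or is cut off by prover's challenge. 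The remaining steps (local-consistency rules, ownership assignment, size bookkeeping) are routine once this interface is fixed, and the overall structure is exactly parallel to~\cite{Muscholl2005}, so I would keep the exposition terse and cite that work for the shared combinatorial core.
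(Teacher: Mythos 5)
Your overall architecture coincides with the paper's: a reduction from an alternating exponential-space Turing machine in which the grammar generates the configuration sequence (with ownership mirroring the alternation) and a challenge--response marking protocol lets a polynomial-size automaton check consistency conditions on exponentially long configurations. The problem is that your resolution of what you yourself flag as the main obstacle rests on two claims that are false, and they sit exactly at the technical heart of the reduction.

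First, a polynomial-size grammar cannot ``maintain and increment an $n$-bit counter \ldots with carry propagation encoded in a constant amount of control.'' Once the address bits of cell $p$ have been emitted as terminals, the grammar's finite control --- polynomially many non-terminals, hence only $O(\log n)$ bits of state --- cannot determine the bits of $\mathrm{bin}(p+1)$, because the increment depends on all $n^{c}$ bits of $\mathrm{bin}(p)$. The same objection defeats your earlier claim that the production rules enforce ``all other cells are copied.'' In the paper's construction the addresses are written unconstrained and their correctness is among the conditions checked by the automaton: prover places an $\mathit{ONUM}$ objection in front of an allegedly bad pair of adjacent cells, and the automaton verifies $j\neq i+1$ by nondeterministically guessing the position of the rightmost deviation. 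Second, your justification that the automaton can compare the two addresses ``because it only needs to remember one $n$-bit string at a time'' is wrong: remembering an arbitrary $n$-bit input substring requires $2^{n}$ states. The correct (and standard) argument is that the automaton guesses a single bit position $\ell\in\{1,\dots,n^{c}\}$, stores only $\ell$ and the bit of the first address at position $\ell$, and checks position $\ell$ of the second address; this is also why the second marker (refuter's justification $J$ on the corresponding cell of the previous configuration) is indispensable, since without it the two cells to be compared are exponentially far apart and a polynomial automaton cannot even locate the second one. Relatedly, the roles in your protocol are garbled --- it is prover who points to an alleged inconsistency and refuter who marks the matching cell so the automaton can adjudicate. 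All of these defects are repairable by the guess-one-bit-position trick and the two-marker protocol of the paper, but as written the proposal's treatment of the central difficulty does not go through.
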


\begin{proof}[Proof of Theorem \ref{thm:hardness}]
    Assume an alternating Turing machine $M$ with exponential space bound~-~say $2^{n^c}$~-~and an input $w$ is given.
    We construct a polynomial-sized game such that refuter has a winning strategy if and only if $M$ accepts $w$.
    
    A similar reduction has been given in \cite{Muscholl2005}.
    There, the authors consider left-to-right games which work as follows: The first player picks a position in the current sentential form and the second player replaces the non-terminal at this position using a rule of his choice.
    Furthermore, whenever the player picking the positions skips a non-terminal, she is not allowed to choose it later.
    We cannot simply reduce the games considered in \cite{Muscholl2005} to our setting, since in left-to-right games, the player who needs to have the finite winning strategy has to reach the regular target language instead of avoiding it.
    
    Our approach is to use the grammar to generate a sequence of configurations of the Turing machine.
    Afterwards, the non-deterministic automaton detects whether the sequence forms a invalid or non-accepting computation.
    (This is similar to the proof that universality of NFAs is $\mathsf{PSPACE}$-hard \cite{AhoHopcroftUllman}.)
    The game aspect allows to us generate a computation of an alternating Turing machine.
    We let refuter choose the transitions originating in existential states and we let prover choose the transitions originating in universal states.
    Since each configuration has exponential size, the polynomial-sized automaton is not able to check whether the head pointer of the Turing machine has been moved in an incorrect way (and therefore, the sequence of configurations is invalid).
    To solve this problem, we divide the derivation process into two parts:
    In the first part, refuter generates a sequence of configurations from right-to-left, only giving control over to prover to pick the transitions originating in universal control states.
    In the second part, prover does a right-to-left pass in which she wants to show that the configuration is invalid (or non-accepting).
    To this end, she places a marker on the encoding of a cell.
    Refuter tries to justify the computation and rebut the objection.
    To this end, she places a marker on the encoding of the cell with the same position in the previous configuration.
    With the help of the markers, the automaton is able to detect violations.
    The automaton will accept the resulting word if and only if the markers prove that it is an invalid computation.
    
    We will use the symbols $0,1,[,],(,) , \langle, \rangle$, and $,$\ .
    Furthermore, each control state $q \in Q_M$, each tape symbol $a \in \Gamma_M$, and each transition $\delta \in \Delta_M$ of the Turing machine is used as a symbol.
    Each of these symbols $x$ will occur in four different versions in the grammar.
    Without any decoration, it is a non non-terminal $x$ owned by prover.
    It may also be a non-terminal $\bar{x}$ owned by refuter or a terminal $x_p$ or $x_r$, where the subscripts indicate which player caused the symbol to be derived.
    Furthermore, there is an initial non-terminal $S$ owned by refuter, several versions of the non-terminal $\#$, and the special terminal symbols $\mathit{START}, \mathit{STOP}, \mathit{ONUM}, \mathit{OTAPE}, \mathit{J}$.
    
    At first, the derivation process will generate a sequence of configuration in a right-to-left fashion.
    This means that production rules of the shape $\# \to \# x$ are applied.
    Note that since we only consider left-derivations, the symbols $x$ that are inserted will not be touched as long as $\#$ has not yet been replaced.
    Each configuration is encoded as a sequence
    \[
    (a_0, \bin(0) ) \
    (a_1, \bin(1) )
    \dots
    (a_{i-1}, \bin(i-1) ) \
    [q, (a_{i}, \bin(i) )]
    \dots
    (a_{2^{n^c}}, \bin(2^{n^c}))
    \ ,
    \]
    where $a_0 ... a_{2^{n^c}}$ is the content of the tape, $q$ is the control state, and $i$ is the position of the head pointer.
    Each cell also contains an encoding of its index on the tape.
    This is crucial for being able to detect invalid head pointer movement later.
    We require that each configuration has length $2^{n^c}$, i.e.\ we will show the trailing blank symbols explicitly.
    Two configurations are separated by a sequence $\langle \delta \rangle$, where $\delta$ should be the transition of the Turing machine that was used to get from the old configuration (on the right) to the new configuration (on the left).
    
    Once the symbol $\#$ has been replaced, the sentential form is a sequence of non-terminals that are all owned by prover.
    Prover now can do a left-to-right pass in which she can place the marker as explained above.

    To be precise, the derivation process will proceed in four phases:
    (1)
    The initial configuration of the machine is generated.
    (2)
    A sequence of configurations of the machine, separated by the transitions that were used, is generated from right to left.
    (3)
    Afterwards, we need to check that the sequence indeed encodes an accepting branch of a computation of $M$ on $w$.
    A part of this is implemented in the grammar.
    Another part of this is checked by the inclusion in the language of the automaton.
    We elaborate on this below.
    For two of the conditions, we need to use the help of the players, since grammar and automaton are only allowed to be polynomial-sized, and thus cannot process the exponential-sized configurations easily.
    Prover does a left-to-right pass over the sequence in which she can place an objection marker to show why the sequence is not accepting. 
    (4)
    Refuter can then place a justification marker in the next configuration to rebut the the objection.
    
    The automaton checks that the resulting terminal word contains exactly one objection that was not correctly rebutted. This means prover wins if the sequence does not encode an accepting configuration.
    We will now look at each phase in detail.
    
    \begin{enumerate}[(1)]
        \item 
        In the first phase, the initial configuration is generated on the tape.
        During this phase, none of the players has a choice.
        After it has finished, the resulting sentential form is
        \[    
        \# \
        [ q_0, (\$, \bin(0)) ] \
        (w_1, \bin(1))
        \dots
        (w_k, \bin(k)) \ 
        (\blank, \bin(k+1)
        \dots
        (\blank, \bin(2^{n^c})) \ 
        \langle \mathit{START} \rangle
        \ ,     
        \]
        where $\$$ is the marker for the left end of the tape, $q_0$ is the inital state and $w = w_1 ... w_k$.
        Note that it is possible to generate the exponential amount of $(2^{n^c} - k)$ trailing blank symbols with a polynomial-sized grammar.
        \item
        In the second phase, the string is prolonged to the left to represent a branch of the computation tree of $M$ on $w$.
        We give a rough overview of how this can be implemented in the grammar. Later, we will refine this construction.
        
        Initially, refuter has to replace $\#$.
        To do this, refuter has four choices:
        She can
        \mbox{(a)
            write} an arbitrary sequence of symbols $x$, representing control states, tape symbols and the symbols $(, ), [, ], 0, 1$ and $,$ using rules of the shape $\# \to \# x$ ,
        (b)
        write a transition $\delta \in \Delta_\vee$ of the Turing machine originating in an existential state by choosing
        $\# \to \#  \langle \delta \rangle $,
        (c)
        give control over to prover by choosing $\# \to \#_P$.
        Prover can then write a transition $\delta \in \Delta_\wedge$ originating in a universal state and by choosing 
        $\#_P \to \#  \langle \delta \rangle$.
        Or refuter can
        (d)
        stop writing the branch by replacing $\#$ by the sequence $\langle \mathit{STOP} \rangle$, which will end the second phase.
    \end{enumerate}
    The sentential form obtained after Phase (2) is not a correctly encoded sequence of configurations that forms an accepting branch of a computation of $M$ on $w$ if and only if at least one of the following conditions is satisfied.
    
    \begin{enumerate}[(a)]
        \item
        There is a top-level syntax error, i.e.\ the sentential form is not of the shape
        \[
        \langle \mathit{STOP} \rangle
        \left\{
        \{ 0+1+q+a+[+]+(+)+, \}^*
        \
        \langle \delta \rangle
        \right\}^*
        \{ 0+1+q+a+[+]+(+)+,  \}^*
        \langle \mathit{START} \rangle
        \ .
        \]
        Here, we use curly braces for regular expressions to avoid ambiguity. The symbols $q, a, \delta$ denote control states, tape symbols, and transitions respectively.
        \item
        There is a low-level syntax error, i.e.\ one single configuration is not encoded as
        \[
        \left\{ (a, \{0 + 1\}^*) \right\}^*
        [ q, (a, \{0 + 1\}^*) ]
        \left\{ (a, \{0 + 1\}^*) \right\}^*
        \ .
        \]     
        \item
        There is a transition that was picked although it was not applicable.
        This means that control state or the symbol at the position of the head pointer do not match with the preconditions of the transition.
        Since prover can exclusively write transitions that require a universal control state, this will enforce that we let prover choose such transitions.
        \item
        There is a configuration such that its control state is not the state resulting from the transition that was applied before.
        \item
        The leftmost configuration does not contain the accepting control state $q_{accept}$.
        \item
        There is a configuration in which any cell is numbered with a binary string of incorrect length or the first cell is not numbered with $\bin(0)$ or the last cell is not numbered with $\bin(2^{n^c})$.
        \item
        There is a configuration containing two successive cells $(q, v)(q', v')$ such that $v, v'$ encode numbers $i, j$ with $j \neq i+1$.
        \item
        There is a configuration that was modified in an incorrect way (compared to the previous configuration and the chosen transition).
        This means the head pointer was moved inconsistently or a cell content was changed inconsistently.
        The latter might be the case if a cell was changed that was not affected by a transition, or the cell that was previously the position of the head pointer was not changed according to the chosen transition.
    \end{enumerate}
    The Conditions (a) - (e) can be implemented in the grammar.
    We create polynomially-many copies of the symbol $\#$ that keep track of a constant amount of information.
    For example, to handle a part of Condition (b), we can easily guarantee that exactly one control state is written between two transitions by going to a copy $\#^{\mathit{ctrl}}$ after a transition $\langle \delta \rangle $ was inserted.
    This version of the symbols that allows writing a control state (i.e.\ there is a rule $\#^{\mathit{ctrl}} \to \#^{\mathit{trans}}q$ for each $q \in Q$), but does not allow writing a transition.
    As shown in the rule, we go to a copy $\#^{\mathit{trans}}$ after the control states has been inserted.
    Thus version allows writing a transition (i.e.\ there is a rule
    $\#^{\mathit{trans}} \to \#^{\mathit{ctrl}} \langle \delta \rangle$
    ), but does not allow writing a control state.
    To prevent violations of Condition (d), we can also keep track of last control state $q$ and of the symbol $a$ that is stored in the cell which is the position of the head pointer.
    We can implement the transition relation of the Turing machine in the grammar rules to ensure that only transitions can be picked that require the stored control state and tape symbol.
    This means we can only use a rule that inserts $\langle \delta \rangle$ if $\delta$ is a transition of the shape $(q,a) \mapsto (q', a', d)$.
    After the transition has been inserted, we enforce that in the following configuration, the control state coincides with the control state $q'$ resulting from applying the transition, thus ensuring (d) to hold.
    We can also enforce that the rule that ends the second phase by inserting $\langle \mathit{STOP} \rangle$ can only be picked if the previous control state was $q_{accept}$ to handle Condition (e).
    
    Condition (f) can be implemented in the automaton.
    To check that all binary encodings of numbers have the correct length $n^c$, the automaton guesses non-deterministically where a pattern of the form
    $(q,v)$ occurs, were $q$ is a control state and
    $v \in \{0 + 1\}^*$ is a sequence of length not equal to $n^c$.
    To detect this, we create $n^c$ branches of the automaton that detect sequences of length $0, ..., n^c -1$ and one branch that detects sequences of length at least $n^c +1$.
    The index of the first cell of each configuration has the encoding
    $\bin(0) = 0...0$.
    The automaton can check whether a illegal $1$ occurs in the encoding.
    To do this, it guesses non-deterministically where a pattern of the form
    \[
    \rangle (a, \{0 + 1\}^* \, 1 \, \{0 + 1\}^*) 
    \ + \ 
    \rangle [q, (a, \{0 + 1\}^* \, 1 \, \{0 + 1\}^*) ]  
    \]
    occurs, where $a$ is a tape symbol, and $q$ is a control state.
    Note that the symbol $\rangle$ marks the left end of configuration since it belongs to the sequence $\langle \delta \rangle$ encoding a transition (or to $\langle \mathit{STOP} \rangle$ ).
    Analogously, the automaton can verify that the index of the last cell has the encoding $\bin(2^{n^c}) = 1...1$.
    
    To check Conditions (g) and (h), we require the help of the players.
    The Phases (3) and (4) of the derivation process will implement this.
    
    \begin{enumerate}
        \item [(3)]      
        After refuter has decided to end the process by choosing
        $ \# \to \langle \mathit{STOP} \rangle $, we have a sentential form in which all other symbols are non-terminals owned by prover.
        Prover now does a left-to-right pass, in which she can mark a place by an objection symbol to show that the configuration is not valid.
        Formally, she can
        (a) replace a symbol by her terminal version by choosing $x \to x_p$,
        (b) place one of two kinds of objection symbols by choosing a rule $x \to \mathit{ONUM}\ x$ or $x \to \mathit{OTAPE}\ x$, or
        (c) give control over to refuter by choosing $x \to \bar{x}$.
        
        The automaton will later check that the resulting terminal word contains exactly one objection symbol.
        If there is no objection, prover admits that the branch is accepting.
        
        Prover has two types of objection symbols.
        The symbol $\mathit{ONUM}$ can be placed in front of two successive cells
        $(a,v)(a', v')$
        (respectively the variants where one of the cells is the position of the head pointer) to claim that $v$ and $v'$ encode numbers $i$ and $j$ such that $j \neq i+1$.
        It is crucial that $j \neq i+1$ can be accepted for a polynomial-sized NFA for binary encoded numbers $i,j \in \zeroto{2^{n^c}}$.
        
        To do the check, we guess the position $l$ of $v'$ (encoding $j$) that will be the rightmost deviation from the encoding of $i+1$ prior to looking at the two cells.
        We store the position in the control state (which we can do since there are only polynomially-many possibilities, namely $\log (2^{n^c}) = n^c$ ).
        
        When the automaton reads the $l^{\text{th}}$ bit of $v$, it stores the bit in its control states.
        When it then reads the $l^{\text{th}}$ bit of $v'$ it is able to accept if the $v'$ does not encode $i+1$.   
        It is crucial that adding one to a binary number will change the rightmost $0$ to $1$ and flip all following $1$s to $0$.
        The following possibilities arise:
        (a)
        After position $l$, at least one zero follows in $v$.
        In this case, the bit at position $l$ in $v'$ should coincide since the addition can be performed further to the right.
        Check this and accept if the bits are different.
        (b)
        After position $l$, only $1$s follow in $v$.
        In this case, the bit at position $l$ in $v'$ should be different since it
        is affected by the addition.
        Check this and accept if the bits coincide.
        
        The symbol $\mathit{OTAPE}$ can be placed in front of a cell $(a,v)$ to claim that it was modified in an incorrect way. This means it was modified compared to the previous configuration although it was not the position of the head pointer in the previous configuration (and therefore should  have stayed the same) or it was the position of the head pointer previously but was not modified according to the transition.
        
        Since the automaton cannot identify the correct position in the next configuration due to the polynomial bound on its number of states, we will employ the help of refuter to identify the cell of the next configuration that the automaton has to compare against.
        
        \item [(4)]
        The automaton guarantees that after an objection symbols is placed, the control is always given to refuter (by choosing the rule $x \to \bar{x}$ instead of $x \to x_p$). Formally, the automaton rejects the input if a second objection symbol occurs or if the prover-version $x_p$ of any terminal occurs.
        
        Assume prover placed the symbol $\mathit{OTAPE}$ in front of a cell in configuration $c$. Prior to Phase (3), the sentential form had the shape $... c \langle \delta \rangle c' ... $, were $c$ is the configuration that resulted from applying transition $\delta$ to configuration $c'$.
        Refuter can now place a justification symbol $J$ in $c'$ to show that the objection is not valid.
        
        If $\mathit{OTAPE}$ was inserted in front of a cell $(a,v)$ in a configuration $c$, refuter should place $J$ in front of the preceding cell $(b,v')$ of the previous configuration $c'$.
        This means if $v$ and $v'$ encode numbers $i$ and $j$, we have $j = i-1$. (If one of the cells is the position of the head pointer, the symbol should be inserted in front of $[$.)
        
        First, the algorithm will check that refuter marked the correct cell.
        To do this, a procedure analogous to the one explained in Phase (3) can be used to accept if $v'$ is not as desired.             
        Then the automaton can use the marked cell $(a,v)$ in $c$ and the three marked cells
        $(b_j, \bin(j)), (b_i, \bin(i)), (b_{i+1}, \bin(i+1))$
        in $c'$ (respectively the version with the control state annotation somewhere) to verify:  
        (a)
        If $(b_i, \bin(i))$ was not the position of the head pointer in $c'$, then the cell should remain unchanged, i.e.\ $b_i = a$.
        (b)
        If $(b_i, \bin(i))$ was the position of the head pointer in $c'$, then $a$ is the symbol that results from applying the chosen transition.
        (c)
        If $(a, \bin(i))$ is the position of the head pointer in $c$, then the control state was previously placed at $(b_j, \bin(j))$, $(b_i, \bin(i))$, or $(b_{i+1}, \bin(i+1))$, according to the chosen transition.
        
        To check (b) and (c), the automaton enforces that there is exactly one transition $\langle \delta \rangle$ separating the configuration containing $\mathit{OTAPE}$ and the configuration containing $J$.
        It can store the transition $\delta$ in its control state while switching between the configurations.
        
        The cases in which $(b', \bin(i))$ is the first or last cell of a configuration can be treated similarly.
    \end{enumerate}   
    Note that the operations performed by the automaton that are described as if they would happen sequentially here are implemented in parallel.
    This can be done using a constant number of intersections and unions, without changing the size of the automaton being polynomial.
    
    Overall, we obtain that if the ATM $M$ accepts $w$, refuter can find an accepting branch, no matter which transitions are chosen by prover in the universal states. If refuter writes down the correct encoding of the  corresponding finite sequence of configurations, she is able to react to any objection prover might place in Phase (3). The finite automaton will not accept, since it cannot detect a valid objection, so we have derived a word outside its language.
    
    If the ATM $M$ does not accept $w$, prover has a way of selecting transitions originating in universal states such that the resulting branch of the computation tree is not accepting, independent of the choices of refuter in the existential states.
    The construction guarantees that prover is allowed to choose the grammar rule that inserts the transition if the control state in the previous configuration was universal.
    By choosing the grammar rules that insert the transitions leading to a non-accepting branch, prover can ensure that no sentential form encoding a finite sequence of configurations ending in an accepting configuration can be derived.
    If refuter tries to cheat by enforcing an infinite derivation, refuter looses the game by definition. If she cheats by choosing an incorrect encoding (that is not prevented by the grammar), prover can place an objection mark to which refuter can not react.
    The automaton will then detect the valid objection and accept, so we have derived a word in its language.
\end{proof}

\subsection{Membership}

The following algorithm implements the fixed-point iteration discussed in  Section \ref{sec:domain}, executed on formulas in CNF 
(see the Subsection \ref{sec:CNF}).

\begin{algo}\label{Algorithm:FixedPoint}
    Given a non-inclusion game and an initial position $\alpha$, the following algorithm computes whether refuter has a winning strategy from the given position.
    \\
    (1) Set $\sol{X}^0 = \ifalse$ for all $X \in N$. Set $i = 0$.
    \\
    (2) Do until $\sol{X}^{i} \lsim \sol{X}^{i-1}$ for all $X \in N$:
        $i = i+1$; $\sol{}^{i} = f( \sol{}^{i-1})$.
    \\
    (3) Compute $\sol{\alpha}$, and return $\itrue$ iff $\sol{\alpha}$ is rejecting.

    \noindent Here, $f$ is the function combining the right-hand sides of the equations as in Definition \ref{def:game_dfa}.
\end{algo}

\begin{theorem}
\label{thm:membership}
    Given a non-inclusion game and an initial position, Algorithm~\ref{Algorithm:FixedPoint} computes whether refuter has a winning strategy from the given position in time
$
        \bigO{ |G|^2 \cdot 2^{2^{|Q|^{c_1}}} + |\alpha| \cdot 2^{2^{|Q|^{c_2}}}  }
$
    for some constants $c_1, c_2 \in \N$.
\end{theorem}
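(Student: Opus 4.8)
The plan is to establish the complexity bound by carefully analyzing the three steps of Algorithm~\ref{Algorithm:FixedPoint}, with the dominant cost coming from the fixed-point iteration in step~(2). First I would bound the size of the domain: a CNF-formula over $\Boxes_A$ is a set of clauses, each clause a subset of $\Boxes_A$, and $|\Boxes_A| = 2^{|Q|^2}$ since boxes are relations over $Q$. Hence $|\CNF_A| \leq 2^{2^{|Q|^2}}$, which is doubly exponential in $|Q|$ but constant in the grammar. Working modulo logical equivalence does not change the asymptotics of this bound (one keeps at most one representative per class), so the height of the lattice $(\BFE, \lleq)^N$ — along which the Kleene chain $\bot \Rightarrow^N f(\bot) \Rightarrow^N \ldots$ strictly ascends before stabilizing — is at most $|N| \cdot 2^{2^{|Q|^{c}}}$ for a suitable constant. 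This gives the number of iterations of the loop.

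Next I would bound the cost of a single iteration, i.e.\ one application of $f$. For each non-terminal $X$ with rules $X \to \eta_1, \ldots, X \to \eta_k$, evaluating $f_X$ means computing $\var{\eta_1}, \ldots, \var{\eta_k}$ — each $\var{\eta_j}$ a relational composition of at most $|\eta_j| \leq |G|$ CNF-formulas, using Lemma~\ref{lem:cnf_seqcomp} and Lemma~\ref{lem:disjunctionconjunction} — and then taking their conjunction or disjunction (again via Lemma~\ref{lem:disjunctionconjunction}), followed by normalization modulo equivalence. Each composition, disjunction, and equivalence check is polynomial in the (doubly-exponential) size of the formulas involved, so one such operation costs $\mathrm{poly}(2^{2^{|Q|^{c}}})$, which can be absorbed into $2^{2^{|Q|^{c_1}}}$ by enlarging the constant. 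Summing over the $O(|G|)$ rule occurrences gives a per-iteration cost of $O(|G|)\cdot 2^{2^{|Q|^{c_1}}}$, and the number of iterations contributes another factor; combined with the number of equivalence/stabilization checks one gets the $|G|^2 \cdot 2^{2^{|Q|^{c_1}}}$ term. Finally, step~(3) computes $\sol{\alpha} = \sol{\alpha_1};\cdots;\sol{\alpha_{|\alpha|}}$ with $|\alpha|$ compositions, each costing a doubly-exponential (in $|Q|$) amount, yielding the $|\alpha|\cdot 2^{2^{|Q|^{c_2}}}$ term; evaluating whether $\sol{\alpha}$ is rejecting under $\nu$ is then linear in its size and thus absorbed.

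The main obstacle I expect is not the coarse counting but being careful about two points. First, one must argue that reasoning modulo $\lsim$ keeps formulas at the stated doubly-exponential size: a priori a formula produced by a composition could be syntactically much larger, so one needs that each representative can be kept in the canonical $\CNF_A = \pwrset{\pwrset{\Boxes_A}}$ form whose cardinality is bounded by $2^{2^{|Q|^2}}$ — and that reduction to such a representative (removing subsumed clauses etc.) is itself only polynomial in the formula size. Second, one must check that the loop guard $\sol{X}^i \lsim \sol{X}^{i-1}$ for all $X$ genuinely detects the fixed point: this is exactly the monotonicity of the Kleene chain on the finite bottomed partial order established in Section~\ref{sec:domain}, so once a single iteration produces no change the solution is reached. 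With these two facts in hand, the arithmetic of multiplying ``number of iterations $\times$ cost per iteration'' and adding the step~(3) cost gives the claimed bound $\bigO{ |G|^2 \cdot 2^{2^{|Q|^{c_1}}} + |\alpha| \cdot 2^{2^{|Q|^{c_2}}} }$, where the two constants $c_1, c_2$ absorb all the polynomial overheads in the doubly-exponential towers.
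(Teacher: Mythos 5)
Your decomposition — (number of Kleene iterations) $\times$ (cost per iteration) $+$ (cost of building and evaluating $\sol{\alpha}$) — is exactly the paper's, and your treatment of the per-iteration cost and of step~(3) is sound (your remark that composition is ``polynomial in the formula size'' is not literally true, since Lemma~\ref{lem:cnf_seqcomp} ranges over up to $(2^k)^k = 2^{k^2}$ maps from a clause to the clause set, but this is still doubly exponential in $|Q|$ and so is absorbed as you claim). However, there is a genuine gap in your bound on the number of iterations. You write $|\CNF_A| \leq 2^{2^{|Q|^2}}$, but $\CNF_A = \pwrset{\pwrset{\Boxes_A}}$ with $|\Boxes_A| = 2^{|Q|^2} =: k$, so the number of clauses is $2^k = 2^{2^{|Q|^2}}$ while the number of CNF-formulas is $2^{2^k}$ — \emph{triply} exponential in $|Q|$; the number of equivalence classes is likewise only bounded triply exponentially. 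Consequently the argument ``height of the lattice $\leq$ number of elements'' yields a triply exponential iteration bound, which would give a running time of the form $|G|^2 \cdot 2^{2^{2^{|Q|^c}}}$ rather than the claimed doubly exponential one.

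The fix is the chain-length argument the paper uses: a formula over $k$ atomic propositions is determined, modulo logical equivalence, by its set of satisfying assignments, and a strict implication $F \lleq G$, $F \not\lsim G$, corresponds to a strict inclusion of these sets. Since there are only $2^k$ assignments, any strictly increasing chain in $\BFE$ has length at most $2^k = 2^{2^{|Q|^2}}$, and the Kleene chain on the product domain therefore stabilizes after at most $|N| \cdot 2^k \leq |G| \cdot 2^{2^{|Q|^2}}$ iterations. With that replacement your arithmetic goes through and produces the stated bound; everything else in your write-up, including the loop-guard correctness and the normalization to canonical set-of-sets representatives, matches the paper's proof.
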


\begin{proof}[Proof of Theorem \ref{thm:membership}]
    We will analyze (1) the number of iterations needed to obtain the fixed-point solution (2) the time consumption per iteration (3) the cost of constructing and evaluating the formula for the given initial position.
    
    Let $k = 2^{|Q|^2}$ be the number of boxes.
    Every clause has at most size $k$ and there are $2^k$ different clauses, so every formula has size at most $k \cdot 2^k$.   
    Computing conjunction and disjunction according to Lemma \ref{lem:disjunctionconjunction} is polynomial in the size of the formulas.
    
    To compute the relational composition according to Lemma \ref{lem:cnf_seqcomp}, we need to iterate over the at most $2^k$ clauses and over the at most $(2^k)^k = 2^{k^2}$ functions mapping boxes to clauses.
    Each clause $K$ and function $z$ determines a clause of the relation composition.
    To obtain its boxes, we need to iterate over the at most $k$ boxes $\tboxr$ of $K$ and compute $\tboxr;z( \tboxr )$.
    To do this, we need to iterate over the at most $k$ boxes $\tboxt$ of $z(\tboxr)$ and compute $\tboxr;\tboxt$.
    This requires that we check for each pair $(q,q'')$ of states whether there is a suitable $q'$ such that $(q,q') \in \tboxr, (q,q'') \in \tboxt$.
    Overall, to compute the relational composition of two formulas, we need
    \[
    2^k \cdot 2^{k^2} \cdot k \cdot k \cdot |Q|^3
    \in
    \bigO{ 2^{2^{|Q|^{c_2}}} }
    \]
    steps, for some constant $c_2 \in \N$.
    
    \begin{enumerate}[(1)]
        \item 
        The length of any chain of strict implications of formulas over a set of $k$ atomic propositions is at most $2^k$.
        To prove this, note that modulo logical equivalence, a formula is uniquely characterized by the set of assignments such that the formula evaluates to $\itrue$ under them. Strict implication between two formulas implies strict inclusion between the sets.
        The statement follows since there are at most $2^k$ different truth assignments.
        
        We can use this to obtain that the number of iterations is bounded by $|N| \cdot 2^k$, since the sequence of intermediary solutions is a chain in the product domain, and the height of the product domain is the height of the base domain multiplied by the number of components.
        \item
        Per iteration, we need to carry out at most $|G|$ conjunctions, disjunctions and relational compositions.
        Per grammar rule, we need to compute at most one conjunction or disjunction, depending on the owner of the non-terminal.
        For each symbol on the right-hand side of a grammar rule, we need to compute at most one relational composition.
        
        Overall, for one iteration, we need
        \[
        |G|
        \cdot
        \left(
        2^{2^{|Q|^{c_2}}}
        + 2^{2^{|Q|^{c_3}}}
        + 2^{2^{|Q|^{c_4}}}
        \right)
        \in
        \bigO
        {
            |G| \cdot 2^{2^{|Q|^{c_1}}}
        }
        \]
        steps, for some constants $c_1, c_3, c_4 \in \N$.
        Here, $2^{2^{|Q|^{c_2}}}$ is the cost of computing the relational composition as calculated earlier and $2^{2^{|Q|^{c_3}}}$ and $2^{2^{|Q|^{c_4}}}$ are rough estimations for computing conjunction and disjunction.
    \end{enumerate}
    Combining (1) and (2) together with the rough estimation $|N| \leq |G|$ yields the first summand
    \[
    \bigO{ |G|^2 \cdot 2^{2^{|Q|^{c_1}}}}
    \ .
    \]
    \begin{enumerate}
        \item [(3)]
        Assume the initial position $\alpha$ has length $|\alpha| = l$.
        It remains to compute $l-1$ relational compositions, which can be done in $(l-1) \cdot 2^{2^{|Q|^{c_2}}}$ steps, and to check whether the resulting formula is rejecting.
        The latter can be done by iterating over all at most $2^k$ clauses, and checking whether one of the at most $k$ boxes in them is rejecting. Checking whether a box $\tboxr$ is rejecting can be done in $|Q|$ time, since we need to check for the absence of all pairs $(q_0, q_f)$ in $\tboxr$, where $q_f \in Q_F$.
        
        Overall, we need
        \[
        (l-1) \cdot 2^{2^{|Q|^{c_1}}} + 2^k \cdot k \cdot |Q|
        \in
        \bigO { l \cdot 2^{2^{|Q|^{c_2}}} }
        \]
        steps.
    \end{enumerate}
    \vspace*{-0.5cm}
\end{proof}
The following corollary is an immediate consequence of Theorem \ref{thm:hardness} and Theorem \ref{thm:membership}.

\begin{corollary}
    Deciding whether refuter has a winning strategy for a given non-inclusion game and an initial position is $\mathsf{2EXPTIME}$-complete.
\end{corollary}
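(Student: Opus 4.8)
The plan is to obtain the result as a direct combination of Theorem~\ref{thm:hardness} and Theorem~\ref{thm:membership}, so essentially no new construction is needed. First I would establish membership in $\mathsf{2EXPTIME}$. By Theorem~\ref{thm:membership}, Algorithm~\ref{Algorithm:FixedPoint} decides whether refuter has a winning strategy in time $\bigO{|G|^2 \cdot 2^{2^{|Q|^{c_1}}} + |\alpha| \cdot 2^{2^{|Q|^{c_2}}}}$. Writing $n$ for the size of the input — which comprises the grammar $G$ with its ownership partitioning, the automaton $A$ with state set $Q$, and the initial sentential form $\alpha$ — each of $|G|$, $|\alpha|$, and $|Q|$ is bounded by $n$. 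Hence the running time is at most $\bigO{n^2 \cdot 2^{2^{n^{c}}}}$ with $c = \max\{c_1,c_2\}$, which is doubly exponential in $n$, so the problem lies in $\mathsf{2EXPTIME}$.

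Second, hardness is already supplied: Theorem~\ref{thm:hardness} shows that deciding the non-inclusion game from a given position is $\mathsf{2EXPTIME}$-hard, via the polynomial-time reduction from the acceptance problem for alternating exponential-space Turing machines (which is $\mathsf{2EXPTIME}$ since $\mathsf{AEXPSPACE} = \mathsf{2EXPTIME}$). Combining the matching lower and upper bounds yields $\mathsf{2EXPTIME}$-completeness, which is exactly the claim.

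The only point that warrants a second look is whether the time bound of Theorem~\ref{thm:membership} is genuinely doubly exponential in the total input length rather than triply exponential: the outermost exponent is $|Q|^{c_i}$, a polynomial in $n$, so $2^{2^{|Q|^{c_i}}} = 2^{2^{\mathrm{poly}(n)}}$ stays within $\mathsf{2EXPTIME}$, and the polynomial prefactors $|G|^2$ and $|\alpha|$ are absorbed. I do not expect any real obstacle here — the corollary is a bookkeeping consequence of the two theorems together with the observation that the class $\mathsf{2EXPTIME}$ is closed under multiplication by a polynomial and under inserting a polynomial into the exponent.
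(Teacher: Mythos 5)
Your proposal is correct and matches the paper exactly: the corollary is stated there as an immediate consequence of Theorem~\ref{thm:hardness} (hardness) and Theorem~\ref{thm:membership} (the doubly exponential upper bound via Algorithm~\ref{Algorithm:FixedPoint}). Your additional bookkeeping that $2^{2^{|Q|^{c_i}}}$ with polynomial prefactors is indeed doubly exponential in the total input size is sound and just makes explicit what the paper leaves implicit.
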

One should note that the running time of the algorithm is only exponential in the size of the automaton. If the automaton is assumed to be fixed, the running time of the algorithm is polynomial in the size of the grammar and in the length $l$ of the initial position. 
Namely, it can be executed in $\bigO{ |G|^2 + l}$ steps.

\subsection{Solving More General Games}

One should also note that the algorithm can solve games on the game arena induced by a grammar not only in the case of the non-inclusion winning condition, but also in a more general setting.
Assume the winning condition is specified by a predicate on the boxes of the automaton (respectively by a predicate on words that is well-defined on the equivalence classes introduced by boxes).
We can lift the definition to obtain a predicate on CNF-formulas over boxes by distributing it over conjunction and disjunction.
We get that a formula satisfies the predicate if and only if there is a choice function picking a box out of each clause that satisfies the predicate.
Initializing Proposition~\ref{prop:choice_fct} with such a choice function will provide a winning strategy that ensures that the game ends in a word such that its box satisfies the predicate after finitely many steps.

If the predicate can be evaluated in doubly exponential time (in the size of the whole input), the time complexity of the algorithm does not change.
In this paper, we mostly considered the $\mathit{reject}$-predicate, that checks for the absence of a transition $(q_0, q_f)$ with $q_f \in Q_F$ in boxes (and therefore is satisfied if the words are not in the language of the automaton).
We can also consider its negation, the $\mathit{accept}$-predicate, that checks for the presence of such a transition (and therefore is satisfied if the words are in the language of the automaton).
Instantiating the algorithm for this predicate yields a procedure to solve a type of game in which refuter wants to obtain a word in the regular target language after finitely many steps.
One can show that solving this problem is also $\mathsf{2EXPTIME}$-complete by reducing the left-to-right games considered in \cite{Muscholl2005}.

\section{Experiments}
\label{sec:experiments}

We have implemented our algorithm in C++ \cite{implementation} and
compared it to an implementation of Cachat's algorithm \cite{Cachat2002} for games on pushdown systems. 
Cachat's input instances consist of a pushdown system $P$ with an ownership partitioning on the control states and an alternating finite automaton over the stack alphabet of $P$ ($P$-AFA). 
The first player wins by enforcing a run into a configuration accepted by the $P$-AFA. 
Cachat's algorithm constructs the winning region of the first player by saturating the automaton. 

To convert instances of our game to that of Cachat, we construct 
a pushdown system $P$ that encodes both the grammar $G$ and the target automaton $A$. 
A sentential form $wX\beta$ (where $X$ is the left-most non-terminal) will be represented in $P$ by a configuration $(Q', X\beta)$, where $Q'$ is the set of states that $A$ can be in after processing $w$. 
To be precise, for each subset $Q' \subseteq Q$ of states of $A$, $P$ has two control states $Q'_\square$ and $Q'_\roundss$, one for each player. 
\begin{enumerate}[(1)]
    \item
        If $P$ is in control state $Q'_\player$, and the topmost stack symbol is a terminal $a \in T$, it can be popped and the control state is changed to $Q''_\player$, where $Q''$ is the set of states with $Q' \tow{a} Q''$.
    \item
        If $P$ is in control state $Q'_\player$, and the topmost stack symbol is a non-terminal $Y$ owned by the other player $\opponent$, $P$ goes to control state $Q'_{\opponent}$ without modifying the stack.
    \item
        If $P$ is in control state $Q'_\player$, and the topmost stack symbol is a non-terminal $X$ owned by $\player$, there is one transition in $P$ for each rule $X \to \alpha$ with $X$ on its left-hand side that pops $X$ and pushes $\alpha$ without changing the control state. 
\end{enumerate}
\raggedbottom
 
The $P$-AFA just checks that the set of states $Q'$ of the current control state $Q'_\player$ does not contain a final state and that the stack is empty.

We have to encode both the automaton and the grammar into the pushdown system. 
If we keep the whole sentential form on the stack, the terminal prefix prevents us from modifying the non-terminals.

The translation thus embeds a determinized version of $A$ in $P$. This may cause an exponential blow-up in the size of the input instance, which reflects the worst case complexity: Our problem is $\mathsf{2EXPTIME}$-complete while Cachat's algorithm is exponential.
%

For the experiments, we generated random automata using the Tabakov-Vardi model~\cite{TabakovV:2005}.
The generator is parameterized in the number of letters and control states, the percentage of final states, and the number of transitions per letter 
(given as a fraction of the number of states).
We adapt the model to generate also grammars with rules of the form $X \to a Y b$, 
with parameters being the number of rules and non-terminals for each player, 
and the chances of $a, Y$, and $b$ to be present.
Since sparse automata and grammars are likely to yield simpler instances, we focus on dense examples.

For the parameters $|Q|, |T|$, $|N_\roundss|$ and $|N_\square|$, we tried out several combinations.
The entry $x/y/z$ in the table below stands for $|Q| = x, |T| = y, |N_\roundss| = |N_\square| = z$. For each combination, we generated $50$ random automata and grammars, applied three algorithms to them, and measured how many instances could be solved within 10 seconds and how much time was consumed for the instances that could be solved on average.\vspace{0.05cm}
We compared:
(1) Our algorithm with a naive Kleene iteration, i.e.\ all components of the current solution are updated in each step.
(2) Our algorithm with chaotic iteration implemented using a worklist, i.e.\ only components whose dependencies have been updated are modified. 
This is the common way of implementing a Kleene iteration. 
(3)~Cachat's algorithm applied to our problem as described above. 
To improve the runtime, the target automaton has been determinized and minimized before creating the pushdown system. We ran our experiments on an Intel i7-6700K, $4$GHz. The durations are milliseconds.
\begin{center}
\newcommand{\showparam}[1]{\texttt{#1}}  
\newcommand{\showavg}[1]{\texttt{#1}}    
\newcommand{\showpct}[1]{\texttt{#1}}  
\scalebox{0.9}{
\begin{tabular}{|c||rr|rr|rr|}
\hline
& \multicolumn{2}{c|}{naive Kleene}
& \multicolumn{2}{c|}{worklist Kleene}
& \multicolumn{2}{c|}{Cachat}\\
\hline
& average time & \% timeout & average time & \% timeout & average time & \% timeout\\
\hline
\hline
\showparam{ 5/ 5/ 5} & \showavg{65.2} & \showpct{2} & \showavg{0.8} & \showpct{0} & \showavg{94.7} & \showpct{0} \\
\hline
\showparam{ 5/ 5/10} & \showavg{5.4} & \showpct{4} & \showavg{7.4} & \showpct{0} & \showavg{701.7} & \showpct{0} \\
\hline
\showparam{ 5/10/ 5} & \showavg{13.9} & \showpct{0} & \showavg{0.3} & \showpct{0} & \showavg{375.7} & \showpct{0} \\
\hline
\showparam{ 5/ 5/15} & \showavg{6.0} & \showpct{0} & \showavg{1.1} & \showpct{0} & \showavg{1618.6} & \showpct{0} \\
\hline
\showparam{ 5/10/10} & \showavg{32.0} & \showpct{2} & \showavg{122.1} & \showpct{0} & \showavg{2214.4} & \showpct{0} \\
\hline
\showparam{ 5/15/ 5} & \showavg{44.5} & \showpct{0} & \showavg{0.2} & \showpct{0} & \showavg{620.7} & \showpct{0} \\
\hline
\showparam{ 5/ 5/20} & \showavg{3.4} & \showpct{0} & \showavg{1.4} & \showpct{0} & \showavg{3434.6} & \showpct{4} \\
\hline
\showparam{ 5/10/15} & \showavg{217.7} & \showpct{0} & \showavg{7.4} & \showpct{0} & \showavg{5263.0} & \showpct{16} \\
\hline
\showparam{10/ 5/ 5} & \showavg{8.8} & \showpct{2} & \showavg{0.6} & \showpct{0} & \showavg{2737.8} & \showpct{2} \\
\hline
\showparam{10/ 5/10} & \showavg{9.0} & \showpct{6} & \showavg{69.8} & \showpct{0} & \showavg{6484.9} & \showpct{66} \\
\hline
\showparam{15/ 5/ 5} & \showavg{30.7} & \showpct{0} & \showavg{0.2} & \showpct{0} & \showavg{5442.4} & \showpct{52} \\
\hline
\showparam{10/10/ 5} & \showavg{9.7} & \showpct{0} & \showavg{0.2} & \showpct{0} & \showavg{7702.1} & \showpct{92} \\
\hline
\showparam{10/15/15} & \showavg{252.3} & \showpct{0} & \showavg{1.9} & \showpct{0} & \showavg{n/a} & \showpct{100} \\
\hline
\showparam{10/15/20} & \showavg{12.9} & \showpct{0} & \showavg{1.8} & \showpct{0} & \showavg{n/a} & \showpct{100} \\
\hline
\end{tabular}
}
\end{center}
Already the naive implementation of Kleene iteration outperforms Cachat's algorithm, which was not able to solve any instance with parameters greater than \texttt{10/15/15}.
The worklist implementation 
is substantially faster, by three orders of magnitude on average.  
This confirms our hypothesis: The stack content is more information than needed for safety verification, and getting rid of it by moving to the summary domain speeds up the analysis.

One can also implement Cachat using a worklist.
Since in every step not only single transitions of the $P$-AFA but whole paths in the $P$-AFA are considered, handling the states of the $P$-AFA with a worklist is not possible.
But it is possible to handle the states of the pushdown system $P$ using a worklist: Whenever a transition labeled by stack symbol $a$ is added to the $P$-AFA, all control states of $P$ that have a transition that pushes $a$ to the stack have to be added to the worklist. 

Unfortunately, this does not help for the instances obtained by  encoding our type of game. 
For every non-terminal $X$ owned by player $\player$
(1) all states $Q'_{\opponent}$ of the of the opponent $\opponent$ have to be added to the worklist, since there is a transition from $Q'_{\opponent}$ to $Q'_\player$ that pops $X$ and pushes $X$, and
(2) if $X$ occurs on the right-hand side of at least one rule of the grammar, say in $Y \to \alpha$, all states $Q'_{\player}$ of the of the player $\player$ have to be added to the worklist, since there is a transition that pops $Y$ and pushes $\alpha$ for each such state.

The terminal symbols are handled in the very first iteration of Cachat.
This means that starting from the second iteration, adding a transition to the $P$-AFA will cause almost all states of $P$ to be added to the worklist.
In our experiments, the worklist variant was slower by at least one order of magnitude, even on small examples.

\newcommand{\rejecting}[1]{\mathit{reject}(#1)}
\newcommand{\unfold}{\mathit{unfold}}
\newcommand{\eval}[1]{\mathit{eval}(#1)}

\newcommand{\subsumes}{\sqsubseteq}
\newcommand{\subsequiv}{\equiv}
\newcommand{\compose}{;}
\newcommand{\fwsim}{\preceq_\texttt{fw}}
\newcommand{\bwsim}{\preceq_\texttt{bw}}

\renewcommand{\theequation}{\Roman{equation}}

\section{Algorithmic Considerations}
\label{sec:antichains}
We discuss how to further speed-up the worklist implementation by two heuristics prominent in verification: Lazy evaluation~\cite{fiedor:lazy} and antichains~\cite{Fogarty:Efficient,Abdulla:Simulation,Abdulla:Advanced}. 
The heuristics are not meant to be a contribution of the paper and they are not yet implemented.
The point is to demonstrate that the proposed summary domain combines well with algorithmic techniques. 
For both heuristics, it is not clear to us how to apply them to the domain of alternating automata. 

The idea of \emph{lazy evaluation} is to keep composed formulas $(F\vee F');M$ symbolic, i.e.\ we store them as a term rather than computing the resulting formula.
When having to evaluate the formula represented by the term, we only compute up to the point where the value influences the overall answer. 
Consider the test whether $(F\vee F');M$  is rejecting. 
If already $F;M$ is rejecting, the whole formula represented by the term will be rejecting and the evaluation of $F';M$ can be skipped.
.

The idea of the \emph{antichain optimization} is to identify representative elements in the search space that allow us to draw conclusions about all other elements. 
Here, the search space consists of formulas (representing the intermediate steps of the fixed-point computation).  
By Lemma~\ref{lemma:leq_monotonic}, it is sufficient to reason modulo logical equivalence.  
This allows us to remove redundant disjuncts and conjuncts,
in particular, if $F\lleq G$ we can prune $F$ from $F\lor G$ and $G$ from $F\land G$.  
When reasoning over CNFs,
this removes from a formula all clauses that are subsumed by other clauses. 
It is thus enough to store the CNFs in the form of antichains of $\subseteq$-minimal clauses.
The antichain approach benefits from a weaker notion of implication. 

\subsection{Lazy Evaluation}

Assume we are interested in whether refuter wins from a given sentential form $\alpha$.
For simplicity, introduce a fresh non-terminal $S$ with the single rule $S \to \alpha$.
That refuter wins the game can be concluded as soon as the formula $\sol{S}$ becomes rejecting. 
The fixed-point iteration does not have to continue beyond this point.  
Inspired by \cite{fiedor:lazy}, the idea of early termination upon reaching a target can be elaborated further into a \emph{lazy evaluation} technique: 
Also compositions of formulas have to be evaluated only up to the point where the value influences the overall answer.
Consider the test whether $(F\lor F');M$ is rejecting.
If $F;M$ is already rejecting, then the whole formula is rejecting, 
and the evaluation of $F';M$ can be skipped.

The lazy algorithm evaluates the predicate $\rejecting\bot$, which will yield $\itrue$ iff $\sol S$ is rejecting.
The test is done on the fly, while unfolding the fixed-point computation from $\bot$ ($X\mapsto\ifalse$ for all $X\in N$). 
The algorithm uses a set of rules (discussed below) that reduce the test on their left-hand side into a Boolean combination of tests on the right, and which return a Boolean value on boxes. 

The first rule unfolds the fixed-point computation. 
A reasonable implementation of the algorithm would not use the Kleene iteration on the product domain from Section~\ref{sec:domain}
but a variant of the more efficient chaotic iteration (\eg the worklist algorithm)~\cite{SeidlWilhelmHack2012}.   
The following rule corresponds to one unfolding of chaotic iteration:
\vspace{1mm}
\begin{align}
\label{rule:unfold}
\rejecting{\sigma} \iff& \rejecting{\sigma_S} \lor \unfold\ ,
\\[1mm]
\text{ where } \quad
\unfold =& 
\left\{
\begin{array}{ll}
\rejecting{\sigma[X\mapsto f_X(\sigma)]}
& \text{, for some } X\in N \text{ with } f_X(\sigma) \not\subsumes \sol{X}\ ,
\\[0.1cm]
\ifalse
& \text{, if $f_X(\sigma)\subsumes \sol{X} $ for all $X\in N$}\ .
\end{array}
\right.
\nonumber
\end{align}
\\
Intuitively, either the formula $\sol{S}$ 
is already rejecting in the current assignment, or it may become rejecting in some further unfolding of the fixed-point computation. 
The disjunct $\unfold$
updates one of the variables based on the current assignment $\sigma$, or it terminates the fixed-point computation if no variable can be updated, that is, if the current assignment is already the least fixed point.

To evaluate the first disjunct in Rule~\eqref{rule:unfold}, we have to evaluate $\rejecting{\sol{S}}$ on a term $\sol{S}$ built on top of boxes using Boolean connectives and relational composition. 
To evaluate the second disjunct, we need to decide subsumption. 
On boxes, $\rejecting{\tboxr} = \itrue$ iff $\tboxr$ is rejecting, and $\rho \subsumes \tau$ is evaluated according to the notion of subsumption in use. 
On Boolean connectives, rejection and subsumption queries are evaluated as Boolean combinations of the respective queries over the subformulas.
If the tested formulas are relational compositions, then the composition is first pushed down one level using Definition~\ref{def:cnf_seqcomp}, or, on leaf level, 
it is eliminated by composing the boxes. 
This is expressed by the following rules where
$\rho\in\Boxes_A$, $F,F',M,M',G,G'\in\BF_A$, and $\{\star,\bar\star\} = \{\land,\lor\}$:

\begin{center}
    $
    \begin{aligned}
    \rejecting{F\star G}	&\iff \rejecting{F} \star \rejecting G\ ,
    &\rejecting{F;M} 		&\iff \rejecting{\eval{F;M}}\ ,        \\        			
    F\star F'\sqsubseteq G	&\iff F\sqsubseteq G \mathrel{\bar\star} F'\sqsubseteq G\ ,
    &F;M\mathrel\subsumes G	&\iff \eval{F;M}\mathrel\subsumes G\ ,\\				
    F\sqsubseteq G\star G'	&\iff F\sqsubseteq G  \star F\sqsubseteq G'\ ,
    &F\mathrel\subsumes G;M	&\iff F \mathrel\subsumes \eval{G;M}\ ,
    \end{aligned}
    $  
\end{center}

\begin{center}
    $
    \begin{aligned}
    \text{where}\quad \eval{F\star F'\compose M} =\ & {F\compose M\star F'\compose M}\ ,\\
    \eval{\rho\compose M\star M'} =\ & {\rho\compose M\star \rho\compose M'}\ .
    \end{aligned}
    $  
\end{center}

Using these rules, the algorithm constructs an alternating proof tree rooted at $\rejecting{\bot}$ which branches at Boolean connectives on the right-hand sides of the rules.  
It may conclude before the fixed-point computation terminates, and when it concludes,
the proof tree may still contain not fully evaluated queries and compositions which do not influence the final answer.  
These unfinished computations are efficiency gains over the basic algorithm which always iterates until fixed point and evaluates all compositions.

We note that if the rules were implemented verbatim, the tree would contain repetitive evaluations of the same queries and compositions. 
This could be countered by joining equivalent queries and keeping the proof in the form of a directed acyclic graph. 
Similarly, it is useful to keep occurrences of the same compositions as references to one representative and evaluate all of them simultaneously by evaluating the representative. 

\subsection{Antichains}

The antichain approach benefits from a weaker notion of implication.    
Consider a preorder ${\subsumes}\subseteq\Boxes_A\times\Boxes_A$ that reflects the reject predicate as follows: If $\rho\ \subsumes\ \tau$ and $\rho$ is rejecting, then $\tau$ should be rejecting.   
These implications among boxes justify further implications among formulas.
Formally, formula $F$ \emph{subsumes} $G$, written $F\subsumes G$, if 
\[
    \Bigg(
        \bigwedge_{\rho,\tau\in\Boxes_A \colon \rho\ \subsumes\ \tau}\rho\lleq \tau
    \Bigg)
    \hspace{0.2cm} \models\hspace{0.2cm}
    F\lleq G\ .
\]
Under the condition that the following analogue of Lemma~\ref{lemma:leq_monotonic} holds, it is possible to reason modulo subsumption instead of implication:
\begin{align}
\label{eq:monotonicity}
\forall F,F',M,M'\in\BF_A \colon
\textit{ If }
F\subsumes F' \textit{ and } M \subsumes M'
\textit{, then }
F;M \subsumes F'; M' \
.
\end{align}
Subsumption was used in \cite{Fogarty:Efficient,Abdulla:Simulation,Abdulla:Advanced},
where a fixed point over sets of boxes is computed to check language inclusion among \Buechi automata. 
For the definition of $\subsumes$, the simplest possibility is to take inclusion  among boxes \cite{Fogarty:Efficient}. 
A more advanced option, following \cite{Abdulla:Advanced}, further weakens the set inclusion using simulation on states of $A$.

In Boolean satisfiability, the antichain optimization corresponds to the subsumption rule, and it is known to have a limited impact on the performance of solvers. 
The setting we consider, however, is different.
Our formulas are enriched during the computation by new clauses (that are not derived from others as in SAT). 
The antichain optimization can therefore be expected to yield better results for inclusion games and, in fact, has been successfully implemented for automata models~\cite{Fogarty:Efficient,Abdulla:Simulation,Abdulla:Advanced}.

\section{Related Work}
\label{sec:related}
We already discussed the relation with
Cachat's work~\cite{Cachat2002}.
Walukiewicz \cite{Walukiewicz2001234} studies games given by a pushdown automaton with a parity function on the states. 
Similar to our case, the derived strategies are implementable using a stack. 
The problem \cite{Walukiewicz2001234} is concerned with is different from ours in several respects.  
The game aspect is given by the specification (a $\mu$-calculus formula), not by the system as in our case. 
Moreover, (infinite) parity games are generally harder than safety:
\cite{Walukiewicz2001234} is exponential both in the system and in the specification, while our construction is exponential only in the specification.
Piterman and Vardi \cite{Piterman2004} study a similar variant of the problem and come up with a solution originating in the automata-theoretic approach~\cite{Kupferman2010}.

Walukiewicz reduces solving parity games on the infinite computation tree of a pushdown system to solving parity games on a finite graph.
To do so, instead of the full stack, only the topmost stack symbol is stored.
Whenever a push should be executed, one player guesses the behavior of the game until the corresponding pop, i.e.\ she proposes a set of control states.
The other player can decide to skip the subgame between push and pop by selecting a control state from the set, and the game continues.
Alternatively, she can decide to verify the subgame.
In this case, the new symbol becomes top-of-stack, and the game continues until it is popped.
After the pop, the game ends, and which player wins is dependent on whether the current control state is in the proposed set of states.

This approach can be applied to a context-free game to reduce it to a reachability game on a doubly-exponentially-sized graph.
Before applying a rule to the leftmost non-terminal $X$, we let refuter propose a set of boxes that describes the effect of terminal words derivable from $X$.
Prover can either accept the proposal and select one of the boxes, or she verifies the proposal. In the latter case, the rest of the sentential form can be dropped.
A winning strategy for refuter in the finite game has to guess the effect of each non-terminal, while our method deterministically computes it:
The guessed effects that will not lead to refuter losing the subgame are exactly the sets of boxes occurring as the image of a choice function.

The work \cite{Muscholl2005} considers active context-free games where in each turn, player A picks the position of a non-terminal in the current sentential form and player B picks the rule that is applied to the non-terminal. 
It is shown to be undecidable whether player A can enforce the derivation of a word in a regular language. 
If one limits the moves of player~A to left-to-right strategies (skipped non-terminals cannot be touched again, the regular target language may contain non-terminals), one obtains a game that is closely related to our setting.
In fact, the authors show that allowing player~A to pick the rules for some of the non-terminals does not increase the expressive power.
Therefore, there are polynomial-time reductions of our type of game to their type of game and vice versa.
In~\cite{Muscholl2005}, the focus lies on establishing the lower bounds for the time complexity of various type of active context-free games.
The authors show that deciding the existence of a left-to-right winning strategy is \textsf{2EXPTIME}-complete, like the problem considered in this paper (Section \ref{sec:complexity}).
The upper bound is shown by using an exponential-time reduction to Walukiewicz~\cite{Walukiewicz2001234}, and they also present an optimal algorithm that uses Cachat's algorithm for pushdown systems.
Our algorithm also has optimal time complexity, but contrary to \cite{Muscholl2005}, it is based on procedure summaries rather than on saturation.
The lower bound is shown by encoding an alternating Turing machine with exponential space as a grammar game, and we adapted their proof to show Theorem~\ref{thm:hardness}.
\cite{Muscholl2005} was further elaborated on and generalized in \cite{Bjorklund2013,SchusterS15}.

Methods for solving variants of pushdown games, related mostly to saturation (see \cite{Matthew:saturationalgs} for a survey on saturation-based methods), are implemented in several tools. 
\cite{CSHORE} targets higher-order pushdown systems, 
related to it is \cite{HORSAT}, 
\cite{CERTIFICATES} implements an optimized saturation-based method, 
\cite{PDSOLVER} solves the full case of parity games. 
\cite{PREFACE} implements a type directed algorithm not based on saturation.
None of the tools implements procedure summaries, but some can be used to solve instances of our problem.
We plan to carry out a thorough comparison with these implementations in the future.
%



Antichain heuristics, discussed in Section~\ref{sec:antichains}, 
were developed in the context of finite automata and games \cite{Wulf2006,WulfDR06},
and generalized to B\"uchi automata \cite{Fogarty:Efficient,Abdulla:Simulation,Abdulla:Advanced} with a fixed point over sets of boxes. 
Our lazy evaluation is inspired by \cite{fiedor:lazy}. 
Our framework is compatible with techniques for reachability in well-structured transition systems (WSTS) that proceed backwards~\cite{AbdullaCJT96}. 
We believe that techniques like \cite{Kloos2013,Ganjei2016,Kaiser2012,Abdulla2013,Ganty2006} can be adapted to our setting. 
To instantiate general WSTS reachability algorithms, 
the ordering of configurations would be based on implication among formulas, the target set would be the upward closure of the assignment $\sigma$ where $\sigma_S$ is the conjunction of all rejecting boxes and $\sigma_X = \ifalse$ for every other $X\in N$, and the initial state would be the assignment $\bot$.
Another interesting possibility would be to adapt Newton iteration \cite{EsparzaKieferLuttenberger2010}.

The transition monoid can be traced back at least to B\"uchi~\cite{Buchi1990}, and was prominently used \eg in \cite{Vardi85}.


\raggedbottom

\newpage
\bibliography{cited}

\end{document}